\documentclass[sigconf]{acmart}
\usepackage{balance}
%%
%% \BibTeX command to typeset BibTeX logo in the docs
\AtBeginDocument{%
	}

%% Rights management information.  This information is sent to you
%% when you complete the rights form.  These commands have SAMPLE
%% values in them; it is your responsibility as an author to replace
%% the commands and values with those provided to you when you

\copyrightyear{2023} 
\acmYear{2023} 
\setcopyright{acmlicensed}\acmConference[PODS '23]{Proceedings of the 42nd ACM SIGMOD-SIGACT-SIGAI Symposium on Principles of Database Systems}{June 18--23, 2023}{Seattle, WA, USA}
\acmBooktitle{Proceedings of the 42nd ACM SIGMOD-SIGACT-SIGAI Symposium on Principles of Database Systems (PODS '23), June 18--23, 2023, Seattle, WA, USA}
\acmPrice{15.00}
\acmDOI{10.1145/3584372.3588658}
\acmISBN{979-8-4007-0127-6/23/06}

%\usepackage{anysize}
%\usepackage{geometry}
%\geometry{top=1in, left=1in, right=1in, bottom=1in,footskip=.25in}
%\marginsize{1in}{0.8in}{1in}{1in}%tblr
%\usepackage[showframe,bottom=0.2in,footskip=.25in]{geometry}
%\usepackage[left=1in, right=1in, top=1in]{geometry}
%\newcommand{\cmt}[2]{\textcolor{dkmag}{[#1: #2]}}
%\newcommand{\personname}[1]{\cmt{Personname}{#1}}
%\newcommand{\standout}[1]{\textit{\textcolor{dkmag}{#1}}}

\usepackage{enumerate}
\usepackage{amsfonts}
\usepackage{booktabs}
\usepackage{siunitx}
\usepackage{multirow}
\pagenumbering{arabic}
\usepackage{listings}
\usepackage{commath}
\usepackage{epstopdf}
\usepackage{url}
\usepackage{mathtools}
\usepackage{alltt}
\usepackage{mathrsfs}
\usepackage{bold-extra}

\usepackage[normalem]{ulem}
\usepackage{etoolbox}
\usepackage{stmaryrd}
\usepackage[noend]{algorithmic}

\usepackage{mathtools}
\usepackage{etoolbox}
\usepackage{dsfont}
\usepackage{color}
\usepackage{colortbl}
\usepackage{multirow}
\usepackage[scaled]{helvet}
\usepackage{enumitem}

\usepackage[colorinlistoftodos]{todonotes}
\usepackage[noend]{algorithmic}
\usepackage{booktabs}
\usepackage{capt-of}
\usepackage{xcolor}
\usepackage{multirow}
\usepackage{wrapfig}

%\usepackage{subfigure}

%\usepackage[caption=false]{subfig}
%\usepackage[font=bf,skip=\baselineskip]{caption}
%\captionsetup[lstlisting]{font={small,tt}}
\usepackage{caption}
\usepackage[FIGTOPCAP]{subfigure}
\usepackage{float}
\usepackage{tcolorbox}
\tcbuselibrary{listings,breakable}

%\captionsetup[subfigure]{labelformat=brace,caption=false,font=footnotesize}
\captionsetup[figure]{name=Fig.,labelfont={sf,bf},textfont=sl,labelsep=quad,parindent=12pt}

%dashed line
\usepackage{array}
\usepackage{arydshln}
\setlength\dashlinedash{0.2pt}
\setlength\dashlinegap{1.5pt}
\setlength\arrayrulewidth{0.3pt}
%usepackage[T1]{fontenc}
%\usepackage{fontspec}
%\setmainfont[Scale=0.85, Ligatures={Required,Common,Contextual,TeX}]{TeX Gyre Schola} % Incredible font inside latex

%
%\usepackage{fancyhdr}
%\usepackage{fullpage} %, hyperref}
%\usepackage{amssymb, amsmath, enumitem, titling, hyperref}
%\usepackage[standard]{ntheorem}
%%\usepackage[sort&compress]{natbib}
%\usepackage[top=0.65in, bottom=1.2in, left=0.95in, right=0.95in]{geometry}
% %\ifthenelse{\value{page}=1}
%          %{\setlength\headheight{40pt}}
%          %{\setlength\headheight{0pt}}
%					
%\usepackage[backend=bibtex,sorting=anyt, maxnames=7, firstinits=true]{biblatex} %hyperref=true,
%\renewcommand*{\bibfont}{\footnotesize}
%\bibliography{bib_rs}
%%\renewcommand{\baselinestretch}{0.9}

%Spencer
\usepackage{todonotes}

%\usepackage{amsthm}
% 
%\theoremstyle{definition}
%\newtheorem{definition}{Definition}[section]
% end Spencer

\definecolor{mygreen}{rgb}{0,0.6,0}
\definecolor{mygray}{rgb}{0.5,0.5,0.5}
\definecolor{mymauve}{rgb}{0.58,0,0.82}

\lstset{
	breaklines=true,                                     % line wrapping on
	language=SQL,
	frame=ltrb,
	framesep=5pt,
	basicstyle=\normalsize,
	keywordstyle=\color{blue},
	identifierstyle=\ttfamily\color{mygreen}\bfseries,
	commentstyle=\color{Brown},
	stringstyle=\ttfamily,
	emph={count,sum,avg,/},
	emphstyle={\color{red}},
	showstringspaces=ture,
	classoffset=1, % starting new class
	otherkeywords={WITH, VIWE},
	keywordstyle=\color{weborange},
	classoffset=0,
}

\usepackage{listings}
\lstnewenvironment{VerbatimText}[1][]{
    
    \lstset{fancyvrb=true,basicstyle=\footnotesize,captionpos=b,xleftmargin=2em,#1}
}{}

\newcommand{\E}{{\tt \mathbb{E}}}

\newcommand{\ignore}[1]{}

\newcommand{\batya}[1]{{\texttt{\color{blue} Batya: [{#1}]}}}

 %{\mathtt{t}}
 %{\mathtt{c}}

%infotheo

\newcommand{\bx}{\mathbf{X}}

\newcommand\mvd{\twoheadrightarrow}

\newcommand{\rel}{{\mathcal{R}}}
\newcommand{\real}{{\mathbb{R}}}

\newcommand{\sql}[1]{\texttt{#1}}

\newcommand{\select}{\texttt{SELECT}}

\newcommand{\groupby}{\texttt{Group By}}

\newcommand{\from}{\texttt{FROM}}

\newcommand*{\rom}[1]{\expandafter\@slowromancap\romannumeral #1@}
\newcommand{\RNum}[1]{\uppercase\expandafter{\romannumeral #1\relax}}
 %\xspace}
 %\xspace}
 %\xspace}

\newtheorem{defn}{Definition}[section]

\newtheorem{lem}[defn]{Lemma}

\newtheorem{prop}[defn]{Proposition}

\newcommand{\proj}[1]{{\Pi}_{#1}}
\newcommand{\sel}[1]{{\sigma}}

\newcommand{\cut}[1]{}
\newcommand{\eat}[1]{}

\newcommand{\defeq}{\stackrel{\text{def}}{=}}
\newcommand{\setof}[2]{\{{#1}\mid{#2}\}}        % Set (as in \setof{x}{x>0}).

\def\set#1{\mathord{\{#1\}}}

\def\MVD{\sigma}

%I-measure notation

%\def\neg#1{\mathbf{N}\left(#1\right)}

\def\eqdef{\mathrel{\stackrel{\textsf{\tiny def}}{=}}}

\def\J{\mathcal{J}}
\def\V{\mathcal{V}}

\def\D{\mathcal{D}}
\def\e#1{\emph{#1}}
\newenvironment{citedtheorem}[1]
{\begin{theorem}{\it\e{(#1)}}\,\,}
	{\end{theorem}}

%\def\batya#1{\textcolor{red}{#1}}

%Linear Polymatroids
 % \mathbf{Pow}

\newenvironment{repeatresult}[2]
{\vskip0.5em\par\textsc{#1} #2.\em}
{\vskip1em}

\newenvironment{repproposition}[1]{\begin{repeatresult}{Proposition}{#1}}{\end{repeatresult}}
\newenvironment{reptheorem}[1]{\begin{repeatresult}{Theorem}{#1}}{\end{repeatresult}}
\newenvironment{replemma}[1]{\begin{repeatresult}{Lemma}{#1}}{\end{repeatresult}}

\def\appendix{\par
	\section*{APPENDIX}
	\setcounter{section}{0}
	\setcounter{subsection}{0}
	\def\thesection{\Alph{section}} }

\def\join{\bowtie}

%I-measure notation

%\def\neg#1{\mathbf{N}\left(#1\right)}

\def\eqdef{\mathrel{\stackrel{\textsf{\tiny def}}{=}}}

\def\J{\mathcal{J}}

\def\e#1{\emph{#1}}

%stuff for vector spaces

\newcommand{\algname}[1]{{\sf #1}}
\def\myrulewidth{3.25in}
\def\therule{\rule{\myrulewidth}{0.2pt}}

\def\myrulewidthwide{4in}
\def\therulewide{\rule{\myrulewidthwide}{0.2pt}}

\newenvironment{insidecode}[3]
{
	%\small
	\begin{tabular}{p{\myrulewidth}}
		%\toprule
		\multicolumn{1}{c}{\rule{0mm}{3mm}{\bf #3} $\algname{#1}(\mbox{#2})$\vspace{-0.6em}}\\
		\therule\vskip-0.8em\therule
		\vspace{-1em}
		\begin{algorithmic}[1]}
		{\end{algorithmic}
		\vskip-0.4em\therule
\end{tabular}}

\newenvironment{insidecodewide}[3]
{
	%\small
	\begin{tabular}{p{\myrulewidthwide}}
		%\toprule
		\multicolumn{1}{c}{\rule{0mm}{3mm}{\bf #3} $\algname{#1}(\mbox{#2})$\vspace{-0.6em}}\\
		\therulewide\vskip-0.8em\therulewide
		\vspace{-1em}
		\begin{algorithmic}[1]}
		{\end{algorithmic}
		\vskip-0.3em\therulewide
\end{tabular}}

\newcommand{\T}{{\mathcal{T}}}

\newcommand{\JD}{\textsc{JD}}
\newcommand{\AJD}{\textsc{AJD}}

\newcommand{\schema}{\mathbf{S}}

\newcommand{\relation}{R}

\newcommand{\jointreeMapFunction}{\chi}

\def\P{\mathsf{P}}

\newcommand{\key}{\mathrm{key}}
\newcommand{\components}{\mathrm{dep}}
\newcommand{\edges}{\texttt{edges}}
\newcommand{\nodes}{\texttt{nodes}}
\newcommand{\parent}{\texttt{parent}}
%Experiments

%\newcommand{\keys}{\mathbf{K}}

%Entropy computation

\def\MVD{\mathrm{MVD}}

\def\bX{\textbf{X}}
\def\bx{\textbf{x}}
\def\bY{\textbf{Y}}
\def\by{\textbf{y}}

\usepackage{amsthm}
\newtheorem{theorem}{Theorem}[section]
\newtheorem{corollary}{Corollary}[theorem]
\newtheorem{lemma}[theorem]{Lemma}
\newtheorem{proposition}[theorem]{Proposition}

\theoremstyle{definition}
\newtheorem{definition}{Definition}[section]
\newtheorem{example}[defn]{Example}

\newcounter{texexp}

\captionsetup[figure]{labelfont=bf,textfont=normalfont}
% for subfigures: caption label is bold, the caption text normal.
% justification is raggedright (i.e. left aligned)
% singlelinecheck=off means that the justification setting is used even when the caption is only a single line long. 
% if singlelinecheck=on, then caption is always centered when the caption is only one line.
\captionsetup[subfigure]{labelfont=bf,textfont=normalfont,singlelinecheck=off,justification=raggedright,font=scriptsize}

\captionsetup[figure]{font=scriptsize,parskip=10pt}

\tcbset{
	texexp/.style={colframe=blue!30, colback=white,
		coltitle=blue!50!black,
		fonttitle=\small\sffamily\bfseries, fontupper=\small, fontlower=\small},
	example/.code 2 args={\refstepcounter{texexp}\label{#2}%
		\pgfkeysalso{texexp,title={Example \thetexexp: #1}}},
}

\def\expectation{\mathbb{E}}

\DeclareMathOperator{\Ent}{\mathrm{Ent}}
\global\long\def\trre[#1,#2]{\overset{{\scriptstyle (#2)}}{#1}}
\global\long\def\d{\mathrm{d}}

\def\Rel{\mathrm{Rel}}

\def\indicator{\boldsymbol{\mathds{I}}}

\DeclareMathOperator*{\argmin}{arg\,min}

% A macro that determines if a sentence that is used to refernce a location in the Appendix appears or not. In the full version this sentence should appear (true) but in the PODS version it should not appear.
\newif\ifFlag
%\Flagtrue % Set the flag to true or false as needed
\Flagtrue
\newcommand{\AppRef}[2]{%
  \ifFlag
    #1 % Use the first argument if the flag is true
  \else
    #2 % Use the second argument if the flag is false
  \fi
}

\DeclareRobustCommand*\cal{\mathcal}
\begin{document}
	
	%%
	%% The "title" command has an optional parameter,
	%% allowing the author to define a "short title" to be used in page headers.
	\title{Quantifying the Loss of Acyclic Join Dependencies}

	\author{Batya Kenig}
	\affiliation{
		\institution{Technion, Israel Institute of Technology}
		\city{Haifa}	
		\country{Israel}	
	}
	\email{batyak@technion.ac.il}

	\author{Nir Weinberger}
	\affiliation{%
		\institution{Technion, Israel Institute of Technology}
		\city{Haifa}	
		\country{Israel}		
	}
	\email{nirwein@technion.ac.il}

	\begin{abstract}
		Acyclic schemes posses known benefits for database design, speeding up queries, and reducing space requirements. An acyclic join dependency (AJD) is lossless with respect to a universal relation if joining the projections associated with the schema results in the original universal relation. An intuitive and standard measure of loss entailed by an AJD is the number of redundant tuples generated by the acyclic join. Recent work has shown that the loss of an AJD can also be characterized by an information-theoretic measure.
		Motivated by the problem of automatically fitting an acyclic schema to a universal relation, we investigate the connection between these two characterizations of loss. We first show that the loss of an AJD is captured using the notion of KL-Divergence. We then show that the KL-divergence can be used to bound the number of redundant tuples. We prove a deterministic lower bound on the percentage of redundant tuples. For an upper bound, we propose a random database model, and establish a high probability bound on the percentage of redundant tuples, which coincides with the lower bound for large databases.
	\end{abstract}
		\maketitle
		
	\section{Introduction}
\eat{
\begin{figure}
	\centering
	\includegraphics[width=0.4\textwidth]{EchocardiogramBoxPlot.pdf}
	\caption{Experiment over the Echo-cardiogram dataset with 13 attributes and 132 tuples. Every boxplot represents a set of acyclic schemas whose $\J$-measure falls within the appropriate range. The width of the boxplot represents the number of acyclic schemas within the range. It is easy to see that the loss in terms of spurious tuples varies widely even between acyclic schemas with similar $\J$-measures.}
	\label{fig:EchocardiogramBoxPlot}
\end{figure}
}
In the traditional approach to database design, the designer has a clear conceptual model of the data, and of the dependencies between the attributes. This information guides the \e{database normalization} process, which leads to a database schema consisting of multiple relation schemas that have the benefit of reduced redundancies, and more efficient querying and updating of the data. 
This approach requires that the data precisely meet the constraints of the model; various data repair techniques~\cite{DBLP:conf/icdt/AfratiK09,DBLP:series/synthesis/2011Bertossi} have been developed to address the case that the data does not meet the constraints of the schema exactly.
Current data management applications are required to handle data that is noisy, erroneous, and inconsistent. The presumption that such data meet a predefined set of constraints is not likely to hold. In many cases, such applications and are willing to tolerate the ``loss'' entailed by an imperfect database schema, and will be content with a database schema that only ``approximately fits'' the data.
Motivated by the task of
schema-discovery for a given dataset, in this work, we investigate different ways of measuring the ``loss'' of an imperfect database schema, and the relationship between these different measures.

Decomposing a relation schema $\Omega$ is the process of breaking the relation scheme into two or more relation schemes $\Omega_1,\dots,\Omega_k$ whose union is $\Omega$. The decomposition is \e{lossless} if, for any relation instance $R$ over $\Omega$, it holds that $R=\proj{\Omega_1}(R)\join \cdots \join \proj{\Omega_k}(R)$. The \e{loss} of a database scheme $\set{\Omega_1,\dots,\Omega_k}$ with respect to a relation instance $R$ over $\Omega$, is the set of tuples $\left(\join_{i=1}^k\proj{\Omega_i}(R)\right)\setminus R$ that are in the join, but not in $R$ (formal definitions in Section~\ref{sec:notations}). We say that such tuples are \e{spurious}.
Normalizing a relation scheme is the process of (losslessly) decomposing it into a database scheme where each of its resulting relational schemes have certain properties. The specific properties imposed on the resulting relational schemes define different \e{normal forms} such as 3NF~\cite{Codd1971FurtherNO}, BCNF~\cite{DBLP:conf/pacific/Codd75},  4NF~\cite{Fagin:1977:MDN:320557.320571}, and 5NF~\cite{10.1145/582095.582120,10.1145/2274576.2274589}.

A data dependency defines a relationship between sets of attributes in a database.
A \e{Join Dependency (JD)} defines a $k$-way decomposition  (where $k\geq 2$) of a relation schema $\Omega$, and is said to hold in a relation instance $R$ over $\Omega$ if the join is lossless with respect to $R$ (formal definitions in Section~\ref{sec:notations}).  Join Dependencies generalize \e{Multivalued Dependencies (MVDs)} that are effectively Join Dependencies where $k=2$, which in turn generalize \e{Functional Dependencies (FDs)}, which are perhaps the most widely studied data dependencies due to their simple and intuitive semantics.

\e{Acyclic Join Dependencies (AJDs)} is a type of JD that is specified by an \e{Acyclic Schema}~\cite{DBLP:conf/stoc/BeeriFMMUY81}. Acyclic schemes have many applications in databases and in machine learning; they enable efficient query evaluation~\cite{Yannakakis:1981:AAD:1286831.1286840}, play a key role in database normalization and design~\cite{Fagin:1977:MDN:320557.320571,DBLP:journals/is/LeveneL03}, and improve the performance of many well-known machine learning algorithms over relational data\eat{ such as ridge linear regression, classification
trees, and regression trees}~\cite{DBLP:conf/sigmod/Khamis0NOS18,DBLP:conf/sigmod/SchleichOC16,DBLP:conf/sigmod/SchleichOK0N19}.

\eat{
Functional dependencies are perhaps the most widely studied data dependencies due to their simple and intuitive semantics, as well as their applications in database design, database repairing, and data cleaning [citations].
Multivalued Dependencies (MVDs) generalize functional dependencies, and express the way by which a relation instance can be decomposed into two smaller relations such that their join is exactly the original relation. Such a decomposition is referred to as lossless. 
MVDs play a key role in database normalization and design. They characterize the fourth normal form, and were extensively studied in the database literature. 
Acyclic Join Dependencies (AJD) generalize MVDs by specifying two or more projections of the original relation, whose join restores the original relation. An AJD is specified by an Acyclic Schema that meets certain syntactical restrictions that lead efficient query evaluation [cite Yannakakis], and many other desirable properties that are exploited in database design, and for speeding up machine learning applications [citations].
}
Consider how we may measure the loss of an acyclic schema $\schema=\set{\Omega_1,\dots,\Omega_k}$ with respect to a given relation instance $R$ over $\Omega$. An intuitive approach, based on the definition of a lossless join, is to simply count the number of spurious tuples generated by the join (i.e., and are not included in the relation instance $R$).
In previous work, Kenig et al.~\cite{DBLP:conf/sigmod/KenigMPSS20} presented an algorithm that discovers ``approximate Acyclic Schemes'' for a given dataset. Building on earlier work by Lee~\cite{DBLP:journals/tse/Lee87,DBLP:journals/tse/Lee87a}, the authors proposed to measure the loss of an acyclic schema, with respect to a given dataset, using an information-theoretic
metric called the \e{$\J$-measure}, formally defined in Section~\ref{sec:notations}. Lee has shown that this information-theoretic measure characterizes lossless AJDs. That is, the $\J$-measure of an acyclic schema with respect to a dataset is $0$ if and only if the AJD defined by this schema is lossless with respect to the dataset~\cite{DBLP:journals/tse/Lee87,DBLP:journals/tse/Lee87a} (i.e., no spurious tuples).
Beyond this characterization, not much is known about the relationship between these two measures of loss.\eat{ measures: number of spurious tuples, and the $\J$-measure.} In fact, \eat{as exemplified in Figure~\ref{fig:EchocardiogramBoxPlot},} the relationship is not necessarily monotonic, and may vary widely even between two acyclic schemas with similar $\J$-measure values~\cite{DBLP:conf/sigmod/KenigMPSS20}. Nevertheless, empirical studies have shown that low values of the $\J$-measure generally lead to acyclic schemas that incur a small number of spurious tuples~\cite{DBLP:conf/sigmod/KenigMPSS20}.
\eat{
But how do we measure the accuracy of this fit ? 
The standard and intuitive approach, based on the definition of a lossless join, is to restrict the number of tuples generated by the join that are not included in the original dataset. We refer to such tuples as \e{spurious tuples}.
}

Our first result is a characterization of the $\J$-measure of an acyclic schema $\schema=\set{\Omega_1,\dots,\Omega_k}$, with respect to a relation instance $R$, as the \e{KL-divergence} between two \e{empirical distributions}; the one associated with $R$, and the one associated with $R'\eqdef \proj{\Omega_1}(R)\join\cdots \join \proj{\Omega_k}(R)$. The \e{empirical distribution} is a standard notion used to associate a multivariate probability distribution with a multiset of tuples, and a formal definition is deferred to Section~\ref{sec:notations}. The KL-divergence is a non-symmetric measure of the similarity  between two probability distributions $P(\Omega)$ and $Q(\Omega)$. It has numerous information-theoretic applications, and can be loosely thought of as a measure of the information lost when $Q(\Omega)$ is used to approximate $P(\Omega)$. Our result that the $\J$-measure is, in fact, the KL-divergence between the empirical distributions associated with the original relation instance and the one implied by the acyclic schema, explains the success of the $\J$-measure for identifying ``approximate acyclic schemas'' in~\cite{DBLP:conf/sigmod/KenigMPSS20}, and how the $\J$-measure characterizes lossless AJDs~\cite{DBLP:journals/tse/Lee87,DBLP:journals/tse/Lee87a}.

With this result at hand, we address the following problem. Given a relation schema $\Omega$, an acyclic schema $\schema=\set{\Omega_1,\dots,\Omega_k}$, and a value $\J \geq 0$, compute the minimum and maximum number of spurious tuples generated by $\schema$ with respect to \e{any} relation instance $R$ over $\Omega$, whose KL-divergence from $R'\eqdef \proj{\Omega_1}(R)\join\cdots \join \proj{\Omega_k}(R)$ is $\J$. To this end, we prove a deterministic lower bound on the number of spurious tuples that depends only on $\J$. We then consider the problem of determining an upper bound on the number of spurious tuples. As it turns out, this problem is more challenging, as a deterministic upper bound does not hold. We thus  propose a \e{random relation model}, in which a relation is drawn uniformly at random from all possible empirical distributions of a given size $N$. We then show that a bound analogous to the deterministic lower bound on the relative number of spurious tuples also holds as an upper bound with two differences: First, it holds with high probability over the random choice of relation (and not with probability $1$), and second, is holds with an additive term, though one which vanishes for asymptotically large relation instances. The proof of this result is fairly complicated, and as discussed in Section \ref{sec: relation between spurious and MI}, requires applications of multiple techniques from information theory \cite{cover2012elements} and concentration of measure \cite{boucheron2013concentration}.

Beyond its theoretical interest, understanding the relationship between the information-theoretic KL-divergence, and the tangible property of loss, as measured by the number of spurious tuples, has practical consequences for the task of discovering acyclic schemas that fit a dataset.  
Currently, the system of~\cite{DBLP:conf/sigmod/KenigMPSS20} can discover acyclic schemas that fit the data well in terms of its $\J$-measure. Understanding how the $\J$-measure relates to the loss in terms of spurious tuples will enable finding acyclic schemas that generate a bounded number of spurious tuples.
This is important for applications that apply factorization as a means of compression, while wishing to maintain the integrity of the data~\cite{DBLP:conf/icdt/OlteanuZ12}. 

To summarize, in this paper we: (1) Show that the $\J$-measure of Lee~\cite{DBLP:journals/tse/Lee87,DBLP:journals/tse/Lee87a} is the KL-divergence between the empirical distribution associated with the original relation and the one induced by the acyclic schema, (2) Prove a general lower bound on the loss (i.e., spurious tuples) in terms of the KL-divergence, and present a simple family of relation instances for which this bound is tight,
and (3) Propose a random relation model and prove an upper bound on the loss, which holds with high probability, and which converges to the lower bound for large relational instances. 
\eat{
However, besides extreme cases, this tells us very little about the quality of these schemas in terms of their loss. In this paper, we show how the $\J$-measure can be used to bound the loss entailed by an AJD. 
That is, given a relation instance $R$ over schema $\Omega$, an acyclic schema $\schema$, and a $\J$-measure, we prove lower and upper bounds on the loss entailed by $\schema$ with respect to $R$. Our bounds are general in the sense that we only assume the given $\J$-measure of $\schema$ with respect to $R$.
}

\section{Background}
\label{sec:notations}
\eat{Table~\ref{table:notations} summarizes the notations in this paper.}
For the sake of consistency, we adopt some of the notation from~\cite{DBLP:conf/sigmod/KenigMPSS20}.
We denote by $[n] = \set{1,\ldots,n}$.  Let $\Omega$ be a set of
variables, also called attributes.  If $X,Y \subseteq \Omega$, then
$XY$ denotes $X \cup Y$.
\eat{
\small{
\begin{table}	
  \centering
  \begin{tabular}{|l|l|} \hline
   $\Omega$ & set of variables (attributes) \\ \hline
   $n=|\Omega|$ & number of variables (attributes) \\ \hline
   $X,Y,A,B,\ldots$ & sets of variables $\subseteq \Omega$ \\ \hline
   $\schema$ & a schema $=\set{\Omega_1, \ldots, \Omega_m}$ \\ \hline
   $X \mvd Y|Z$ & a standard MVD \\ \hline
   $X \mvd Y_1|Y_2|\cdots |Y_m$ & an MVD~\cite{DBLP:conf/sigmod/BeeriFH77} \\ \hline
   $(\T,\chi)$ & a join tree \\ \hline
   $H(X)$ & entropy of a set of variables $X$ \\ \hline
   $H(Y|X), I(Y;Z|X)$ & entropic measures \\ \hline
   $\J(\T,\chi)$ & the entropic measure in Eq.(\ref{eq:JTScore}) \\ \hline
   $\J(\schema)$ & $\J$ of any join tree for $\schema$\\ \hline
   $\J(X \mvd Y_1|\cdots|Y_m)$ & $\J$ of the schema $\set{XY_1,\ldots,XY_m}$\\ \hline
   $\J(X \mvd Y|Z)$ & $=I(Y;Z|X)$ \\ \hline
   $R$ & a relation \\ \hline
   $N=|R|$ & number of tuples \\ \hline
   $R \models \AJD(\schema)$ & $R$ satisfies an acyclic \\
    &  join dependency \\ \hline
   $R \models_\varepsilon \AJD(\schema)$ & $R$ $\varepsilon$-satisfies an acyclic \\
    &  join dependency \\ \hline    
  \end{tabular}  
  \caption{Notations \label{table:notations}}  
\end{table}
}%
}

\normalsize{
\subsection{Data Dependencies}
\label{subsec:ci:ic}
Let $\Omega\eqdef \set{X_1,\dots,X_n}$ denote a set of attributes with domains $\D(X_1),\dots,\D(X_n)$. We denote by $\Rel(\Omega)\eqdef\set{R: R \subseteq \bigtimes_{i=1}^n\D(X_i)}$ the set of all possible relation instances over $\Omega$.
Fix a relation instance $\relation\in \Rel(\Omega)$ of size $N=|\relation|$. For $Y \subseteq \Omega$ we let $\relation[Y]$
denote the projection of $\relation$ onto the attributes $Y$.
A \e{schema} is a set $\schema=\set{\Omega_1,\dots, \Omega_m}$ such
that $\bigcup_{i=1}^m\Omega_i=\Omega$ and  $\Omega_i\not\subseteq
\Omega_j$ for $i\neq j$.

We say that the relation instance $\relation$ satisfies the \e{join dependency} $\JD(\schema)$, and write $\relation \models \JD(\schema)$, if
$\relation = \Join_{i=1}^m\relation[\Omega_i]$. If $R\not\models \JD(\schema)$, then $\schema$ incurs a \e{loss} with respect to $R$, denoted $\rho(R,\schema)$, defined:
\begin{equation}
	\label{def:rho}
	\rho(R,\schema)\eqdef \frac{\left|\Join_{i=1}^m\relation[\Omega_i]\right|-\left|\relation\right|}{\left|\relation\right|}
\end{equation}
We call the set of tuples $\left(\Join_{i=1}^m\relation[\Omega_i]\right)\setminus R$ \e{spurious tuples}.
Clearly, $\relation \models \JD(\schema)$ if and only if $\rho(R,\schema)=0$.

\eat{Let $X,Y_1,\dots,Y_m \subseteq \Omega$.} We say that
$\relation$ satisfies the \e{multivalued dependency} (MVD)
$\phi = X \mvd Y_1|Y_2|\dots|Y_m$ where $m\geq 2$, the $Y_i$s are
pairwise disjoint, and $XY_1\cdots Y_m = \Omega$, if
$\relation=\relation[XY_1]\Join\cdots \Join \relation[XY_m]$, or if the schema $\schema=\set{XY_1,\dots,XY_m}$ is lossless (i.e., $\rho(R,\schema)=0$). 
\eat{
We call
$X$ the \e{key} of the MVD and $\set{Y_1,\dots,Y_m}$ it's
\e{dependents}, denoted $\key(\phi)=X$ and
$\components(\phi)=\set{Y_1,\dots,Y_m}$. Most of the literature
considers only MVDs with $m=2$, which we call here {\em standard MVDs}. Beeri et al.~\cite{DBLP:conf/sigmod/BeeriFH77} noted that
an MVD can concisely encode multiple standard MVDs; for example
$X \mvd A|B|C$ holds iff $X \mvd AB|C$, $X \mvd A|BC$ and
$X \mvd AC|B$ hold.
}
\eat{
We say that
$\relation$ satisfies the \e{multivalued dependency} (MVD)
$\phi = X \mvd Y_1|Y_2|\dots|Y_m$ where $m\geq 2$, the $Y_i$s are
pairwise disjoint, and $XY_1\cdots Y_m = \Omega$, if
$\relation=\relation[XY_1]\Join\cdots \Join \relation[XY_m]$.  We call
$X$ the \e{key} of the MVD and $\set{Y_1,\dots,Y_m}$ it's
\e{dependents}, denoted $\key(\phi)=X$ and
$\components(\phi)=\set{Y_1,\dots,Y_m}$.} \eat{ Most of the literature
considers only MVDs with $m=2$, which we call here {\em standard
  MVDs}.  Beeri et al.~\cite{DBLP:conf/sigmod/BeeriFH77} noted that a
generalized MVD can concisely encode multiple MVDs; for example
$X \mvd A|B|C$ holds iff $X \mvd AB|C$, $X \mvd A|BC$ and
$X \mvd AC|B$ hold.
}
% \subsubsection{Acyclic Join Dependencies and Join Trees}\label{sec:prelimsAcyclicJoinDeps}
We review the concept of a \e{join tree} from~\cite{Beeri:1983:DAD:2402.322389}:
\begin{definition}\label{def:joinTree}
  A \e{join tree} or \e{junction tree} is a pair $\left(\T,\jointreeMapFunction\right)$
  where $\T$ is an undirected tree,
  \eat{\footnote{We will assume  w.l.o.g. that $\T$ is connected.  For example, the disconnected schema $\schema = \set{AB,CDE}$ can be represented by adding artificially the edge $AB-CDE$, thus connecting the two disconnected components.}}
  and $\jointreeMapFunction$ is a
  function that maps\eat{associates to} each $u \in \nodes(\T)$ to a
  set of variables $\jointreeMapFunction(u)$, called a \e{bag}, such
  that the \e{running intersection} property holds: for
  every variable $X$, the set
  $\setof{u \in \nodes(\T)}{X \in \chi(u)}$ is a connected component
  of $\T$.  We denote by $\chi(\T) \defeq \bigcup_u \chi(u)$, the set
  of variables of the join tree.
\end{definition}
We often denote the join tree as $\T$, dropping $\jointreeMapFunction$
when it is clear from the context. The \e{schema} defined by $\T$ is
$\schema=\set{\Omega_1,\dots,\Omega_m}$, where
$\Omega_1, \ldots, \Omega_m$ are the bags of $\T$.  We call a schema
$\schema$ \e{acyclic} if there exists a join tree whose schema is
$\schema$.  When $\Omega_i\not\subseteq \Omega_j$ for
all $i\neq j$, then the acyclic schema $\schema=\set{\Omega_1,\dots,\Omega_m}$ satisfies $m \leq |\Omega|$~\cite{Beeri:1983:DAD:2402.322389}.  We say that a relation
$\relation$ satisfies the \e{acyclic join dependency} $\schema$, and
denote $\relation \models \AJD(\schema)$, if $\schema$ is acyclic and
$\relation \models \JD(\schema)$.  An MVD $X \mvd Y_1 | \cdots | Y_m$
represents a simple acyclic schema, namely
$\schema = \set{XY_1, XY_2, \ldots, XY_m}$.
% Any tree with $m$
% nodes $\nodes(\T)=\set{u_1,\ldots, u_m}$ whose bags are
% $\chi(u_i) = XY_i$ is a join tree for this schema.

Let $\schema=\set{\Omega_1,\dots,\Omega_m}$ be an acyclic schema with
join tree $(\T,\jointreeMapFunction)$.  We associate to every
$(u,v) \in \edges(\T)$ an MVD $\phi_{u,v}$ as follows.  Let $\T_u$ and
$\T_v$ be the two subtrees obtained by removing the edge $(u,v)$.
Then, we denote by
$\phi_{u,v} \defeq \jointreeMapFunction(u) \cap
\jointreeMapFunction(v) \mvd \jointreeMapFunction(\T_u) |
\jointreeMapFunction(\T_v)$.
We call the \e{support of $\T$} the set of $m-1$ MVDs associated with
its edges, in notation
$\MVD(\T) = \setof{\phi_{u,v}}{(u,v) \in \edges(\T)}$.  Beeri et al. have shown that if $\T$
defines the acyclic schema $\schema$, then $R$ satisfies $\AJD(\schema)$ (i.e.,
$\relation \models \AJD(\schema)$) if and only if $R$ satisfies all MVDs in its
support: $\relation \models \phi_{u,v}$ for all
$\phi_{u,v} \in MVD(\T)$~\cite[Thm.  8.8]{Beeri:1983:DAD:2402.322389}. 

\eat{
\begin{example}
  \label{ex:j0} We will illustrate with the running example from
  Sec.~\ref{sec:illustrativeExample}.  The tree in Fig.~\ref{fig:TD}
  is a join tree. Its bags are the ovals labeled $AF$, $ACD$, $ABD$,
  and $BDE$, and it is custom to show the intersection of two bags on
  the connecting edge.
  $\MVD(\T) = \set{BD \mvd E| ACF, AD \mvd CF| BE, A \mvd F|BCDE}$.
\end{example}
}

% then  $\relation \models (\Omega_i{\cap} \Omega_j)\mvd \jointreeMapFunction(\T_i)|\jointreeMapFunction(\T_j)$~(Thm. 8.8 ~\cite{Beeri:1983:DAD:2402.322389}), and we say that this MVD is implied by the join tree $(\T,\jointreeMapFunction)$. We call the MVDs implied by a join tree $(\T,\jointreeMapFunction)$ the \e{MVD support} of $(\T,\jointreeMapFunction)$, and denote this set of MVDs by $\MVD(\T)$. Since every MVD in $\MVD(\T)$ is associated with an edge of $\T$, then  $|\MVD(\T)|\leq |\nodes(\T)|-1$. Every join tree over $n$ attributes can have at most $n$ nodes~\cite{Beeri:1983:DAD:2402.322389}, hence $|\MVD(\T)|\leq |\nodes(\T)|-1\leq n-1$.

}

\subsection{Information Theory}
\label{sec:it}
Lee~\cite{DBLP:journals/tse/Lee87,DBLP:journals/tse/Lee87a} gave an
equivalent formulation of functional, multivalued, and acyclic join dependencies in terms of information
measures; we review this briefly here, after a short background on
information theory.
% showed how entropic functions can be used to quantitatively represent
% the amount of information contained in subsets of attributes, and thus
% characterize the data dependencies that hold in the relation.  We
% start with a brief background on information theory, and then present
% the characterization of data dependencies using information
% theory~\cite{DBLP:journals/tse/Lee87,DBLP:journals/tse/Lee87a}.

% \subsubsection{Background on information theory}
% \label{subsec:information:theory}

Let $X$ be a random variable with a finite domain $\D$ and probability
mass $P$ (thus, $\sum_{x \in \D} P(x)=1$). Its entropy is:
\begin{equation}\label{eq:entropy}
H(X)\eqdef\sum_{x\in \D}p(x)\log\frac{1}{P(x)}.
\end{equation}
It holds that $H(X) \leq \log  |\D|$, and equality holds if and only if $P$ is
uniform. The definition of entropy naturally generalizes to sets of jointly distributed random variables
$\Omega=\set{X_1,\dots,X_n}$, by defining the function
$H : 2^\Omega \rightarrow \real$ as the entropy of the joint random
variables in the set. For example,
\begin{equation}
H(X_1X_2)=\sum_{x_1 \in \D_1,x_2\in \D_2}
P(x_1,x_2)\log\frac{1}{P(x_1,x_2)}.
\end{equation}
Let $A,B,C \subseteq
\Omega$. The \e{conditional mutual information} $I(A;B\mid C)$ is defined as:
\begin{equation} \label{eq:h:mutual:information}
I(A;B\mid C) \eqdef~H(BC) + H(AC) - H(ABC) - H(C).
\end{equation}
It is known that the conditional independence
$P \models A \perp B \mid C$ (i.e., $A$ is independent of $B$ given
$C$) holds if and only if $I(A;B\mid C)=0$.
When $C=\emptyset$, or if $C$ is a constant (i.e., $H(C)=0$), then the conditional mutual information is reduced to the standard mutual information, and denoted by $I(A;B)$. 

% , and similarly
% $p \models A \fd B$ iff $H(B|A)=0$, thus, entropy provides a
% characterization of conditional independence.
\eat{
In this paper we use only the following two properties of the mutual
information:
\begin{align}
  I(B;C|A) \geq & 0 \label{eq:shannon} \\
  I(B;CD|A) = & I(B;C|A) + I(B;D|AC) \label{eq:ChainRuleMI}
\end{align}
The first inequality follows from monotonicity and submodularity (it
is in fact equivalent to them); the second equality is called the
\e{chain rule}.  All consequences of these two (in)equalities are
called \e{Shannon inequalities}; for example, monotonicity
$H(AB) \geq H(A)$ is a Shannon inequality because it follows from
\eqref{eq:shannon} by setting $B=C$.
}
% 
% 
% For every entropic function $H$ the mutual information is
% positive. That is, $I(B;C|A) \geq 0$ for any three sets $A,B$, and
% $C$.  Let $A,B,C$ and $D$ be disjoint sets of variables in
% $\Omega$. The \e{chain rule} is the identity:
% \begin{equation} \label{eq:ChainRuleMI}
% I_H(B;CD|A)=I_H(B;C|A)+I_H(B;D|AC)
% \end{equation}

Let $\bX\eqdef\set{X_1,\dots,X_n}$ a set of discrete random variables, and let $P(\bX)$ and $Q(\bX)$ denote discrete probability distributions. The \e{KL-divergence} between $P$ and $Q$ is:
\begin{equation}
	\label{eq:KLD}
	D_{KL}(P\mid\mid Q)\eqdef  \expectation_{P}\log\left(\frac{P(\bX)}{Q(\bX)}\right)=\sum_{\bx \in \D(\bX)}P(\bx)\log\left(\frac{P(\bx)}{Q(\bx)}\right)
\end{equation}
For any pair of probability distributions $P,Q$ over the same probability space, it holds that $D_{KL}(P||Q)\geq 0$, with equality if and only if $P=Q$. It is an easy observation that  

\begin{equation}
  I(A;B\mid C)=D_{KL}(P(ABC)\mid\mid P(A|C)P(B|C)P(C))  
\end{equation}
for any probability distribution $P$.

Let $R$ be a multiset of tuples over the attribute set $\Omega=\set{X_1,\dots,X_n}$, and let $|R|=N$. The \e{empirical distribution} associated with $R$ is the multivariate probability distribution over $\D_1\times \cdots \times \D_n$ that assigns a probability of $P(t)\eqdef \frac{K}{N}$ to every tuple $t$ in $R$ with multiplicity $K$. 
When $R$ is a relation instance, and hence a set of $N$ tuples, its empirical distribution is the uniform distribution
over its tuples, i.e. $P(t) {=} 1/N$ for all $ t{\in} R$, and so its entropy is $H(\Omega)=\log N$.  
For a relation instance $R\in \Rel(\Omega)$, we let $P_R$ denote the empirical distribution over $R$. For $\alpha \subseteq [n]$, we denote
by $X_\alpha$ the set of variables $X_i, i \in \alpha$, and denote by
$\relation(X_\alpha{=}x_\alpha)$ the subset of tuples $t \in R$ where
$t[X_\alpha]{=}x_\alpha$, for fixed values $x_\alpha$.  
When $P_R$ is the empirical distribution over $R$ then 
the marginal probability is
$P_R(X_\alpha{=}x_\alpha){=}\frac{|\relation(X_\alpha{=}x_\alpha)|}{N}$.
\eat{
,
and therefore:
\begin{equation}\label{eq:jointEntropy}
H(X_\alpha)\eqdef \log N-\frac{1}{N}\sum_{x_\alpha {\in} \D_\alpha}|\relation(X_\alpha{=}x_\alpha)|\log |\relation(X_\alpha{=}x_\alpha)|
\end{equation}
}
\eat{
The sum above can be computed using a simple SQL query: %, a property that we will exploit later in the paper:
$\select\ X_\alpha, \sql{count(*)}\times\log(\sql{count(*)})\ \from\ R\ \groupby\ X_\alpha$.
}
%
% \begin{align*}
%   & \select\ X_\alpha, \sql{count(*)}\times\log(\sql{count(*)})\ \from\ R\ \groupby\ X_\alpha
% \end{align*}

% \begin{align*}
%   & \texttt{SELECT } X_\alpha, \texttt{count(*)}*\log(\texttt{count(*)})   \texttt{ FROM } R \texttt{ GROUP BY } X_\alpha
% \end{align*}

Lee~\cite{DBLP:journals/tse/Lee87,DBLP:journals/tse/Lee87a} formalized
the following connection between database constraints and entropic
measures.  Let $(\T,\chi)$ be a join tree (Definition~\ref{def:joinTree}).  The $\J$-measure is defined as:
%
%\small{
\begin{equation}\label{eq:JTScore}
  \J(\T,\chi){\eqdef}\sum_{\substack{v\in\\ \nodes(\T)}}H(\jointreeMapFunction(v))-\sum_{\substack{(v_1,v_2)\in\\ \edges(\T)}}H(\jointreeMapFunction(v_1) {\cap} \jointreeMapFunction(v_2))-H(\chi(\T))
\end{equation}
%}%
\normalsize{where $H$ is the entropy (see~\eqref{eq:entropy}) taken over the empirical distribution associated with $R$.
	We abbreviate $\J(\T,\chi)$ with $\J(\T)$, or $\J$, when $\T, \chi$
  are clear from the context. Observe that $\J$ depends only on the schema
  $\schema$ defined by the join tree, and not on the tree itself.  For a simple example, consider the MVD $X \mvd U|V|W$ and
  its associated acyclic schema $\set{XU, XV, XW}$.  For
  the join tree $XU-XV-XW$, it holds that
  $\J = H(XU)+H(XV)+H(XW)-2H(X)-H(XUVW)$.  Another join tree is
  $XU-XW-XV$, and $\J$ is the same.  Therefore, if $\schema$ is
  acyclic, then we write $\J(\schema)$ to denote $\J(\T)$ for any join
  tree of $\schema$.  When $\schema=\set{XZ,XY}$, then the $\J$-measure reduces to the conditional mutual information (see~\eqref{eq:h:mutual:information}), to wit $\J(\schema)=I(Z;Y|X)$.
  \eat{
  We denote by
  $\J(X \mvd Y_1|\cdots |Y_m) \defeq
  H(XY_1)+\cdots+H(XY_m)-(m-1)H(X)-H(XY_1\cdots Y_m)$
  for any sets of variables $X, Y_1, \ldots, Y_m$ where
  $Y_1, \ldots, Y_m$ are pairwise disjoint, even when $XY_1\cdots Y_m$
  is not necessarily $\Omega$.  When $m=2$, then
  $J(X \mvd Y|Z) = I(Y;Z|X)$.} \eat{ Lee proved the following: }

\def\AcyclicCharacterizationLee{
Let $\schema$ be an acyclic schema over $\Omega$, and let $R$ be a relation instance over $\Omega$. 
 Then $R \models \AJD(\schema)$ if and only if $\J(\schema)=0$.
}
\begin{citedtheorem}{\cite{DBLP:journals/tse/Lee87a}}\label{thm:AcyclicCharacterizationLee}
\AcyclicCharacterizationLee
\end{citedtheorem}

In the particular case of a standard MVD, Lee's result implies that $\relation \models X \mvd Y|Z$ if and only if $I(Y;Z|X)=0$ in the empirical distribution associated with $R$.

\subsection{MVDs and Acyclic Join Dependencies}
Beeri et al.~\cite{Beeri:1983:DAD:2402.322389} have shown that an acyclic join dependency defined by the acyclic schema $\schema$, over $m$ relation schemas, is equivalent to $m-1$ MVDs (called its support).
In~\cite{DBLP:conf/sigmod/KenigMPSS20}, this characterization was generalized as follows. 
Let $(\T,\jointreeMapFunction)$ be the join tree corresponding to the acyclic schema $\schema$. 
Root the tree $\T$ at node $u_1$, orient the tree accordingly, and let $u_1, \ldots, u_m$ be a
depth-first enumeration of $\nodes(\T)$.  Thus, $u_1$ is the root, and
for every $i > 1$, $\parent(u_i)$ is some node $u_j$ with $j < i$. For every $i$, we define $\Omega_i \defeq \chi(u_i)$,
$\Omega_{i:j} \defeq \bigcup_{\ell=i,j} \Omega_\ell$, and
$\Delta_i \defeq \chi(\parent(u_i)) \cap \chi(u_i)$. By the running
intersection property (Definition~\ref{def:joinTree}) it holds that $\Delta_i=\Omega_{1:(i-1)}\cap\Omega_i$. }
\begin{citedtheorem}{\cite{DBLP:conf/sigmod/KenigMPSS20}}
\label{thm:BoundsOnJ}
	Let $(\T,\jointreeMapFunction)$ be a join tree over variables $\jointreeMapFunction(\T)=\Omega$, where $\nodes(\T)=\set{u_1,\dots,u_m}$ with corresponding bags $\Omega_i=\jointreeMapFunction(u_i)$.
	Then:
	\begin{equation}
		\label{eq:BeeriGen}
		\max_{i\in[2,m]}I(\Omega_{1,(i-1)};\Omega_{i:m}|\Delta_i)\leq \J(\T,\jointreeMapFunction)\leq \sum_{i=2}^mI(\Omega_{1,(i-1)};\Omega_{i:m}|\Delta_i)
	\end{equation}
\eat{	where $\Omega_{i:j} \defeq \bigcup_{\ell=i,j} \Omega_\ell$ and $\Delta_i=\Omega_{1:(i-1)}\cap\Omega_i$.}
\end{citedtheorem}
Since the support of $(\T,\jointreeMapFunction)$ are precisely the MVDs 
\begin{equation}
\set{\Delta_i\mvd \Omega_{1:(i-1)}|\Omega_{i:m}}_{i\in [2,m]},    
\end{equation}
then~\eqref{eq:BeeriGen} generalizes the result of Beeri et al.~\cite{Beeri:1983:DAD:2402.322389}.  
\begin{definition}[models $\T$]
\label{def:modelsT}
We say that a relation instance $R$ over attributes $\Omega$ \e{models} the join tree $(\T,\jointreeMapFunction)$, denoted $R\models \T$ if $I(\Omega_{1,(i-1)};\Omega_{i:m}|\Delta_i)=0$ for every $i\in [2,m]$. \eat{Equivalently, $\J(\T,\jointreeMapFunction)=0$.}
\end{definition}
\eat{
  \begin{align}
%    R \models & X \fd Y & \Leftrightarrow &&& H(Y|X)=0 \label{eq:lee:fd}	\\
    R \models & X \mvd Y|Z & \Leftrightarrow &&& I(Y;Z|X)=0 \label{eq:lee:mvd}
  \end{align}
}
\eat{
  \begin{example} \label{ex:j1} Continuing Example~\ref{ex:j0}, the
    empirical distribution of the relation $R$ in
    Fig~\ref{fig:acyclic:schema} (without the red tuple) assigns
    probability $1/4$ to each tuple. Thus, $H(ABCDEF)=\log 4=2$.  The
    marginal probabilities need not be uniform, e.g. the marginals for
    $BDE$ are $1/4, 1/4, 1/2$, and thus
    $H(BDE) = 1/4\log 4 + 1/4\log 4 + 1/2\log 2 = 3/2$.  The value of
    $\J$ is:
    $\J(\T) = H(AF)+H(ACD)+H(ABD)+H(BDE)-H(A)-H(AD)-H(BD)-H(ABCDEF)$.
    For the empirical distribution in the figure, this quantity is 0.
  \end{example}
}

\eat{
\subsection{Enumeration}

An \e{enumeration problem} $\P$ is a collection of pairs $(x,\P(x))$
where $x$ is an \e{input} and $\P(x) = \set{y_1, y_2, \ldots}$ is a
finite set of \e{answers} for $x$.  An \e{enumeration algorithm} for
an enumeration problem $\P$ is an algorithm that, when given an input
$x$, produces (or \e{prints}) the answers $y_1, y_2, \ldots$ such that
every answer is printed precisely once. The yardstick used to measure
the complexity of an enumeration algorithm $A$ is the \e{delay}
between consecutive answers~\cite{DBLP:journals/ipl/JohnsonP88}.  We
say that $A$ runs in:
\begin{itemize}
	\item \e{polynomial delay} if the time between printing $y_K$
          and $y_{K+1}$ is polynomial in $|x|$;
      \item \e{incremental polynomial time} if the time between printing $y_K$
          and $y_{K+1}$ is polynomial in $|x|+K$.
\end{itemize}
% Observe that a solver that enumerates with polynomial delay also
% enumerates with incremental polynomial time.

In this paper we will use of the following result.

\begin{citedtheorem}{\cite{DBLP:journals/ipl/JohnsonP88,DBLP:journals/jcss/CohenKS08}}\label{thm:enumeration}
Let $G(V,E)$ be a graph. The maximal independent sets of $G$ can be enumerated in polynomial delay where the delay is $O(|V|^3)$.
\end{citedtheorem}
}

\addtolength{\abovedisplayskip}{-1.5pt}
\addtolength{\belowdisplayskip}{-1.5pt}

\section{Characterizing Acyclic Schemas with KL-Divergence}
\label{sec:JMeasureKL}
Let $\bX\eqdef \set{X_1,\dots, X_n}$ be a set of discrete random variables over domains $\D(X_i)$, and let $P(X_1,\dots,X_n)$ be a joint probability distribution over $\bX$.
Let $\bx \in \D(X_1)\times \dots \times \D(X_n)$. For a subset $\bY \subset \bX$, we denote by $\bx[\bY]$ the assignment $\bx$ restricted to the variables $\bY$. We denote by $P[\bY]$ the marginal probability distribution over $\bY$. 

Let $(\T,\chi)$ be a join tree where $\nodes(\T)=\set{u_1,\dots,u_m}$. Let $\schema=\set{\Omega_1,\dots,\Omega_m}$ be the acyclic schema associated with $(\T,\chi)$ over the variables $\Omega=\chi(\T)$, where $\Omega_i=\jointreeMapFunction(u_i)$. 
\def\pTProposition{
	Let $P(X_1,\dots,X_n)$ be any joint probability distribution over $n$ variables, and let $(\T,\chi)$ be a join tree where $\chi(\T)=\set{X_1,\dots,X_n}$. Then $P\models \T$ (Definition~\ref{def:modelsT}) if and only if $P=P_\T$ where:
	\begin{equation}
		\label{eq:Pt}
		P_\T(x_1,\dots,x_n)\eqdef \frac{\prod_{i=1}^mP[ \Omega_i](\boldsymbol{\bx}[ \Omega_i])}{\prod_{i=1}^{m-1}P[ \Delta_i](\boldsymbol{\bx}[\Delta_i])}
	\end{equation}
	where $P[ \Omega_i]$ ($P[ \Delta_i]$) denote the marginal probabilities over $\Omega_i$ ($\Delta_i$).\eat{ We say that a probability distribution $P$ \e{models} the join tree $\T$ denoted $P\models \T$ if $P=P_\T$.}
}
\begin{proposition}
\label{prop:PT}
\pTProposition
\end{proposition}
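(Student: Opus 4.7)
The plan is to establish the identity
\[
\J(\T, \chi) \;=\; D_{KL}(P \mid\mid P_\T)
\]
by a direct calculation, and then to invoke two ingredients: the non-negativity of KL-divergence (with equality iff the two distributions coincide), and Theorem~\ref{thm:BoundsOnJ}, which sandwiches $\J(\T,\chi)$ between the maximum and the sum of the conditional mutual informations $I(\Omega_{1,(i-1)}; \Omega_{i:m} \mid \Delta_i)$ for $i \in [2,m]$. Together these will yield both directions of the proposition almost immediately.

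To establish the identity, I would start by writing
\[
D_{KL}(P \mid\mid P_\T) \;=\; \expectation_P \log P(\bX) \;-\; \expectation_P \log P_\T(\bX) \;=\; -H(\Omega) \;-\; \expectation_P \log P_\T(\bX).
\]
Substituting the formula~\eqref{eq:Pt} into the second term and using the fact that $\expectation_P \log P[\Omega_i](\bX[\Omega_i]) = -H(\Omega_i)$ (the expectation of a marginal log-probability under the joint distribution reduces to the expectation under the marginal, which is the negative entropy of the marginal), and similarly for each $\Delta_i$, the logarithm of the product/quotient in~\eqref{eq:Pt} turns into
\[
D_{KL}(P \mid\mid P_\T) \;=\; \sum_{i=1}^m H(\Omega_i) \;-\; \sum_{i} H(\Delta_i) \;-\; H(\Omega),
\]
where the middle sum runs over the $m-1$ separators of the join tree. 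Since the edges of $\T$ are in bijection with these separators, comparison with~\eqref{eq:JTScore} shows that the right-hand side is exactly $\J(\T, \chi)$.

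With this identity in hand, both directions of the proposition follow quickly. If $P \models \T$, then by Definition~\ref{def:modelsT} every $I(\Omega_{1,(i-1)}; \Omega_{i:m} \mid \Delta_i) = 0$, so the upper bound in Theorem~\ref{thm:BoundsOnJ} forces $\J(\T,\chi) = 0$, i.e.\ $D_{KL}(P \mid\mid P_\T) = 0$, and the equality condition for KL-divergence gives $P = P_\T$. Conversely, if $P = P_\T$ then $D_{KL}(P \mid\mid P_\T) = 0$, so $\J(\T,\chi) = 0$, and the \emph{lower} bound in Theorem~\ref{thm:BoundsOnJ} forces each non-negative term $I(\Omega_{1,(i-1)}; \Omega_{i:m} \mid \Delta_i)$ to vanish, so $P \models \T$ by Definition~\ref{def:modelsT}.

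The main obstacle is an easy-to-overlook sanity check: that $P_\T$ as defined in~\eqref{eq:Pt} is a bona fide probability distribution (non-negative and normalized to $1$ on the joint domain), and that its support contains that of $P$, so that $D_{KL}(P \mid\mid P_\T)$ is well-defined. This is a classical fact about junction-tree factorizations, but one must handle assignments where some $P[\Delta_i](\bx[\Delta_i]) = 0$ separately; on such assignments the numerator $\prod_i P[\Omega_i](\bx[\Omega_i])$ also vanishes (since $\Delta_i \subseteq \Omega_i$ forces $P[\Omega_i] \leq P[\Delta_i]$), and the usual convention $0/0 = 0$ makes the identity pass through without incident. Once this is checked, the remainder is pure bookkeeping.
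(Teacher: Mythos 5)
Your proof is correct, but it takes a genuinely different route from the paper's. The paper proves Proposition~\ref{prop:PT} directly, by induction on the number of nodes of $\T$: a leaf $u_{m+1}$ is merged into (resp.\ removed from) its parent, the induction hypothesis is applied to the resulting smaller join tree, and the single conditional independence $I(\Omega_{1:m};\Omega_{m+1}\mid\Delta_m)=0$ is used to split (resp.\ recover) the factor $P[\Omega_{m+1}]P[\Omega_p]/P[\Delta_m]$. You instead derive the proposition as a corollary of the identity $\J(\T)=D_{KL}(P\,\|\,P_\T)$ (obtained by the direct entropy computation, which needs nothing beyond the definition of $P_\T$ and is the same calculation the paper later performs for Theorem~\ref{prop:JMeasureKL}), combined with the sandwich bound of Theorem~\ref{thm:BoundsOnJ} and the equality case of the KL-divergence. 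This is logically sound and arguably more conceptual: it exhibits both directions as instances of the single chain ``$P\models\T$ iff all $I(\Omega_{1:(i-1)};\Omega_{i:m}\mid\Delta_i)=0$ iff $\J(\T)=0$ iff $P=P_\T$,'' and it outsources the tree combinatorics to the cited Theorem~\ref{thm:BoundsOnJ}. Two caveats. First, as you yourself flag, the equality case of $D_{KL}$ requires $P_\T$ to be a genuine normalized distribution; this is precisely the content of Lemma~\ref{lem:marginals} (summing $P_\T[\Omega_i]=P[\Omega_i]$ over $\D(\Omega_i)$ yields $\sum_{\bx}P_\T(\bx)=1$), and its proof is the very leaf-elimination induction you were hoping to bypass --- so the inductive work is relocated rather than eliminated, and should be spelled out rather than waved at as ``classical.'' Second, your argument inverts the paper's logical order: Proposition~\ref{prop:PT} is used inside the proof of Lemma~\ref{lem:Pt} and hence of Theorem~\ref{prop:JMeasureKL}, so one must be explicit that only the direct computation $D_{KL}(P\,\|\,P_\T)=\J(\T)$ --- and not the variational statement $\J(\T)=\min_{Q\models\T}D_{KL}(P\,\|\,Q)$ --- is being invoked, in order to avoid circularity. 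With those two points made explicit, the proof stands.
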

\AppRef{The proof of Proposition~\ref{prop:PT} is deferred to Appendix \ref{sec:ProofsJMeasureKL}.}{The proof of Proposition~\ref{prop:PT} appears in \cite{kenig2022quantifying}.}It follows from Definition~\ref{def:modelsT}, and a simple induction on the number of nodes in $\T$.

In this section, we refine the statement of Proposition \ref{prop:PT} and prove the following variational representation.
\def\JKLThm{
	For any joint probability distribution $P(X_1,\dots,X_n)$ and any join tree $(\T,\chi)$  with $\chi(\T)=\set{X_1,\dots,X_n}$ it holds that 
	\begin{equation}
		\J(\T)=\min_{Q\models\T}D_{KL}(P||Q)=D_{KL}(P||P_\T)
	\end{equation}
\eat{
	where $P_\T\eqdef \frac{\prod_{i=1}^mP[ \Omega_i](\boldsymbol{\bx}[ \Omega_i])}{\prod_{i=1}^{m-1}P[ \Delta_i](\boldsymbol{\bx}[\Delta_i])}$ (see~\eqref{eq:Pt}). In particular, $\J(\T)=\min_{Q\models\T}D_{KL}(P||Q)$.
}
}
\begin{theorem}
	\label{prop:JMeasureKL}
	\JKLThm
\end{theorem}
In words, this theorem states that when
the join tree $(\T,\jointreeMapFunction)$ is given, then out of all probability distributions $Q$ over $\Omega=\set{X_1,\dots,X_n}$ that model $\T$ (see~\eqref{eq:Pt}), the one closest to $P$ in terms of KL-Divergence, is $P_{\T}$. Importantly, this KL-divergence is precisely $\J(\T)$ (i.e., $\J(\T)=D_{KL}(P||P_\T)$).
While the Theorem holds for all probability distributions $P$, a special case is when 
$P$ is the empirical distribution associated with relation $R$. The proof of Theorem \ref{prop:JMeasureKL} follows from the following two lemmas, interesting on their own, and is deferred to the complete version of this paper~\cite{kenig2022quantifying}.

\eat{
\begin{definition}
	Let $P(X_1,\dots,X_n)$ be any joint probability distribution over $n$ variables, and let $(\T,\chi)$ be a join tree where $\chi(\T)=\set{X_1,\dots,X_n}$. We define:
	\begin{equation}
		\label{eq:Pt}
		P_\T(x_1,\dots,x_n)\eqdef \frac{\prod_{i=1}^mP[ \Omega_i](\boldsymbol{\bx}[ \Omega_i])}{\prod_{i=1}^{m-1}P[ \Delta_i](\boldsymbol{\bx}[\Delta_i])}
	\end{equation}
where $P[ \Omega_i]$ ($P[ \Delta_i]$) denote the marginal probabilities over $\Omega_i$ ($\Delta_i$). We say that a probability distribution $P$ \e{models} the join tree $\T$ denoted $P\models \T$ if $P=P_\T$.
\end{definition}
}
\def\lemMarginals{
Let $P(X_1,\dots,X_n)$ be a joint probability distribution over $n$ random variables, and let $\T$ be a join tree over $X_1,\dots,X_n$ with bags $\Omega_1,\dots,\Omega_m$. Then $P[\Omega_i]=P_\T[\Omega_i]$ for every $i\in [1,m]$, and  $P[\Delta_i]=P_\T[\Delta_i]$ for every $i\in [1,m-1]$.
}
\begin{lemma}
	\label{lem:marginals}
\lemMarginals
\end{lemma}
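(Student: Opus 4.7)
The plan is to prove the lemma by induction on the number $m$ of bags of the join tree $\T$. The base case $m=1$ is immediate: the tree has the single bag $\Omega_1 = \chi(\T)$ and an empty product in the denominator of~\eqref{eq:Pt}, so $P_\T \equiv P[\Omega_1] = P$ and both marginal equalities hold trivially (there are no separators to check in this case).

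For the inductive step, I will remove a leaf. Pick any leaf $u_\ell$ of $\T$ other than the root, let $u_p$ be its parent, and let $\T'$ denote the subtree obtained by deleting $u_\ell$. The central tool is the running-intersection property (Definition~\ref{def:joinTree}): it guarantees that every variable in $\Omega_\ell \setminus \Delta_\ell$ appears in $\Omega_\ell$ alone, so $\chi(\T') = \chi(\T) \setminus (\Omega_\ell \setminus \Delta_\ell)$ and none of the bags or separators of $\T'$ involve those variables. Isolating the $\Omega_\ell$ factor of the numerator and the $\Delta_\ell$ factor of the denominator in~\eqref{eq:Pt} yields the key identity
\begin{equation*}
P_\T(\bx) \;=\; \frac{P[\Omega_\ell](\bx[\Omega_\ell])}{P[\Delta_\ell](\bx[\Delta_\ell])} \cdot P_{\T'}(\bx[\chi(\T')]).
\end{equation*}
Summing both sides over $\bx[\Omega_\ell \setminus \Delta_\ell]$, which appear only in the first factor, and using the marginal-consistency identity $\sum_{\bx[\Omega_\ell \setminus \Delta_\ell]} P[\Omega_\ell] = P[\Delta_\ell]$, collapses the right side to $P_{\T'}$: the marginal of $P_\T$ on $\chi(\T')$ is exactly $P_{\T'}$.

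From here the two desired equalities follow by short marginalization arguments. For any bag $\Omega_i$ or separator $\Delta_i$ with $i \neq \ell$, the set is contained in $\chi(\T')$, so $P_\T[\Omega_i] = P_{\T'}[\Omega_i] = P[\Omega_i]$ and $P_\T[\Delta_i] = P_{\T'}[\Delta_i] = P[\Delta_i]$ by the inductive hypothesis applied to $\T'$. For $\Omega_\ell$ itself I use the factorization identity directly: marginalizing $P_\T$ over $\chi(\T) \setminus \Omega_\ell = \chi(\T') \setminus \Delta_\ell$ gives $P_\T[\Omega_\ell] = (P[\Omega_\ell]/P[\Delta_\ell]) \cdot P_{\T'}[\Delta_\ell]$, and $P_{\T'}[\Delta_\ell] = P[\Delta_\ell]$ by the inductive hypothesis (since $\Delta_\ell \subseteq \Omega_p$ and $u_p \in \T'$, so one further marginalizes $P_{\T'}[\Omega_p] = P[\Omega_p]$). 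Hence $P_\T[\Omega_\ell] = P[\Omega_\ell]$, and marginalizing once more yields $P_\T[\Delta_\ell] = P[\Delta_\ell]$.

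The main delicacy is invoking the running-intersection property correctly to guarantee that the variables being summed out really appear in no other bag or separator, so that the factorization identity truly reduces to $P_{\T'}$; once that bookkeeping is straight, the induction is mechanical. A minor routine caveat is the convention $0/0 := 0$ for configurations with $P[\Delta_\ell](\bx[\Delta_\ell]) = 0$ (which forces $P[\Omega_\ell](\bx[\Omega_\ell]) = 0$ as well, since $\Delta_\ell \subseteq \Omega_\ell$), so such tuples contribute zero to every sum and the identities above are unaffected.
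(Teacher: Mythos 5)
Your proof is correct and follows essentially the same route as the paper's: induction on the number of bags, deleting a leaf $u_\ell$, using the running-intersection property to isolate the variables $\Omega_\ell\setminus\Delta_\ell$, factoring $P_\T = \frac{P[\Omega_\ell]}{P[\Delta_\ell]}\cdot P_{\T'}$, and invoking the inductive hypothesis on $\T'$. If anything, your write-up is slightly more complete than the paper's, which only carries out the marginalization explicitly for the newly removed bag; your handling of the remaining bags and the $0/0$ convention is a welcome addition.
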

The proof of Lemma~\ref{lem:marginals} follows from an easy induction on $m$, the number of nodes in the join tree $\T$, \AppRef{and is deferred to Appendix \ref{sec:ProofsJMeasureKL}.}{and can be found in \cite{kenig2022quantifying}.}
\eat{Since, for every $j\in [m-1]$, there is some $i\in [m]$ such that $\Delta_j \subset \Omega_i$, then $\P_\T[\Delta_j]=P[\Delta_j]$ as well.}
\eat{
\begin{proof}
We prove the claim by induction on $m$, the number of nodes in $\T$. The claim is immediate form $m=1$. So assume it holds for $m-1$ nodes, and we prove it for $m$ nodes.
Let $u_m\in \nodes(\T)$ be a leaf in $\T$, $\Omega_m=\jointreeMapFunction(u_m)$, and $\Delta_{m-1}=\jointreeMapFunction(\parent(u_m)\cap \Omega_m)$. 
Also, we let $C_m\eqdef \Omega_m\setminus \Delta_{m-1}$. By the join-tree property $C_m\cap \Omega_i=\emptyset$ for every $i\neq m$. Let $\bZ\eqdef \bX\setminus C_m$. Since $\P_\T[\bZ]$ contains only $m-1$ nodes, then the induction hypothesis applies.
Therefore, 
, and 
Observe that $\P_\T[\bZ](\bz)=\sum_{\by\in \D(C_m)}P_\T(\bz,\by)$

Let $u_k\in \nodes(\T)$, $\jointreeMapFunction(u_k)=\Omega_k$, and let $\Delta_1^k,\dots,\Delta_\ell^k$ denote the vertex sets associated with the edges incident to node $u_k\in \nodes(\T)$.
If $\ell=0$, then $\T$ contains at most $m-2$ edges and hence, by~\eqref{eq:Pt}, $P_\T=P[\Omega_k]\cdot P_\T(\bX \setminus \Omega_k)$. Therefore, in this case, $P_\T[\Omega_k]=P[\Omega_k]$ as required.
So assume that $\ell \geq 1$. By definition, $\Delta_j \subset \Omega_k$ for all $j\in [1,\ell]$. WLOG we let $\Delta_1^k=\Delta_{m-1},\Delta_2^k=\Delta_{m-2},\dots,\Delta_\ell^k=\Delta_{m-1-\ell}$. Further, we let $C_k\eqdef \bigcup_{i=m-1-\ell}^{m-1}\Delta_i$, and $R_k\eqdef \Omega_k\setminus C_k$. Therefore,
\begin{align}
P_\T[\Omega_k]=\frac{P[\Omega_k]}{\prod_{i=m-1-\ell}^{m-1}P[\Delta_i]}\sum_{}
\end{align}
Therefore:
\begin{align}
	P_\T[\Omega_k]=\frac{P[\Omega_k]}{\prod_{j=1}^\ellP[\Delta_j]}\cdot \sum_{\bX\setminus \Omega_k}\frac{\prod_{i\in [m]\setminus \set{k}}P[\Omega_i]}{\prod_{i\in [m-1]\setminus}}
\end{align}
\end{proof}
}
\def\lemmaPt{
	The following holds for any joint probability distribution $P(X_1,\dots,X_n)$, and any join tree $\T$ over variables $X_1,\dots,X_n$:
	\begin{equation}
		\argmin_{Q \models \T}D_{KL}(P||Q)=P_{\T}
	\end{equation}
}
\begin{lemma}
\label{lem:Pt}
\lemmaPt
\end{lemma}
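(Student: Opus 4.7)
The plan is to establish a Pythagorean-type identity
\[
D_{KL}(P\|Q) \;=\; D_{KL}(P\|P_\T) \;+\; D_{KL}(P_\T\|Q)
\]
for every $Q$ with $Q \models \T$, from which the claim is immediate: the right-hand summand is always non-negative, and vanishes iff $Q=P_\T$, so the minimum over the feasible set $\{Q:Q\models\T\}$ is attained uniquely at $Q=P_\T$.

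The first step is to exploit Proposition \ref{prop:PT}: since $Q\models \T$, we may write $Q$ in the product form of \eqref{eq:Pt} with its own marginals $Q[\Omega_i]$ and $Q[\Delta_i]$; and $P_\T$ has the same product form but with marginals $P[\Omega_i]$ and $P[\Delta_i]$. Taking logarithms and subtracting, we obtain
\[
\log\frac{P_\T(\bx)}{Q(\bx)} \;=\; \sum_{i=1}^m \log\frac{P[\Omega_i](\bx[\Omega_i])}{Q[\Omega_i](\bx[\Omega_i])} \;-\; \sum_{i=1}^{m-1}\log\frac{P[\Delta_i](\bx[\Delta_i])}{Q[\Delta_i](\bx[\Delta_i])}.
\]

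The second step is to take the expectation of this identity under $P$ and under $P_\T$ separately. Under $P$, each term depends only on the marginal over $\Omega_i$ (resp.\ $\Delta_i$), so the expectation collapses into a signed sum of marginal KL divergences:
\[
\mathbb{E}_P\!\left[\log\frac{P_\T}{Q}\right] = \sum_{i=1}^m D_{KL}(P[\Omega_i]\|Q[\Omega_i]) - \sum_{i=1}^{m-1} D_{KL}(P[\Delta_i]\|Q[\Delta_i]).
\]
The left-hand side is precisely $D_{KL}(P\|Q) - D_{KL}(P\|P_\T)$, which rewrites one side of the Pythagorean identity. To match the other side, repeat the computation under $P_\T$ to obtain $D_{KL}(P_\T\|Q)$. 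Here the crucial ingredient is Lemma \ref{lem:marginals}, which tells us that $P_\T[\Omega_i]=P[\Omega_i]$ and $P_\T[\Delta_i]=P[\Delta_i]$; substituting these equalities shows that $D_{KL}(P_\T\|Q)$ reduces to the same signed sum of marginal divergences as above.

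The main (and essentially only) obstacle is the presence of the $\Delta_i$ terms with a negative sign: the expression $\sum_i D_{KL}(P[\Omega_i]\|Q[\Omega_i]) - \sum_i D_{KL}(P[\Delta_i]\|Q[\Delta_i])$ is not manifestly non-negative term by term, so one cannot conclude optimality directly from marginal inequalities. The step that makes the argument work is identifying this very expression with $D_{KL}(P_\T\|Q)$, using Lemma \ref{lem:marginals} to switch the reference measure from $P$ to $P_\T$; non-negativity then comes for free from Gibbs' inequality, with equality iff $Q = P_\T$, completing the proof.
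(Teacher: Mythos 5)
Your proposal is correct and follows essentially the same route as the paper: both decompose $D_{KL}(P\|Q)=D_{KL}(P\|P_\T)+\expectation_P\bigl[\log\tfrac{P_\T}{Q}\bigr]$, expand the log-ratio via the product form from Proposition~\ref{prop:PT}, and use Lemma~\ref{lem:marginals} to swap the reference measure from $P$ to $P_\T$ so that the residual term is identified with $D_{KL}(P_\T\|Q)\geq 0$. Your explicit framing of this as a Pythagorean identity is just a cleaner packaging of the identical argument.
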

\begin{proof}
	From Lemma~\ref{lem:marginals} we have that, for every $i\in \set{1,\dots,m}$,
    $P_\T[\Omega_i]=P[\Omega_i]$ where,  $\Omega_i=\jointreeMapFunction(u_i)$. Since $\Delta_i\subset \Omega_i$, then $P_\T[\Delta_i]=P[\Delta_i]$. Now, 
	\begin{align}
		\min_{Q \models \T}D_{KL}(P||Q)&=\min_{Q \models \T}\expectation_{P}\left[\log \frac{P(X_1,\dots,X_n)}{Q(X_1,\dots,X_n)}\right]\\	
		&=\min_{Q \models \T}\expectation_{P}\left[\log \frac{P(X_1,\dots,X_n)}{P_\T(X_1,\dots,X_n)}\cdot \frac{P_\T(X_1,\dots,X_n)}{Q(X_1,\dots,X_n)}\right]\\
		\eat{
		&=\min_{Q \models \T}\expectation_{P(\bX)}\left[\log \frac{P(X_1,\dots,X_n)}{P_\T(X_1,\dots,X_n)}\right]\\
		&~~~~~~~~~+\min_{Q \models \T}\expectation_{P}\left[\log \frac{P_\T(X_1,\dots,X_n)}{Q(X_1,\dots,X_n)}\right]\\}
		&=D_{KL}(P||P_\T)+\min_{Q \models \T}\expectation_{P}\left[\log \frac{P_\T(X_1,\dots,X_n)}{Q(X_1,\dots,X_n)}\right] 
	\end{align}
Since the chosen distribution $Q(\bX)$ has no consequence on the first term $D_{KL}(P|| P_\T)$, we take a closer look at the second term, $\min_{Q \models \T}\expectation_{P}\left[\log \frac{P_\T(X_1,\dots,X_n)}{Q(X_1,\dots,X_n)}\right]$. Since $Q \models \T$, then by Proposition~\ref{prop:PT} it holds that
$Q(X_1,\dots,X_n)=\frac{\prod_{i=1}^mQ[\Omega_i](\bX[\Omega_i])}{\prod_{i=1}^{m-1}Q[\Delta_i](\bX[\Delta_i])}$.
Hence, in what follows, we refer to $Q$ as $Q_{\T}$. In the remainder of the proof we show that:
\begin{align}
 \expectation_{P}\left[\log \frac{P_\T(X_1,\dots,X_n)}{Q^\T(X_1,\dots,X_n)}\right]&=\expectation_{P_{\T}}\left[\log \frac{P_\T(X_1,\dots,X_n)}{Q_\T(X_1,\dots,X_n)}\right]\label{eq:toProve} \\ &=D_{KL}(P_\T||Q_\T), 
 \end{align}
 where the last equality follows from \eqref{eq:KLD}.
 Since $D_{KL}(P_\T||Q_\T)\geq 0$, with equality if and only if $P_\T=Q_\T$, then choosing $Q_\T$ to be $P_\T$ minimizes $D_{KL}(P||Q)$, thus proving the claim. The remainder of the proof, proving~\eqref{eq:toProve}, follows from Lemma~\ref{lem:marginals} which states that $P[\Omega_i]=P_\T[\Omega_i]$ for every $i\in [1,m]$, and $P[\Delta_j]=P_\T[\Delta_j]$ for every $j\in [m-1]$. \AppRef{The proof is quite technical, and hence deferred to Appendix \ref{sec:ProofsJMeasureKL}.
}{The proof is quite technical, and can be found in \cite{kenig2022quantifying}.}
\eat{
\begin{align*}
	&\expectation_{P}\left[\log \frac{P_\T(X_1,\dots,X_n)}{Q^\T(X_1,\dots,X_n)}\right]\\
	&= \sum_{\bx\in \D(\bX)}P(\bx)\log\frac{\prod_{i=1}^mP^\T[\Omega_i](\bx[\Omega_i])}{\prod_{i=1}^mQ^\T[\Omega_i](\bx[\Omega_i])}\cdot\frac{\prod_{i=1}^{m-1}Q^\T[\Delta_i](\bx[\Delta_i]}{\prod_{i=1}^{m-1}P^\T[ \Delta_i](\bx[\Delta_i])}\\
	&=\sum_{\bx\in \D(\bX)}P(\bx)\log\frac{\prod_{i=1}^mP^\T[\Omega_i](\bx[\Omega_i])}{\prod_{i=1}^mQ^\T[\Omega_i](\bx[\Omega_i])}+\sum_{\bx\in \D(X)}P(\bx)\log\frac{\prod_{i=1}^{m-1}Q^\T[\Delta_i](\bx[\Delta_i])}{\prod_{i=1}^{m-1}P^\T[\Delta_i](\bx[\Delta_i])}\\
	&=\sum_{\bx\in \D(\bX)}P(\bx) \sum_{i=1}^m\log \frac{P^\T[\Omega_i](\bx[ \Omega_i])}{Q^\T[\Omega_i](\bx[\Omega_i])}+\sum_{\bx\in \D(\bX)}P(\bx) \sum_{i=1}^{m-1}\log \frac{Q^\T[\Delta_i](\bx[\Delta_i])}{P^\T[\Delta_i](\bx[\Delta_i])}\\
	&=\sum_{i=1}^m\sum_{\bx\in \D(\Omega_i)}P[\Omega_i](\bx)\log \frac{P^\T[\Omega_i](\bx)}{Q^\T[\Omega_i](\bx)}+ \sum_{i=1}^{m-1}\sum_{\bx\in \D(\Delta_i)}P[ \Delta_i](\bx)\log \frac{Q^\T[\Delta_i](\bx)}{P^\T[\Delta_i](\bx}\\
	&=\sum_{i=1}^m\sum_{\bx\in \D(\Omega_i)}P^\T[\Omega_i](\bx)\log \frac{P^\T[\Omega_i](\bx)}{Q^\T[\Omega_i](\bx)}+ \sum_{i=1}^{m-1}\sum_{\bx\in \D(\Delta_i)}P^\T[\Delta_i](\bx)\log \frac{Q^\T[\Delta_i](\bx)}{P^\T[\Delta_i](\bx}\\
	\eat{		
	&= \sum_{i=1}^m\sum_{\bx_{\mid \Omega_i}}P^\T_{\Omega_i}(\bx_{\mid \Omega_i})\log \frac{P_{\Omega_i}^\T(\bx_{\mid \Omega_i})}{Q_{\Omega_i}^\T(\bx_{\mid \Omega_i})}+ \sum_{i=2}^m\sum_{\bx_{\mid \Delta_i}}P^\T_{\Delta_i}(\bx_{\mid \Delta_i})\log \frac{Q_{\Delta_i}^\T(\bx_{\mid \Delta_i})}{P_{\Delta_i}^\T(\bx_{\mid \Delta_i})}\\}
	&=\sum_{i=1}^m \E_{P_\T}\left[\log \frac{P_\T[\Omega_i]}{Q_\T[\Omega_i]}\right]+\sum_{i=1}^{m-1}\E_{P_\T}\left[\log \frac{Q_\T[\Delta_i]}{P_\T[\Delta_i]}\right]\\
	&=\E_{P_\T}\left[\sum_{i=1}^m \log \frac{P_\T[\Omega_i]}{Q_\T[\Omega_i]}-\sum_{i=1}^{m-1}\log \frac{P_\T[\Delta_i]}{Q_\T[\Delta_i]}\right]\\
	&=\E_{P_\T}\left[\log \frac{\prod_{i=1}^m P_\T[\Omega_i]}{ \prod_{i=1}^m Q_\T[\Omega_i]}-\log \frac{\prod_{i=1}^{m-1} P_\T[\Delta_i]}{\prod_{i=1}^{m-1}Q_\T[\Delta_i]}\right]\\
	&= \E_{P_\T}\left[\log \frac{P^\T(X_1,\dots,X_n)}{Q^\T(X_1,\dots,X_n)}\right]\\
	&=D_{KL}(P^{\T},Q^{\T})
\end{align*}
%\end{footnotesize}
and since the right term is nonnegative and equals zero if and only if we
choose $Q^\T=P^\T$, the result follows.
}
\end{proof}

\eat{
\begin{theorem}
	\label{prop:JMeasureKL}
	\JKLThm
\end{theorem}
}
\eat{
\subsection{Proof of Theorem~\ref{prop:JMeasureKL}}
	From Lemma~\ref{lem:Pt}, we have that $\min_{Q\models\T}D_{KL}(P||Q)=D_{KL}(P||P_\T)$, where $P_\T$ is defined in~\eqref{eq:Pt}. 
	Therefore, we prove that $\J(\T)=D_{KL}(P||P_\T)$.
	\begin{align*}
		&D_{KL}(P||P_\T)=E_{P(\bX)}\left[\log \frac{P(\bX)}{P_\T(\bX)} \right]\\
		&=\sum_{\bx\in \D(\bX)}P(\bx)\log \frac{P(\bx)\prod_{i=1}^{m-1}P[\Delta_i](\bx[\Delta_i])}{\prod_{i=1}^mP[\Omega_i](\bx[\Omega_i])} \\
		&=\sum_{\bx\in \D(\bX)}P(\bx)\log P(\bx)+\sum_{\bx}P(\bx)\log \frac{\prod_{i=1}^{m-1}P[\Delta_i](\bx[\Delta_i])}{\prod_{i=1}^mP[\Omega_i](\bx[\Omega_i])}\\
		&=-H(\bX)+\sum_{\bx}P(\bx)\log \prod_{i=1}^{m-1}P[\Delta_i](\bx[\Delta_i])\\
		&\qquad\qquad\qquad\qquad-\sum_{\bx}P(\bx) \log\prod_{i=1}^mP[\Omega_i](\bx[\Omega_i])\\
		&=-H(\bX)+\sum_{\bx}P(\bx)\sum_{i=1}^{m-1}\log P[\Delta_i](\bx[\Delta_i])\\
		&\qquad\qquad\qquad\qquad-\sum_{\bx}P(\bx) \sum_{i=1}^m\log P[\Omega_i](\bx[\Omega_i])\\
		&=-H(\bX)+\sum_{i=1}^{m-1}\sum_{\by\in \D(\Delta_i)}P[\Delta_i](\by)\log P[\Delta_i](\by)\\
		&\qquad\qquad\qquad\qquad-\sum_{i=1}^m\sum_{\by\in \D(\Omega_i)}P[\Omega_i](\by)\log P[\Omega_i](\by)\\
		\eat{
		&=\sum_{\bx}P(\bx)\log P(\bx)+\sum_{i=1}^{m-1} \sum_{\bx[\Delta_i]}P[\Delta_i](\bx[\Delta_i])\log P[\Delta_i](\bx[ \Delta_i])-\sum_{i=1}^m\sum_{\bx[\Omega_i]}P[\Omega_i](\bx_{[\Omega_i]}) \log P_{\Omega_i}(\bx_{\mid \Omega_i})\\}
		&=-H(\bX)-\Sigma_{i=1}^{m-1}H(\Delta_i)+\Sigma_{i=1}^mH(\Omega_i)&\\
		&=\J(\T)&
	\end{align*}
 }
\section{Spurious tuples: a lower bound based on $\J(\T)$}
Let $P_R$ be the empirical distribution over a relation instance $R$ with $N$ tuples and $n$ attributes $\Omega=\set{X_1,\dots,X_n}$. That is, the probability associated with every record in $R$ is $\frac{1}{N}$. Let $\T$ be any junction tree over the variables $\Omega$, and let $\schema\eqdef\set{\Omega_1,\dots,\Omega_m}$ where $\Omega_i=\jointreeMapFunction(u_i)$. By Theorem~\ref{thm:AcyclicCharacterizationLee} and Theorem~\ref{prop:JMeasureKL}, it holds that $R\models \AJD(\schema)$ if and only if $\J(\T)=D_{KL}(P||P_\T)=0$. \eat{In other words, $\rho(R,\schema)=0$ if and only if $\J(\T)=D_{KL}(P||P_\T)=0$.}
In what follows, given an acyclic schema $\schema$, we provide a lower bound for $\rho(R,\schema)$ that is based on its associated junction tree $\J(\T)$.
\begin{lemma}
	\label{lem:LogRho}
	Let $P$ be the empirical distribution over a relation instance $R$ with $N$ tuples, and let $\schema=\set{\Omega_1,\dots,\Omega_m}$ denote an acyclic schema with junction tree $(\T,\jointreeMapFunction)$. Then:
	\begin{equation}
	\label{eq:lowerBound}
		\J(\T) \leq \log(1+\rho(R,\schema)).
	\end{equation}
	So if $\rho(R,\schema)=0$, then $\J(\T)=0$ as well.
\end{lemma}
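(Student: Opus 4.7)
The plan is to exploit the variational characterization of $\J(\T)$ from Theorem~\ref{prop:JMeasureKL}, namely $\J(\T) = \min_{Q \models \T} D_{KL}(P_R \| Q)$, by exhibiting a single distribution $Q$ that models $\T$ and whose KL-divergence from $P_R$ equals the target bound $\log(1+\rho(R,\schema))$. The natural candidate is the empirical (uniform) distribution $U_{R'}$ on the join result $R' \eqdef \Join_{i=1}^m R[\Omega_i]$, because by definition $|R'| = N(1+\rho(R,\schema))$ and $U_{R'}$ places mass $1/|R'|$ on every tuple of $R'$.

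The main step is to verify that $U_{R'} \models \T$. By Theorems~\ref{thm:AcyclicCharacterizationLee} and~\ref{thm:BoundsOnJ} applied to the relation instance $R'$, this reduces to showing that $R' \models \AJD(\schema)$. The key set-theoretic identity is $R'[\Omega_i] = R[\Omega_i]$ for every $i \in [m]$: the inclusion $R[\Omega_i] \subseteq R'[\Omega_i]$ follows from $R \subseteq R'$, and the reverse inclusion is immediate from the definition of the natural join, since any $t \in R'$ has its $\Omega_i$-projection in $R[\Omega_i]$. Hence $R' = \Join_{i=1}^m R[\Omega_i] = \Join_{i=1}^m R'[\Omega_i]$, so $R' \models \JD(\schema)$, and acyclicity of $\schema$ promotes this to $R' \models \AJD(\schema)$.

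The remaining computation is direct. Since $P_R(t) = 1/N$ for $t \in R$ and $U_{R'}(t) = 1/|R'|$ for $t \in R' \supseteq R$, the KL-divergence collapses to $D_{KL}(P_R \| U_{R'}) = \sum_{t \in R} (1/N)\log(|R'|/N) = \log(|R'|/N) = \log(1+\rho(R,\schema))$. Combining this with Theorem~\ref{prop:JMeasureKL} yields $\J(\T) \leq D_{KL}(P_R \| U_{R'}) = \log(1+\rho(R,\schema))$, which is the required inequality. The closing remark that $\rho(R,\schema) = 0$ forces $\J(\T) = 0$ is immediate since $\log 1 = 0$ and $\J(\T) \geq 0$ by non-negativity of KL-divergence.

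The proof is conceptually short once Theorem~\ref{prop:JMeasureKL} is in hand; the only place that calls for a genuine (if standard) argument is the projection identity $R'[\Omega_i] = R[\Omega_i]$, which should be the minor sticking point. Everything else is either a definitional unwrapping of $D_{KL}$ for two uniform distributions or an invocation of results already proved in the paper.
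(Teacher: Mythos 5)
Your proposal is correct and follows essentially the same route as the paper: both exhibit the uniform (empirical) distribution on the join result $R'$ as a feasible point in the variational characterization $\J(\T)=\min_{Q\models\T}D_{KL}(P\|Q)$ of Theorem~\ref{prop:JMeasureKL} and compute its KL-divergence from $P_R$ to be $\log(|R'|/N)=\log(1+\rho(R,\schema))$. The only difference is that you spell out, via the projection identity $R'[\Omega_i]=R[\Omega_i]$ and Theorem~\ref{thm:AcyclicCharacterizationLee}, why this distribution models $\T$, a step the paper asserts "by construction."
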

\begin{proof}	
 From Theorem~\ref{prop:JMeasureKL} we have that
	\begin{equation}
		\J(\T)=D_{KL}(P||P_{\T}),
	\end{equation}
	where
	\begin{equation}
		P_{\T}=\argmin_{Q \models \T}D_{KL}(P||Q)
	\end{equation}
	Let us define
	\begin{equation}
		Q_{TU}=\argmin_{\substack{Q \models \T,\\ Q \mbox { is uniform}}}D_{KL}(P||Q),
	\end{equation}
	 and verify that such a distribution $Q_{TU}$ always exists: Let $R'\eqdef \join_{i=1}^m\proj{\Omega_i}(R)$. Let $Q_{TU}$ denote the empirical distribution over $R'$. By construction, $Q_{TU}$ is a uniform distribution (i.e., over tuples $R'$), and $Q_{TU}\models \T$.
	 
	 By definition, we have that $|R'|=N(1+\rho(R,\schema))$,
	 where $|R|=N$, and $\rho(R,\schema)$ is the loss of $\schema$ with respect to $R$ (see~\eqref{def:rho}).
	 
	By limiting the minimization region to uniform distributions, the minimum can only increase. Therefore:
	\begin{equation}
		\J(\T)=D_{KL}(P||P_{\T})\leq D_{KL}(P||Q_{TU})
	\end{equation}
	Evaluating the $KL$-divergence term on the right hand side, and using the fact that $Q_{TU}$ is uniform, we get:
	\begin{align}
		D_{KL}(P||Q_{TU})&=\sum_{r\in R}{P(r)\log{\frac{P(r)}{Q_{TU}(r)}}}\\
		&=\sum_{r\in R}{\frac{1}{N}\log{\frac{1/N}{1/(N+N\cdot\rho(R,\schema))}}}\\
		&=\sum_{r\in R}\frac{1}{N}\log{(1+\rho(R,\schema))}\\
		&=\log(1+\rho(R,\schema))		
	\end{align}
Hence, $\J(\T)\leq \log(1+\rho(R,\schema))$, and $\rho(R,\schema)\geq 2^{\J(\T)}-1$
\end{proof}

The following simple example shows that the lower bound of~\eqref{eq:lowerBound} is tight. That is, there exists a family of relation instances $R$, and a schema $\schema$ where $\J(\schema)=\log(1+\rho(R,\schema))$.
\begin{example}
\label{eq:tightness}
Let $\Omega=\set{A,B}$, where $\D(A)=\set{a_1,\dots,a_N}$ and $\D(B)=\set{b_1,\dots,b_N}$ where $\D(A)\cap \D(B)=\emptyset$. 
Let
\begin{equation}
R=\set{t_1=(a_1,b_1),\dots,t_N=(a_N,b_N)}    
\end{equation}
be a relation instance over $\Omega$. By definition, $P_R(t_i)=\frac{1}{N}$. 
Noting that $H(A)=H(B)=H(AB)=\log N$ (see~\eqref{eq:entropy}), we have that $I(A;B)=\log N$ (see~\eqref{eq:h:mutual:information}).
Now, consider the schema $\schema=\set{\set{A},\set{B}}$, and let $R'=\proj{A}(R)\join \proj{B}(R)$. Clearly, $|R'|=N^2$, and $\rho(R,\schema)=N-1$ (see~\eqref{def:rho}). In particular, we have that $\J(\schema)=I(A;B)=\log N=\log(1+\rho(R,\schema))$, and that this holds for every $N\geq 2$.
\eat{
and let $R=\set{t_1=(0,1),t_2=(1,0)}$ be a relation instance over $\Omega$. By definition, $P_R(t_1)=P_R(t_2)=\frac{1}{2}$. Noting that $H(A)=H(B)=H(AB)=\log 2$ (see~\eqref{eq:entropy}), we have that $I(A;B)=\log 2$ (see~\eqref{eq:h:mutual:information}).
Now, consider the schema $\schema=\set{\set{A},\set{B}}$, and let $R'=\proj{A}(R)\join \proj{B}(R)$. Clearly, $|R'|=4$, and $\rho(R,\schema)=1$ (see~\eqref{def:rho}). In particular, we have that $\J(\schema)=I(A;B)=\log 2=\log(1+\rho(R,\schema))$.
}
\end{example}
\eat{
We recall that $\Rel(\Omega)$ is the set of all possible relation instances over $\Omega$, and that for $R\in \Rel(\Omega)$, $P_R$ denotes the empirical distribution over $R$. The following follows immediately from Lemma~\ref{lem:LogRho}.

\begin{theorem}
Let $\schema=\set{\Omega_1,\dots,\Omega_m}$ denote an acyclic schema with junction tree $(\T,\jointreeMapFunction)$, and let $\J(\T)=c\geq 0$ be a given constant. Then, for \e{any} relation instance $R\in \Rel(\Omega)$, it holds that:
\begin{equation}
    \rho_{min}(\schema,c)\eqdef \min_{\substack{R'\in \Rel(\Omega): P_{R'} \models \T,\\
    D_{KL}(P_{R'}||P_R)=c}}\frac{|R'|-|R|}{|R|}\geq e^c-1
\end{equation}
where $e$ is the natural logarithm.
\end{theorem}
}

	\section{Spurious Tuples: An Upper bound based on mutual information }
\label{sec: relation between spurious and MI}
In the previous section, we have shown that given a relation $R$
and an acyclic schema $\schema$ defined by a join tree $\T$, it
holds that $\log(1+\rho(R,\schema))\geq\J(\T)$ (Lemma~\ref{lem:LogRho}). In this section, we derive an \textit{upper}
bound on $\rho(R,\schema)$ in terms of an information-theoretic measure.
To this end, recall that Theorem~\ref{thm:BoundsOnJ} shows that if $\{\Omega_{1},\ldots,\Omega_{m}\}$
are the bags of $\T$, then $\J(\T)\leq\sum_{i=2}^{m}I(\Omega_{1:i-1};\Omega_{i:m}\mid\Delta_{i})$.
That is, $\J(\T)$ is upper bounded by the sum of the conditional
mutual information of the $m-1$ MVDs in the support $\T$, given
by $\Delta_{i}\mvd\Omega_{1:i-1}|\Omega_{i:m}$, for $i\in[2,m]$\footnote{More accurately, the MVD should be written as $\phi_{i}\eqdef\Delta_{i}\mvd(\Omega_{1:i-1}\backslash\Delta_{i})|(\Omega_{i:m}\backslash\Delta_{i})$,
so that the bags are disjoint. However, it can be easily shown, using
the chain rule of the mutual information \cite[Theorem 2.5.2]{cover2012elements},
that $I(\Omega_{1:i-1};\Omega_{i:m}\mid\Delta_{i})=I(\Omega_{1:i-1}\backslash\Delta_{i};\Omega_{i:m}\backslash\Delta_{i}\mid\Delta_{i})$, and so we adopt the simplified notation for MVD.}. In this section, we relate this sum of conditional mutual information of the
$m-1$ MVDs in the support $\T$ to an approximate upper bound on
$\rho(R,\schema)$, which holds with high probability.

To this end, we begin by relating the relative number
of spurious tuples of the relation $R$ of a schema $\schema$, that
is $\rho(R,\schema)$, with the spurious tuples of each of the MVDs
in its support. Concretely, let $\phi_{i}\eqdef\Delta_{i}\mvd\Omega_{1:i-1}|\Omega_{i:m}$
be the $i$th MVD in the support of $\T$. Then, the relative number
of spurious tuples for $\phi_{i}$ is defined as
\begin{equation}
\rho(R,\phi_{i})\eqdef\frac{|\proj{\Omega_{1:i-1}}(R)\join\proj{\Omega_{i:m}}(R)|-|R|}{|R|}.
\end{equation}

\begin{prop}
\label{prop: Spurious tuples schema to MVDs}Let a relation $R$ be
given, and let $\schema$ be an acyclic schema over the attributes
of $R$ with join tree $\T$, whose support are the MVDs $\phi_{i}=\Delta_{i}\mvd\Omega_{1:i-1}|\Omega_{i:m}$,
for $i\in[2,m]$. Then, 
\begin{equation}
\log\left[1+\rho(R,\schema)\right]\leq\sum_{i=2}^{m}\log\left[1+\rho(R,\phi_{i})\right].\label{eq: rho of schema vs the MVDs on its support}
\end{equation}
\end{prop}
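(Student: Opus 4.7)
My plan is to reduce the inequality to a one-step bound by introducing a telescoping sequence of intermediate joins that interpolate between $R$ and the full acyclic join $J = \Join_{i=1}^m R[\Omega_i]$. Concretely, for $k \in [0, m-1]$ set
\[
J^{(k)} \eqdef \Bigl(\Join_{i=1}^{k} R[\Omega_i]\Bigr) \Join R[\Omega_{k+1:m}],
\]
where the empty left-hand join is trivial. Then $J^{(0)} = R$, $J^{(1)} = R[\Omega_1] \Join R[\Omega_{2:m}]$ is exactly the MVD-join of $\phi_2$, and $J^{(m-1)} = J$. Each $J^{(k+1)}$ is obtained from $J^{(k)}$ by replacing the factor $R[\Omega_{k+1:m}]$ with $R[\Omega_{k+1}] \Join R[\Omega_{k+2:m}]$, and since the projection $R[\Omega_{k+1:m}]$ is always contained in its two-way MVD-join $R[\Omega_{k+1}] \Join R[\Omega_{k+2:m}]$, we have $J^{(k)} \subseteq J^{(k+1)}$. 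This yields the telescoping identity
\[
\log\bigl[1 + \rho(R,\schema)\bigr] \;=\; \log\frac{|J^{(m-1)}|}{|J^{(0)}|} \;=\; \sum_{k=0}^{m-2} \log\frac{|J^{(k+1)}|}{|J^{(k)}|},
\]
so it suffices to show, for each $k \in [0,m-2]$,
\[
\frac{|J^{(k+1)}|}{|J^{(k)}|} \;\le\; 1 + \rho(R, \phi_{k+2}).
\]
For $k=0$ this is a literal equality from the definition of $\phi_2$; the content of the proof lies in the cases $k \ge 1$.

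Next I would reinterpret each ratio as a spurious ratio of $J^{(k)}$ itself. Using the running intersection property and the DFS ordering, I plan to verify the two projection identities
\[
J^{(k)}[\Omega_{k+2:m}] \;=\; R[\Omega_{k+2:m}], \qquad J^{(k)}[\Omega_{1:k+1}] \;=\; \Join_{i=1}^{k+1} R[\Omega_i].
\]
The first holds because every tuple in $R[\Omega_{k+1:m}]$ can be extended consistently through the first $k$ bags of the tree, and the second because projecting the acyclic join that builds $J^{(k)}$ onto the connected subtree of the first $k+1$ bags gives the corresponding acyclic join. Combining them yields $|J^{(k+1)}| = \bigl|J^{(k)}[\Omega_{1:k+1}] \Join J^{(k)}[\Omega_{k+2:m}]\bigr|$, so $|J^{(k+1)}|/|J^{(k)}| = 1 + \rho(J^{(k)}, \phi_{k+2})$, and the required bound becomes the monotonicity statement
\[
\rho(J^{(k)}, \phi_{k+2}) \;\le\; \rho(R, \phi_{k+2}).
\]

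The heart of the proof is this monotonicity statement, which says that applying an MVD to an already over-decomposed relation $J^{(k)}$ produces no more relative spurious tuples than applying it to the original $R$. The intuition is that $J^{(k)}$ has already "paid the price" for MVDs $\phi_2,\ldots,\phi_{k+1}$: its right-side marginal $R[\Omega_{k+2:m}]$ is unchanged while its total size has grown, so the relative violation of $\phi_{k+2}$ can only shrink. My plan is to factor each of $|J^{(k+1)}|$, $|J^{(k)}|$, $M_{k+2} \eqdef |R[\Omega_{1:k+1}]\Join R[\Omega_{k+2:m}]|$ and $N$ through the values $\delta$ of $\Delta_{k+2}$, writing each as a sum of products indexed by $\delta$, and then to argue the aggregate inequality by a weighted-average comparison, using crucially that the "right-side" $\delta$-weights $q(\delta) = |R[\Omega_{k+2:m}][\Delta_{k+2}{=}\delta]|$ are identical on both sides and that the enlargement of the left projection from $R[\Omega_{1:k+1}]$ to $J^{(k)}[\Omega_{1:k+1}]$ is correlated, slice by slice, with the enlargement of $|J^{(k)}|$ itself.

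The main obstacle is exactly this key lemma: a naive pointwise bound on the $\Delta_{k+2}$-indexed slices fails (small examples, where individual slices go in opposite directions, confirm this), so the argument must genuinely exploit the global structure of $J^{(k)}$ as a nested acyclic join of projections of $R$, rather than reduce to a single rearrangement inequality. Once the key lemma is established, summing the resulting bounds $\log(|J^{(k+1)}|/|J^{(k)}|) \le \log(1 + \rho(R, \phi_{k+2}))$ for $k=0,1,\ldots,m-2$ gives exactly the claimed inequality.
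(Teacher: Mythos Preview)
Your telescoping reduction is structurally the same as the paper's induction (which merges a leaf into its parent rather than unrolling the product), and you have correctly isolated the decisive step: the monotonicity claim $\rho(J^{(k)},\phi_{k+2})\le\rho(R,\phi_{k+2})$. The paper arrives at the identical inequality, written as $|R''|\le|R'|\cdot[1+\rho(R,\phi_{k+1})]$, and justifies it only by the phrase ``by the induction hypothesis.'' That attribution is not valid --- the induction hypothesis bounds $\rho(R,\T')$ in terms of the first $k-1$ MVD losses, it says nothing about how the $(k{+}1)$st MVD loss transfers from $R$ to the intermediate join $R'$.

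More to the point, the monotonicity lemma you are trying to prove is \emph{false}, and so is the proposition as stated. Take $\Omega=\{A,B,C\}$, the path schema $\schema=\{\{A\},\{B\},\{C\}\}$ (so $\Delta_2=\Delta_3=\emptyset$), and
\[
R=\bigl\{(a_1,b_1,c_1),\,(a_1,b_2,c_1),\,(a_1,b_2,c_2),\,(a_2,b_2,c_1),\,(a_2,b_2,c_2)\bigr\}.
\]
Then $|R|=5$, $|R[A]{\times}R[B]{\times}R[C]|=8$, $|R[A]{\times}R[BC]|=6$, and $|R[AB]{\times}R[C]|=6$. Hence $1+\rho(R,\schema)=\tfrac85$ while $(1+\rho(R,\phi_2))(1+\rho(R,\phi_3))=\tfrac{36}{25}<\tfrac85$, contradicting \eqref{eq: rho of schema vs the MVDs on its support}. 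In your notation this shows up at $k=1$: your projection identities both hold (it is a path), yet $|J^{(2)}|/|J^{(1)}|=8/6=\tfrac43>\tfrac65=1+\rho(R,\phi_3)$, so the monotonicity inequality fails. Your observation that ``a naive pointwise bound on the $\Delta_{k+2}$-indexed slices fails'' is correct, but no refinement will rescue it: the aggregate inequality itself is wrong. The statement needs either an additional hypothesis or a reformulation in which the MVD losses on the right-hand side are taken with respect to the intermediate relations rather than $R$.
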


\begin{proof}
We prove by induction on the number of MVDs (or nodes) in the schema. Let $m$ be the number nodes (and $m-1$ be the number of MVDs) in the schema. The base
case $m-1\leq 1$ is immediate. Assuming it holds for $m-1<k$, we prove
the claim for $m-1=k$. Let $\T$ be a join tree representing $k$ MVDs (and hence
$k+1$ nodes). Let $u_{k+1}$ be a leaf in this join tree with parent
$p\eqdef\parent(u_{k+1})$. Let $\T'$ be the join tree where nodes
$u_{k+1}$ and $p$ are merged to the node $u'$ where $\Omega(u')=\Omega(u_{k+1})\cup\Omega(p)$.
Hence, by the induction hypothesis 
\begin{equation}
1+\rho(R,\T')\leq\prod_{i=2}^{k}\left[1+\rho(R,\phi_{i})\right].\label{eq: induction hypothesis}
\end{equation}
Now, let $R'=\join_{i=1}^{k}R[\Omega_{i}]$. Consider the MVD $\phi_{k+1}=\Omega(u_{k+1})\cap\Omega(p)\mvd\Omega(u_{k+1})|\Omega_{1,k}$.
Then, $R''\eqdef \proj{\Omega_{1,k}}(R')\join\proj{\Omega_{k+1}}(R')$ and
by the induction hypothesis, $|R''|\leq|R'|\cdot[1+\rho(R,\phi_{k})]$. By (\ref{eq: induction hypothesis}),
\begin{equation}
|R'|\leq|R|\cdot\prod_{i=2}^{k}\left[1+\rho(R,\phi_{i})\right],
\end{equation}
and hence $|R''|\leq|R|\cdot\prod_{i=2}^{k+1}[1+\rho(R,\phi_{i})]$,
which proves claim. 
\end{proof}
Proposition \ref{prop: Spurious tuples schema to MVDs} reduces the
problem of upper bounding $\log[1+\rho(R,\schema)]$ to bounding each
of the terms $\log[1+\rho(R,\phi_i)]$ in \eqref{eq: rho of schema vs the MVDs on its support},
each of them corresponding to the relative number of spurious tuples of
the $m-1$ MVDs in the support of $\T$. Considering an arbitrary
MVD, which we henceforth denote for simplicity by $\phi\eqdef C\mvd A|B$, Lemma~\ref{lem:LogRho} implies the \emph{lower} bound $\log[1+\rho(R,\phi_{i})]\geq I(A;B\mid C)$,
since an MVD is a simple instance of an acyclic schema. However, obtaining
an upper bound on $\log[1+\rho(R,\phi_{i})]$ in terms of $I(A;B\mid C)$
is challenging because the mutual information $I(A;B\mid C)$ varies
wildly for an MVD $\phi$ even when $\rho(R,\phi)$ and the domains
sizes $d_{A},d_{B}$ and $d_{C}$ remain constant (where $d_{A}\eqdef|\proj{A}(R)|$,
and similarly for $d_{B}$ and $d_{C}$). Figure~\ref{fig:MI_vs_rho}
illustrates this phenomenon in the simple case in which $d_{C}=1$
and so $C$ is a degenerated random variable, and $d_{A}=d_{B}$.
In other words, the value $I(A;B\mid C)$ depends on the actual contents
of the relation instance $R$. However, while $I(A;B\mid C)$ might
not be an accurate upper bound to $\log[1+\rho(R,\phi)]$ for an
\emph{arbitrary} relation, it may hold that it is an \emph{approximate}
upper bound for \emph{most} relations. Therefore, we next propose
a \emph{random relation model}, in which the tuples of the relation
$R$ are chosen at random. We then establish an upper bound
on $\log[1+\rho(R,\phi)]$ that holds with high probability over
this randomly chosen relation. 
\begin{defn}[Random relation model]
\label{def: random relation model}Let $\Omega\eqdef\{X_{1},\ldots,X_{n}\}$
be a set of attributes with domains $\ensuremath{\D(X_{1}),\dots,\D(X_{n})}$,
and assume w.l.o.g. that $\D(X_{i})\eqdef[d_{i}]$ for $\{d_{i}\}_{i=1}^{n}\subset\mathbb{N}_{+}$.
Let $N\in\mathbb{N}_{+}$ be given such that $0<N\leq\prod_{i=1}^{n}d_{i}$.
Let $S$ be a set of $N$ tuples chosen uniformly at random from $\bigtimes_{i=1}^{n}[d_{i}]$,
without replacement. Given $S$, we let $P_{S}$ denote the empirical
distribution over $S$: 
\begin{equation}
P_{S}\left[\bigcap_{i=1}^{n}\{X_{i}=\ell_{i}\}\right]=\begin{cases}
\frac{1}{N}, & (\ell_{1},\ell_{2},\ldots,\ell_{n})\in S\\
0, & \text{otherwise}
\end{cases}\label{eq:Ps}
\end{equation}
for any $(\ell_{1},\ell_{2},\ldots,\ell_{n})\in\bigtimes_{i=1}^{n}[d_{i}]$. 
\end{defn}
In other words, in the random relational model $R$ is chosen uniformly at random from
the set of possible relations of size $N$, that is, from the set
$\rel(\Omega)\cap\{R\colon|R|=N\}$. 
The next proposition states that the existence of a high probability bound on the relative number of spurious tuples associated with an arbitrary MVD $\phi$, implies the existence of a high-probability upper bound on the relative number of spurious tuples $\log[1+\rho(R,\schema)]$ associated with 
an acyclic schema $\schema$.
\begin{prop}
\label{prop:general upper bound on relative number of spurious tuples}Let $\epsilon(\phi,N,\delta)\geq0$ where $\phi\eqdef C\mvd A|B$
is an MVD, $\delta\in(0,1)$, and $R$ is a random relation over attributes $ABC$, where $|R|=N$. Let $\schema=\{\Omega_{1},\ldots,\Omega_{m}\}$ be an acyclic schema over the attributes of $R$ with join tree $\T$. If the random relation $R$ satisfies $\log[1+\rho(R,\phi)]\leq I(A;B\mid C)+\epsilon(\phi,N,\delta)$,
with probability larger than $1-\frac{\delta}{m-1}$, for all MVDs $\phi_i\eqdef \Delta_{i}\mvd\Omega_{1:i-1}|\Omega_{i:m}$ in the support of~~~~$\schema$. Then:
\begin{align}
 \log\left[1+\rho(R,\schema)\right] 
& \leq\sum_{i=2}^{m}I(\Omega_{1:i-1};\Omega_{i:m}\mid\Delta_{i})+\epsilon_i\eat{\epsilon\left(\Delta_{i}\mvd\Omega_{1:i-1}|\Omega_{i:m},N,\frac{\delta}{m-1}\right)}\label{eq: general upper bound on spuious tuples given high probability bound on MVDs}
\\
 & \leq(m-1)\J(\T)+\sum_{i=2}^{m}\epsilon_i\eat{\epsilon\left(\Delta_{i}\mvd\Omega_{1:i-1}|\Omega_{i:m},N,\frac{\delta}{m-1}\right)}\label{eq: general upper bound on spuious tuples given high probability bound on MVDs - using J}
\end{align}
with probability $1-\delta$, where $\epsilon_i\eqdef \epsilon\left(\Delta_{i}\mvd\Omega_{1:i-1}|\Omega_{i:m},N,\frac{\delta}{m-1}\right)$.
\end{prop}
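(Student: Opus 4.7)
The proof is essentially a deterministic inequality (Proposition~\ref{prop: Spurious tuples schema to MVDs}) followed by a union bound to combine the $m-1$ high-probability guarantees, and then a final application of the left inequality in Theorem~\ref{thm:BoundsOnJ}. So I would proceed in four short steps.

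\textbf{Step 1 (deterministic reduction to the support).} First I would invoke Proposition~\ref{prop: Spurious tuples schema to MVDs} to obtain, pointwise for every realization of the random relation $R$,
\begin{equation}
\log\left[1+\rho(R,\schema)\right]\leq\sum_{i=2}^{m}\log\left[1+\rho(R,\phi_i)\right],
\end{equation}
where $\phi_i=\Delta_i\mvd\Omega_{1:i-1}\mid\Omega_{i:m}$ are the $m-1$ MVDs in the support of~$\T$. This reduces the problem to controlling each summand on the right.

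\textbf{Step 2 (union bound over the support).} By the hypothesis of the proposition, for each $i\in[2,m]$ the event
\begin{equation}
\mathcal{E}_i\eqdef\left\{\log[1+\rho(R,\phi_i)]\leq I(\Omega_{1:i-1};\Omega_{i:m}\mid\Delta_i)+\epsilon_i\right\}
\end{equation}
holds with probability at least $1-\frac{\delta}{m-1}$, where $\epsilon_i\eqdef\epsilon(\phi_i,N,\frac{\delta}{m-1})$. Applying the union bound over the $m-1$ events $\mathcal{E}_2,\dots,\mathcal{E}_m$ gives $\Pr[\bigcap_{i=2}^m\mathcal{E}_i]\geq 1-\delta$. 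On this joint event, combining with Step~1 yields \eqref{eq: general upper bound on spuious tuples given high probability bound on MVDs}.

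\textbf{Step 3 (from mutual informations to $\J(\T)$).} For the second inequality \eqref{eq: general upper bound on spuious tuples given high probability bound on MVDs - using J}, I would appeal to the lower-bound half of Theorem~\ref{thm:BoundsOnJ}, which states $I(\Omega_{1:i-1};\Omega_{i:m}\mid\Delta_i)\leq\J(\T)$ for every $i\in[2,m]$. Summing these $m-1$ inequalities on the event $\bigcap_i\mathcal{E}_i$ replaces $\sum_{i=2}^m I(\Omega_{1:i-1};\Omega_{i:m}\mid\Delta_i)$ by $(m-1)\J(\T)$, leaving the error terms $\sum_i\epsilon_i$ untouched.

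\textbf{Main obstacle.} I do not expect a real obstacle here: the statement is designed so that the deterministic schema-to-MVD inequality of Proposition~\ref{prop: Spurious tuples schema to MVDs} aligns exactly with the per-MVD high-probability hypothesis, and the $\frac{\delta}{m-1}$ chosen in the hypothesis is precisely what makes the union bound produce the claimed confidence $1-\delta$. The only minor point to be careful about is the simplified notation $\phi_i=\Delta_i\mvd\Omega_{1:i-1}\mid\Omega_{i:m}$ versus the genuinely disjoint form noted in the footnote; but as observed there, the chain rule gives $I(\Omega_{1:i-1};\Omega_{i:m}\mid\Delta_i)=I(\Omega_{1:i-1}\setminus\Delta_i;\Omega_{i:m}\setminus\Delta_i\mid\Delta_i)$, so both formulations coincide and no adjustment of the hypothesis is needed.
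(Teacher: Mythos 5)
Your proposal is correct and follows essentially the same route as the paper: the deterministic reduction via Proposition~\ref{prop: Spurious tuples schema to MVDs}, a union bound over the $m-1$ MVDs in the support, and the passage to $(m-1)\J(\T)$ via the left-hand inequality of \eqref{eq:BeeriGen} in Theorem~\ref{thm:BoundsOnJ}. Your remark about the disjoint versus simplified form of the MVDs is consistent with the paper's footnote and requires no change to the argument.
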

\begin{proof}
For the MVD $\phi_i\eqdef \Delta_{i}\mvd\Omega_{1:i-1}|\Omega_{i:m}$, it
holds that 
\begin{equation}
\log[1+\rho(R,\phi)]\leq I(\Omega_{1:i-1};\Omega_{i:m}\mid\Delta_{i})+\epsilon\left(\phi_i,N,\frac{\delta}{m-1}\right)
\end{equation}
with probability larger than $1-\frac{\delta}{m-1}$. Then, \eqref{eq: general upper bound on spuious tuples given high probability bound on MVDs} 
follows from Proposition \ref{prop: Spurious tuples schema to MVDs},
and a union bound over the $m-1$ MVDs $\{\phi_{i}\}_{i=2}^{m}$ in
the support of $\schema$. The bound (\ref{eq: general upper bound on spuious tuples given high probability bound on MVDs - using J})
follows from (\ref{eq:BeeriGen}) in Theorem~\ref{thm:BoundsOnJ}.
\end{proof}
Hence, the problem of deriving an upper bound on $\log[1+\rho(R,\schema)]$,
which holds with high probability, is reduced to the problem of showing that $\log[1+\rho(R,\phi)]\leq I(A;B\mid C)+\epsilon(\phi,N,\delta)$
holds with high probability for an MVD $\phi\eqdef C\mvd A|B$, in the setting of the random relational model, assuming
that the relation size is fixed to $N$.
In other words, it now suffices to prove the probabilistic upper bound for a single MVD in the
random relation
model (Definition~\ref{def: random relation model}). 
\eat{
\begin{defn}[Random relation model]
\label{def: random relation model}Let $\Omega\eqdef\{X_{1},\ldots,X_{n}\}$
be a set of attributes with domains $\ensuremath{\D(X_{1}),\dots,\D(X_{n})}$,
and assume w.l.o.g. that $\D(X_{i})\eqdef[d_{i}]$ for $\{d_{i}\}_{i=1}^{n}\subset\mathbb{N}_{+}$.
Let $N\in\mathbb{N}_{+}$ be given such that $0<N\leq\prod_{i=1}^{n}d_{i}$.
Let $S$ be a set of $N$ tuples chosen uniformly at random from $\bigtimes_{i=1}^{n}[d_{i}]$,
without replacement. Given $S$, we let $P_{S}$ denote the empirical
distribution over $S$: 
\begin{equation}
P_{S}\left[\bigcap_{i=1}^{n}\{X_{i}=\ell_{i}\}\right]=\begin{cases}
\frac{1}{N}, & (\ell_{1},\ell_{2},\ldots,\ell_{n})\in S\\
0, & \text{otherwise}
\end{cases}\label{eq:Ps}
\end{equation}
for any $(\ell_{1},\ell_{2},\ldots,\ell_{n})\in\bigtimes_{i=1}^{n}[d_{i}]$. 
\end{defn}
}

In what follows, we focus on a single MVD, denoted $\phi\eqdef C\mvd A|B$, and where $d_{A},d_{B}$ and $d_{C}$ are the domain sizes of $A$, $B$, and $C$, respectively. \eat{It should be noted that a random
relation from the model of Definition \ref{def: random relation model}
induces the same random model on the MVD $\phi$.} Then, for
any $S\subseteq[d_{A}]\times[d_{B}]\times[d_{C}]$
\begin{equation}
P_{S}\left[A=a,B=b,C=c\right]=\begin{cases}
\frac{1}{N}, & (a,b,c)\in S\\
0, & \text{otherwise}
\end{cases}\label{eq:Ps MVD}
\end{equation}
for any $(a,b,c)\in[d_{A}]\times[d_{B}]\times[d_{C}]$, and the relation
is such that the set $S$ is chosen uniformly at random from $[d_{A}]\times[d_{B}]\times[d_{C}]$
from all possible sets of size $N$. While both the domain sizes $d_{A},d_{B}$
and $d_{C}$ and relation size $N$ are fixed in this model,
the mutual information $I(A;B\mid C)$ is a random variable due to
the random choice of the set $S$. Specifically, the random relation
instance $S\subseteq[d_{A}]\times[d_{B}]\times[d_{C}]$ where $|S|=N$,
is a random variable, and each specific realization $S=s$, defines
a triplet of random variables $A_{s}\eqdef\proj{A}(s)$, $B_{s}\eqdef\proj{B}(s)$
and $C_{s}\eqdef\proj{C}(s)$. Consequently, every such set $s$
defines various information theoretic measures, such as $H(A_{s})$,
$I(A_{s};B_{s}\mid C_{s})$, and so on. Furthermore, a random choice
of $S$ makes these information measures random quantities themselves,
for example, $H(A_{S})$ and $I(A_{S};B_{S}\mid C_{S})$ are random
variables.\eat{ due to the random choice of $S$.} In a similar fashion,
if we let $R_{S}$ denote the random relation defined by $S$, then
$\rho(R_{S},\phi)$ is again a random variable. Our main result regarding
the mutual information of an MVD is as follows:

\begin{theorem}[Confidence bound of the random mutual information of an MVD]
\label{thm: confidence of MI of MVD}Let\eat{ a required probability } $\delta\in(0,1)$. \eat{
be given.} Assume w.l.o.g. that $d_{A}\geq d_{B}$, denote $\overline{d}\eqdef\max\{d_{A},d_{C}\}$
and assume further that 
\begin{equation}
N\geq256d_{A}\overline{d}\log\left(\frac{384\overline{d}}{\delta}\right).\label{eq: assumption on N in MVD theorem}
\end{equation}
Let 
\begin{equation}
\epsilon^{*}\left(\phi,N,\delta\right)\eqdef60\sqrt{\frac{d_{A}\overline{d}\log^{3}\left(\frac{6Nd_{C}}{\delta}\right)}{N}}
\end{equation}
If $R_{S}$ is drawn from the random relation model of Definition
\ref{def: random relation model}, then 
\begin{equation}
\log\left[1+\rho(R_{S},\phi)\right]\leq I(A_{S};B_{S}\mid C_{S})+\epsilon^{*}\left(\phi,N,\delta\right)
\end{equation}
with probability larger than $1-\delta$. 
\end{theorem}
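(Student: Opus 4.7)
The plan is to establish the bound through a per-$C$-value decomposition followed by a concentration argument in the random relation model. For each $c \in [d_{C}]$, let $N_{c}$ denote the number of tuples of $S$ with $C = c$, and let $n_{A}(c)$ and $n_{B}(c)$ denote the numbers of distinct $A$- and $B$-values appearing among them. A direct calculation gives
\begin{equation}
1 + \rho(R_{S}, \phi) = \frac{1}{N}\sum_{c} n_{A}(c)\, n_{B}(c),
\end{equation}
and since the empirical $P_{S}$ assigns mass $1/N$ to each tuple of $S$, we have $H(A_{S}, B_{S} \mid C_{S} = c) = \log N_{c}$. Combining these yields the pointwise identity
\begin{equation}
\log \frac{n_{A}(c)\, n_{B}(c)}{N_{c}} = I(A_{S}; B_{S} \mid C_{S} = c) + G_{A}(c) + G_{B}(c),
\end{equation}
where $G_{A}(c) \eqdef \log n_{A}(c) - H(A_{S} \mid C_{S} = c) \geq 0$ and $G_{B}(c) \eqdef \log n_{B}(c) - H(B_{S} \mid C_{S} = c) \geq 0$ measure how far the conditional marginals of $A$ and $B$ on $\{C = c\}$ are from uniform on their supports.

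The first step will be to control $G_{A}(c)$ and $G_{B}(c)$ uniformly in $c$. Let $N_{a,c}$ count the tuples of $S$ with first coordinate $a$ and third coordinate $c$; under the random relation model each $N_{a,c}$ is hypergeometric with mean $\mu \eqdef N/(d_{A} d_{C})$. A Bernstein-type deviation inequality combined with a union bound over the $d_{A} d_{C} \leq d_{A} \bar{d}$ cells will yield, with probability at least $1 - \delta/2$, $|N_{a,c} - \mu| \leq O(\sqrt{\mu \log(d_{A} d_{C}/\delta)})$ for every $(a,c)$. The hypothesis $N \geq 256\, d_{A} \bar{d}\, \log(384 \bar{d}/\delta)$ is calibrated so this deviation is at most $\mu/2$, guaranteeing $N_{a,c} > 0$ and hence $n_{A}(c) = d_{A}$ for every $(a,c)$. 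A second-order Taylor expansion of $-x \log x$ around $1/d_{A}$ --- whose first-order contribution vanishes because $\sum_{a} N_{a,c} = N_{c}$ --- then translates this cell-count concentration into a uniform estimate $G_{A}(c) = O(d_{A} \bar{d} \log(d_{A} d_{C}/\delta)/N)$, with the symmetric argument (using $d_{B} \leq d_{A}$) giving the same bound for $G_{B}(c)$.

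The second step is aggregation: I must bound $\log(1+\rho(R_{S}, \phi)) = \log \sum_{c} (N_{c}/N) \cdot 2^{I(A_{S}; B_{S} \mid C_{S} = c) + G_{A}(c) + G_{B}(c)}$ by $I(A_{S}; B_{S} \mid C_{S}) + \epsilon^{*}$. The principal difficulty is that Jensen's inequality goes in the unfavourable direction, $\log \expectation[X] \geq \expectation[\log X]$, so a naive linearization is not available. On the high-probability event of the previous step the exponent equals $\log(d_{A} d_{B}/N_{c})$, and its variation across $c$ is governed entirely by the fluctuation of $\log N_{c}$, whose range is controlled by the concentration of $N_{c}$ around $N/d_{C}$ (another hypergeometric deviation). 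Applying Hoeffding's lemma to the log moment-generating function of $\log N_{c}$ under the weighting $P_{S}$ then bounds the resulting Jensen gap by a quantity of the same order as $\epsilon^{*}$, completing the proof.

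The main technical obstacle is this last step: extracting a bound of the claimed order $\sqrt{d_{A}\bar{d}/N}$ on the Jensen gap while coordinating three events under a single $\delta$-budget union bound --- the cell-count concentration of $N_{a,c}$, the column-count concentration of $N_{c}$, and the bounded-range condition needed to invoke Hoeffding. Each of the three events contributes a logarithmic factor, producing the $\log^{3}(6 N d_{C}/\delta)$ term in $\epsilon^{*}$, and the $\sqrt{d_{A} \bar{d}/N}$ scaling reflects how the Bernstein deviation $\sqrt{\log/\mu}$ at the single-cell level propagates through to the aggregate bound.
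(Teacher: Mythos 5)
Your proposal is correct in its essentials, but it takes a genuinely different route from the paper. The paper conditions on each value $C=\ell$, applies its entropy-concentration machinery (Theorem \ref{thm: confidence bound for random entropy}, proved via functional entropy, logarithmic Sobolev inequalities, and Poissonization) to every slice $\sigma_{C=\ell}(S)$ together with a separate bound on $H(C_S)$, and recombines the slices with the log sum inequality. You instead observe that the joint conditional entropy is deterministic, $H(A_S,B_S\mid C_S=c)=\log N_c$, so only the marginal entropy deficits $G_A(c),G_B(c)$ and a Jensen gap over $c$ need control; and that the hypothesis \eqref{eq: assumption on N in MVD theorem} forces every marginal cell $(a,c)$ and $(b,c)$ to have expected occupancy $N/(d_A d_C)\geq 256\log(384\overline{d}/\delta)$, so multiplicative concentration plus a union bound makes every conditional marginal nearly uniform. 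This is a legitimate and substantially more elementary argument, and if you carry the constants through it actually yields error terms of order $d_A d_C\log(\overline{d}/\delta)/N$ (the entropy deficits are chi-squared--type quantities, quadratic in the relative cell deviation, and the Hoeffding/Jensen gap is quadratic in the range of $\log N_c$), which is \emph{stronger} than the stated $\epsilon^*=\Theta(\sqrt{d_A\overline{d}\log^3(\cdot)/N})$ throughout the regime where the theorem is non-vacuous; your closing paragraph, which tries to recover the $\sqrt{d_A\overline{d}/N}$ and $\log^3$ form from your own steps, is therefore slightly miscalibrated, but in the harmless direction. The paper's heavier machinery is what one would need if the per-cell means were allowed to be $o(\log(1/\delta))$; under \eqref{eq: assumption on N in MVD theorem} they are not.

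One point requires care. The concentration $|N_{a,c}-\mu|\lesssim\sqrt{\mu\log(d_A d_C/\delta)}$ needs a \emph{multiplicative} (Bernstein/Chernoff-type) tail for the hypergeometric, with the variance proxy $\mu=N/(d_A d_C)$ in the exponent's denominator. The only hypergeometric tail the paper supplies is Serfling's additive bound (Lemma \ref{lem: Serfling's inequality}), whose denominator is the number of draws $N$; that version would force $N\gtrsim d_A^2 d_C^2\log(\cdot)$, which \eqref{eq: assumption on N in MVD theorem} does not guarantee. You must therefore invoke the transfer of binomial Chernoff bounds to sampling without replacement (Hoeffding's convex-order domination), which is standard but should be stated explicitly, since the whole calibration of the hypothesis against $\mu/2$ hinges on it.
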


The proof of Theorem \ref{thm: confidence of MI of MVD} is fairly
complicated, and is discussed in detail in Section \ref{subsec:The-proof-of-MVD-theorem}. Theorem \ref{thm: confidence of MI of MVD} shows that,
with high probability, the upper bound $\log\left[1+\rho(R_{S},\phi)\right]\leq I(A_{S};B_{S}\mid C_{S})$
approximately holds, up to an additive factor of 
$\tilde{O}(\sqrt{d_{A}\max\{d_{A},d_{C}\}/N})$,
where the $\tilde{O}(\cdot)$ hides logarithmic terms. This result is suitable for large domain
sizes, and when the number of tuples $N$ is proportional to the
domain sizes. \eat{In principle, the number of spurious tuples $N$ can
be arbitrary in the regime $0\leq N\leq d_{A}d_{B}d_{C}$.} More accurately, the bound
holds whenever $N=\tilde{\Omega}(d_{A}d_{C})$, where $\tilde{\Omega}(\cdot)$
hides logarithmic terms (condition (\ref{eq: assumption on N in MVD theorem})),
which is a mild condition when targeting a low fraction of
spurious tuples. So, when $\delta$ is fixed to some desired reliability,
and the qualifying condition (\ref{eq: assumption on N in MVD theorem})
holds, then the deviation term in the claim of Theorem \ref{thm: confidence of MI of MVD}
is given by $\epsilon^{*}(\phi,N,\delta)=\tilde{O}(\sqrt{\max\{d_{A}^{2},d_{A}d_{C})/N})$.
Hence, when $N$ increases
as $N=\tilde{\omega}(\max\{d_{A}^{2},d_{A}d_{C}\})$, then $\epsilon^{*}(\phi,N,\delta)$
vanishes. For example, if $d_{A}=d_{B}=d_{C}\equiv d$,
then the deviation term is $\epsilon^{*}(\phi,N,\delta)=O(\sqrt{\frac{d^{2}\log^{3}(Nd)}{N}})$.
\eat{If the common domain size $d$ increases, and} and this deviation term vanishes if $N=\omega(d^{2}\log^{3}(Nd))$. As a more concrete example, if
$N=\tfrac{1}{2}d^{3}$, then the deviation term is $\epsilon^{*}(\phi,N,\delta)=O(\sqrt{\frac{\log^{3}(d)}{d}})$
which vanishes at a rather fast rate with increasing $d$. Moreover,
the dependency of the deviation term in $\delta$ is mild, and scales
as $\log^{3/2}(1/\delta)$ which is close to a sub-exponential
dependence.\footnote{For sub-exponential random variables, the dependence on $\delta$
is $\log(\frac{1}{\delta})$ \cite{boucheron2013concentration}.} 
\begin{figure}
\centering{} \includegraphics[scale=0.2]{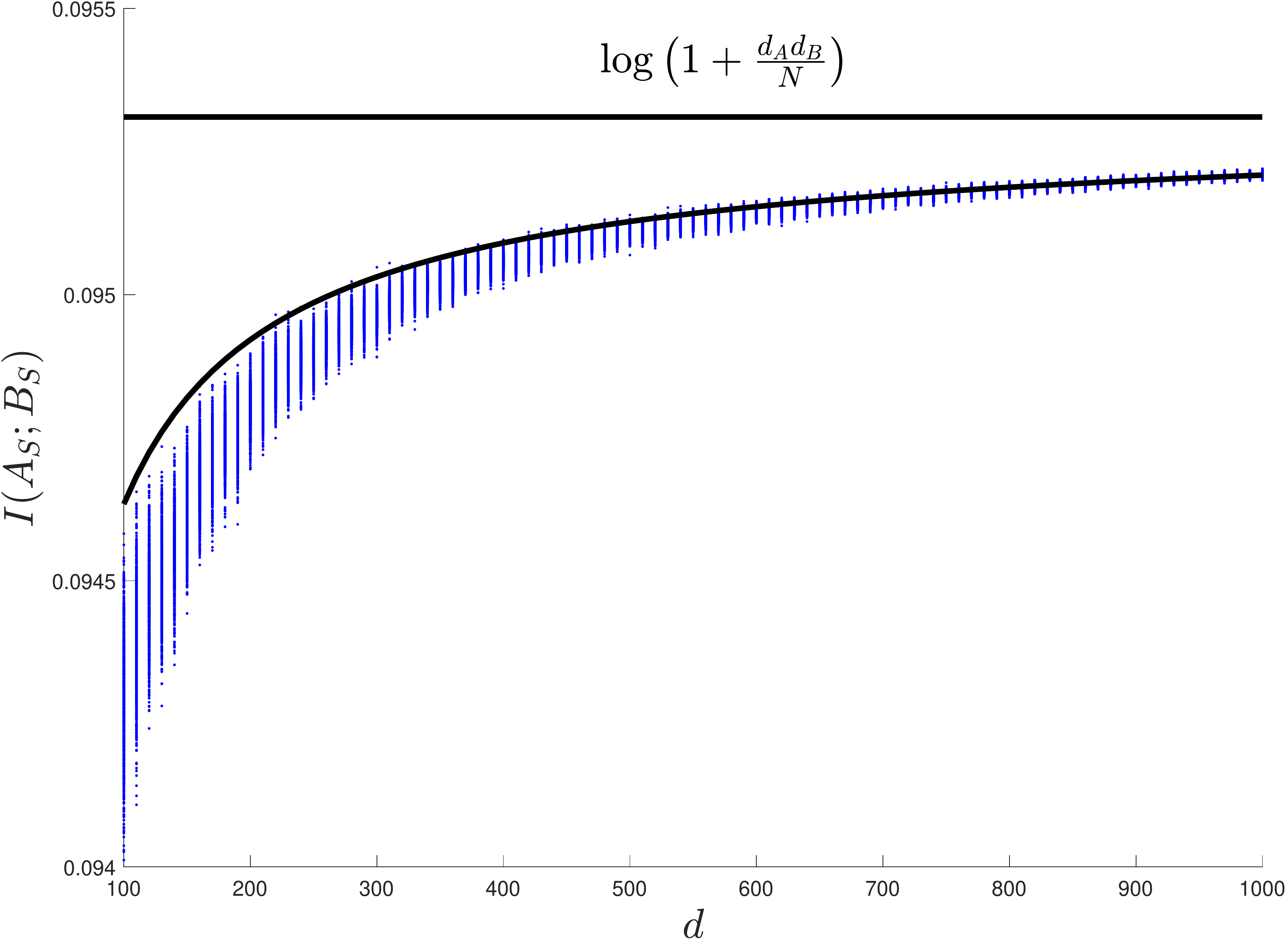}
\caption{Mutual information scattering vs. $\log(1+\rho)$ for $d_{C}=1$ and
$d_{A}=d_{B}=d$. In this experiment, we fixed the percentage of spurious tuples $\rho(R,\schema)$, and accordingly randomly generated $N\eqdef \frac{d_Ad_B}{(1+\rho(R,\schema))}$ tuples from the random relation model (Definition~\ref{def: random relation model}), and plotted the resulting mutual information. As can be seen, as the database grows, the mutual informatio approaches $\log(1+\rho)$.\label{fig:MI_vs_rho}}
\end{figure}
\eat{
\batya{to remove from here to the end of this section. Seems like a repetition of Proposition 5.2}
Combining the general Proposition \ref{prop:general upper bound on relative number of spurious tuples}
with the result of Theorem \ref{thm: confidence of MI of MVD} leads
to the following upper bound:

\begin{theorem}[Confidence bound of the random mutual information of a schema]
\label{thm: confidence of MI of schema}Let $\{X_{1},\ldots X_{n}\}$
be a set of attributes from the domain $\bigtimes_{i=1}^{n}[d_{i}]$.
Let $R_{S}$ be a relation drawn randomly from the random relation
model of Definition \ref{def: random relation model} of size $N$,
let $\T$ be a join tree, and assume that the condition (\ref{eq: assumption on N in MVD theorem})
is satisfied for each of the MVD in the support of $\T$. Then
\begin{multline}
\log\left[1+\rho(R_{S},\schema)\right]\\
\leq\sum_{i=2}^{m}I(\Omega_{1:i-1,S};\Omega_{i:m,S}\mid\Delta_{i,S})+\epsilon^{*}\left(\Delta_{i}\mvd\Omega_{1:i-1}|\Omega_{i:m},N,\frac{\delta}{m-1}\right)\label{eq: high probability bound sum of MI  and logrho}
\end{multline}
with probability larger than $1-\delta$, over the random choice of
the set $S$. 
\end{theorem}

The bound (\ref{eq: high probability bound sum of MI  and logrho})
is the sought approximated high probability bound on the relative
number of spurious tuples. Its proof is a direct and trivial application
of Proposition \ref{prop:general upper bound on relative number of spurious tuples}
and Theorem \ref{thm: confidence of MI of MVD}, and thus not explicitly
given.
}
\subsection{Proof of Theorem \ref{thm: confidence of MI of MVD}: A Confidence
Interval for the Mutual Information\label{subsec:The-proof-of-MVD-theorem}}

In this section, we discuss in detail the proof of Theorem \ref{thm: confidence of MI of MVD}.
We focus on the case in which $C$ is
a degenerate random variable ($d_{C}=1$) since the main components of the proof are already present in this simple case. 
When $d_{C}=1$, the conditional mutual information is reduced to the standard mutual information $I(A_{S};B_{S})$.
In turn, the random set $S$ which determines the random relation
is chosen uniformly at random from all subsets of $[d_{A}]\times[d_{B}]$
of a given size. To avoid confusion with the non-degenerate model,
we denote this size by $\eta$ (rather than by $N$) whenever $d_{C}=1$.
The mutual information can then be decomposed as $I(A_{S};B_{S})=H(A_{S})+H(B_{S})-H(A_{S},B_{S})$
\cite[Section 2.4]{cover2012elements}, and by the definition of the
random model, $R_S=(A_{S},B_{S})$ is distributed uniformly over the possible sets
of size $\eta$. Thus $H(A_{S},B_{S})=\log\eta$ with probability
$1$ (over the choice of $S$), and $I(A_{S};B_{S})=H(A_{S})+H(B_{S})-\log\eta$.
Due to symmetry, the analysis of $H(A_{S})$ and $H(B_{S})$
is analogous, and so we next focus on the former. The main ingredient
of the proof of Theorem~\ref{thm: confidence of MI of MVD} is a confidence
interval for the random entropy $H(A_{S})$, when $A_{S}$ is chosen from a random relation model
similar to the one of Definition \ref{def: random relation model}, albeit 
with a degenerated $C$, that is, $d_{C}=1$. At a later stage, we discuss the
generalization of this reuslt to $d_{C}>1$. The confidence bound on $H(A_{S})$
is as follows:
\begin{theorem}
\label{thm: confidence bound for random entropy}Let $A_{S}$ be drawn
according to the random relation model of Definition \ref{def: random relation model}
with $d_{C}=1$ and $N=\eta$. Assume w.l.o.g. that $d_{A}\geq d_{B}$
and that 
\begin{equation}
\eta\geq128d_{A}\log\left(\frac{128d_{A}}{\delta}\right).\label{eq: condition on eta statement}
\end{equation}
Then, for any probability $\delta\in(0,1)$ it holds that 
\begin{equation}
\log d_{A}\geq H(A_{S})\geq\log d_{A}-20\sqrt{\tfrac{d_{A}\log^{3}\left(\frac{\eta}{\delta}\right)}{\eta}}
\end{equation}
with probability $1-\delta$, over the random choice of the set $S$. 
\end{theorem}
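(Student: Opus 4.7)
The upper bound $H(A_S) \le \log d_A$ is immediate since $A_S$ is supported on $[d_A]$, so I focus on the lower bound. Using the identity $H(A_S) = \log d_A - D_{KL}(P_S[A]||U)$, where $U$ denotes the uniform distribution on $[d_A]$, the problem reduces to upper bounding the KL divergence. Define the bin counts $M_a \eqdef |\{b \in [d_B] : (a,b) \in S\}|$, so that $P_S[A=a] = M_a/\eta$. Since $S$ is drawn uniformly without replacement from $[d_A]\times[d_B]$, each $M_a$ is hypergeometric with population $d_A d_B$, favorable items $d_B$, and sample size $\eta$; in particular $\E[M_a] = \eta/d_A$ and $\mathrm{Var}(M_a) \le \eta/d_A$.

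The first step is a simultaneous concentration bound. I apply a Bernstein-type tail inequality for sampling without replacement (for instance Serfling's or Bardenet--Maillard's) to each $M_a$ and union-bound over $a \in [d_A]$, obtaining an event $\mathcal{E}$, of probability at least $1-\delta$, on which $|M_a - \eta/d_A| \le c\sqrt{(\eta/d_A)\log(\eta/\delta)}$ uniformly in $a$. The hypothesis $\eta \ge 128 d_A \log(128 d_A/\delta)$ forces this deviation to be much smaller than the mean $\eta/d_A$, so on $\mathcal{E}$ every $P_S[A=a]$ lies in a constant-factor window around $1/d_A$.

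Conditional on $\mathcal{E}$, I Taylor-expand the KL divergence around the uniform distribution. Writing $P_S[A=a] = (1+\varepsilon_a)/d_A$ with $\varepsilon_a \eqdef M_a d_A/\eta - 1$, using the constraint $\sum_a \varepsilon_a = 0$, and expanding $\log(1+\varepsilon_a) = \varepsilon_a - \varepsilon_a^2/2 + O(\varepsilon_a^3)$, the first-order terms cancel, leaving
$$D_{KL}(P_S[A]||U) \;\le\; \frac{d_A}{2\eta^2}\sum_{a=1}^{d_A}(M_a - \eta/d_A)^2 \;+\; R,$$
where $R$ collects the cubic and higher residuals. The concentration on $\mathcal{E}$ controls the quadratic sum, and since $|\varepsilon_a|$ is small on $\mathcal{E}$ the residual $R$ is of lower order. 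Substituting back via $H(A_S) = \log d_A - D_{KL}$ then yields a lower bound on $H(A_S)$ of the stated form.

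The main technical obstacle is organizing the concentration and Taylor bookkeeping so as to match the specific $20\sqrt{d_A\log^3(\eta/\delta)/\eta}$ rate with the stated constants. Two effects complicate a clean Bernstein-plus-Taylor argument: controlling the cubic residual $R$ uniformly on $\mathcal{E}$ without throwing away probability, and treating bins where $M_a$ happens to be atypically small, where the Taylor expansion of $-p\log p$ is no longer dominated by its quadratic. I anticipate handling both via a peeling argument that partitions the bins by the magnitude of $|\varepsilon_a|$ and applies finer tail bounds in the bulk together with coarser bounds on the tails, with an additional union bound over dyadic scales inflating the logarithmic exponent in the final deviation.
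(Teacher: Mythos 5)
Your strategy is sound, genuinely different from the paper's, and (once the concentration step is done with the right inequality) actually yields a \emph{stronger} bound than the one stated. The paper splits the task into (i) bounding $\log d_A-\E[H(A_S)]$ via the functional entropy $\Ent(Y_S)$, a logarithmic Sobolev inequality for asymmetric Bernoulli variables, and a coupling to an i.i.d.\ surrogate (Proposition \ref{prop: Expected value of entropy}), and (ii) concentrating $H(A_S)$ around its mean by Poissonizing the hypergeometric cell counts and invoking Lipschitz concentration for Poisson variables (Proposition \ref{prop: concentration of conditional entropy}); you replace all of this by a union bound over the $d_A$ cells, a tail bound for each hypergeometric count $M_a$, and the identity $H(A_S)=\log d_A-D_{KL}(P_S[A]\,\|\,U)$ with a second-order expansion of the divergence. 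Two remarks. First, the one place your sketch could genuinely fail is the choice of tail inequality: Serfling's bound (Lemma \ref{lem: Serfling's inequality}) is Hoeffding-type, with the number of draws $\eta$ rather than the variance $\eta/d_A$ in the exponent, so it only gives $|M_a-\eta/d_A|$ of order $\sqrt{\eta\log(d_A/\delta)}$, i.e.\ a \emph{relative} deviation of order $\sqrt{d_A^2\log(d_A/\delta)/\eta}$, which under \eqref{eq: condition on eta statement} can still be as large as $\sqrt{d_A}/16$ and does not place $P_S[A=a]$ in a constant-factor window of $1/d_A$. You must use a variance-sensitive bound; this is legitimate for the hypergeometric because, by Hoeffding's reduction of sampling without replacement to sampling with replacement, the relative Chernoff bound of Lemma \ref{lem: relative Chernoff for binomial} transfers verbatim, giving $|\varepsilon_a|\eqdef|M_ad_A/\eta-1|\le\sqrt{3d_A\log(2d_A/\delta)/\eta}\le 1/2$ simultaneously for all $a$ with probability $1-\delta$ under \eqref{eq: condition on eta statement}. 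Second, the dyadic peeling is unnecessary: on that event the elementary inequality $(1+\varepsilon)\log(1+\varepsilon)\le\varepsilon+\varepsilon^2$ for $|\varepsilon|\le 1/2$, together with $\sum_a\varepsilon_a=0$, gives $D_{KL}(P_S[A]\,\|\,U)\le\frac{1}{d_A}\sum_a\varepsilon_a^2\le\max_a\varepsilon_a^2\le 3d_A\log(2d_A/\delta)/\eta$ with no residual to control. This is of order $d_A\log(d_A/\delta)/\eta$, which under \eqref{eq: condition on eta statement} is dominated by $20\sqrt{d_A\log^3(\eta/\delta)/\eta}$, so your argument proves the theorem with room to spare; what the paper's heavier machinery buys in exchange is a deterministic, $\delta$-free bound on the expected entropy and a concentration inequality that is reused wholesale in the conditional setting of Theorem \ref{thm: confidence of MI of MVD}.
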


The proof of Theorem \ref{thm: confidence bound for random entropy}
comprises most of the proof of Theorem \ref{thm: confidence of MI of MVD},
and requires a diverse set of mathematical techniques, discussed
in Section \ref{subsec:Confidence-bound-of}. For now,
taking the result of Theorem \ref{thm: confidence bound for random entropy}
as given, a high probability bound on the value of $I(A_{S};B_{S})$
for the random relation model with degenerated $C$ can be obtained
as a simple corollary to Theorem \ref{thm: confidence bound for random entropy},
as follows:
\begin{corollary}
\label{cor: confidence bound for MI}Let $\overline{\rho}=\frac{d_{A}\cdot d_{B}}{\eta}-1$.
Then, under the same assumptions of Theorem \ref{thm: confidence bound for random entropy},
\begin{equation}
I(A_{S};B_{S})\geq\log\left(1+\overline{\rho}\right)-40\sqrt{\tfrac{d_{A}\log^{3}\left(\frac{2\eta}{\delta}\right)}{\eta}},\label{eq: confidence bound for MI}
\end{equation}
with probability $1-\delta$, over the random choice of the set $S$.
\end{corollary}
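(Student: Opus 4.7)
My plan is to deduce the corollary as a straightforward consequence of Theorem~\ref{thm: confidence bound for random entropy} applied separately to $A_S$ and $B_S$, using the fact that when $d_C=1$ the joint entropy is deterministic. The main steps are conceptually simple: a mutual-information decomposition, two applications of the entropy confidence bound, and a union bound.

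First, I would write the mutual information as
\[
I(A_S;B_S) = H(A_S) + H(B_S) - H(A_S,B_S).
\]
In the random relation model of Definition~\ref{def: random relation model} with $d_C=1$, the set $S\subseteq [d_A]\times[d_B]$ has exactly $\eta$ elements and $P_S$ assigns mass $1/\eta$ to each of them, so $H(A_S,B_S)=\log\eta$ deterministically (with probability one, and independently of the realization of $S$). This already rewrites the bound target as
\[
I(A_S;B_S) \;=\; H(A_S) + H(B_S) - \log\eta,
\]
and since $\log(1+\overline{\rho}) = \log(d_A d_B/\eta) = \log d_A + \log d_B - \log\eta$, the claim reduces to showing that $H(A_S)+H(B_S)$ is close to $\log d_A+\log d_B$.

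Next, I would apply Theorem~\ref{thm: confidence bound for random entropy} twice, each with confidence parameter $\delta/2$. Applied to $A_S$, it gives $H(A_S)\geq \log d_A - 20\sqrt{d_A\log^3(2\eta/\delta)/\eta}$ with probability at least $1-\delta/2$. Since the model is symmetric in the two coordinates, the same theorem applied to $B_S$ yields $H(B_S)\geq \log d_B - 20\sqrt{d_B\log^3(2\eta/\delta)/\eta}$ with probability at least $1-\delta/2$, and by the assumption $d_A\geq d_B$ this is in turn bounded below by $\log d_B - 20\sqrt{d_A\log^3(2\eta/\delta)/\eta}$. The qualifying condition~(\ref{eq: condition on eta statement}) is carried over directly since it is stated for the larger of the two domain sizes and depends only logarithmically on $1/\delta$, so the replacement $\delta\mapsto\delta/2$ is absorbed by the assumption~(\ref{eq: assumption on N in MVD theorem}) stated in Theorem~\ref{thm: confidence of MI of MVD} (with a constant to spare).

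Finally, a union bound over the two events gives, with probability at least $1-\delta$,
\[
I(A_S;B_S) \;\geq\; \log d_A + \log d_B - \log\eta - 40\sqrt{\tfrac{d_A\log^3(2\eta/\delta)}{\eta}} \;=\; \log(1+\overline{\rho}) - 40\sqrt{\tfrac{d_A\log^3(2\eta/\delta)}{\eta}},
\]
which is the desired bound. There is no real obstacle here: the corollary is a bookkeeping consequence of Theorem~\ref{thm: confidence bound for random entropy} combined with the deterministic nature of $H(A_S,B_S)$ in the model; the only mildly delicate point is to verify that the assumption on $\eta$ from the outer theorem is strong enough for the two applications of the entropy bound at level $\delta/2$, which is immediate from the constants in~(\ref{eq: assumption on N in MVD theorem}).
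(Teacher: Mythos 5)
Your proposal is correct and follows essentially the same route as the paper's own proof: decompose $I(A_S;B_S)=H(A_S)+H(B_S)-\log\eta$ using the fact that $H(A_S,B_S)=\log\eta$ deterministically, apply Theorem~\ref{thm: confidence bound for random entropy} to each marginal entropy, take a union bound, and absorb the $\sqrt{d_B}$ term into $\sqrt{d_A}$ (the paper does this via $\sqrt{d_A}+\sqrt{d_B}\leq\sqrt{4d_A}$ and a relabeling $2\delta\to\delta$, which is the same bookkeeping as your $\delta/2$ split).
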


Corollary \ref{cor: confidence bound for MI} will be used in
the proof of Theorem \ref{thm: confidence of MI of MVD}. Beyond that,
it also reveals the tightness of the mutual information bound in the
simpler setting of a degenerated MVD $(d_{C}=1)$. Indeed, since $A_{S}\subseteq[d_{A}]$
and $B_{S}\subseteq[d_{B}]$ for all realizations of $S$, then $\rho(R_{S},\phi)\leq\frac{d_{A}d_{B}}{\eta}-1=\overline{\rho}$.
Thus, Corollary \ref{cor: confidence bound for MI} implies
\begin{equation}
I(A_{S};B_{S})\geq\log\left[1+\rho(R_{S},\phi)\right]-40\sqrt{\tfrac{d_{A}\log^{3}\left(\frac{2\eta}{\delta}\right)}{\eta}},
\end{equation}
but actually shows the stronger bound (\ref{eq: confidence bound for MI}).\eat{
This stronger bound, in which an upper bound on $\log[1+\rho(R_{S},\phi)]$
is actually upper bounded, instead of the exact value, also essentially
holds for the non-degenerated case of Theorem \ref{thm: confidence of MI of MVD}. 
}
\paragraph{Proof outline of Theorem \ref{thm: confidence of MI of MVD} }

At this point, let us take the results of Theorem \ref{thm: confidence bound for random entropy}
and Corollary \ref{cor: confidence bound for MI} as granted. Then,
the proof of Theorem \ref{thm: confidence of MI of MVD} is essentially
a generalization of the result of Theorem \ref{thm: confidence bound for random entropy}
to the case in which $d_{C}>1$. Let us define $R_{\ell}\eqdef\sigma_{C=\ell}(R)$.
Then, in the random relation model $N_{S}(\ell)=|R_{\ell}|$ is a
random variable, and beyond the randomness in the joint distribution
of $(A_{S},B_{S})$ when  conditioned on any $C=\ell$, there is also randomness
in the number of tuples in the random relation, whenever $C=\ell$.
Hence, the mutual information conditioned on the specific value of
$C=\ell$, to wit, $I(A_{S};B_{S}\mid C_{S}=\ell)$, is drawn from
the random model in Definition \ref{def: random relation model} with
$N$ being replaced by $N_{S}(\ell)$ (the latter being a random variable
due to the random choice of $S$). The result of Corollary \ref{cor: confidence bound for MI},
regarding the mutual information of a pair of random variables $A_{S},B_{S}$,
can then be used conditionally on $C_{S}=\ell$, where $\eta$ is
being replaced by $N_{S}(\ell)$. In order for this result to hold,
the qualifying condition of Corollary \ref{cor: confidence bound for MI},
to wit $N_{S}(\ell)\geq128d_{A}\log(\frac{128d_{A}}{\delta})$, should
hold for all $\ell\in[d_{C}]$. The proof begins by showing that this
condition indeed holds for all $\ell\in[d_{C}]$ with high probability.
\AppRef{This is proved in Lemma \ref{lem: large domain size per letter in c} in Appendix \ref{sec:Proof-of-MVD-Theorem},}
{This is proved in \cite{kenig2022quantifying},}and is based on the fact that $N_{S}(\ell)$ is a hypergeometric random
variable, and on a concentration result by Serfling \cite{serfling1974probability}
for such random variables. The proof then assumes that all the following
holds: (I) For all $\ell\in[d_{C}]$, $N_{S}(\ell)$ is sufficiently
large so that the qualifying condition of Corollary \ref{cor: confidence bound for MI}
holds. (II) For each $\ell\in[d_{C}]$, the confidence bound in Corollary
\ref{cor: confidence bound for MI} holds. (III) $H(C_{S})$ is close
to $\log d_{C}$. 

\AppRef{Specifically, Lemma \ref{lem: large domain size per letter in c} assures}{Specifically, the proof in \cite{kenig2022quantifying} assures}that the
first condition holds with high probability; Corollary \ref{cor: confidence bound for MI}
assures that the second condition holds with high probability; a simple
modification of Theorem \ref{thm: confidence bound for random entropy}
shows that the third condition holds with high probability. By the
union bound, the event in which the set $s\subseteq[d_{A}]\times[d_{B}]\times[d_{C}]$
simultaneously satisfies properties (I), (II) and (III) has high probability.
The proof is completed by considering a set $s\subseteq[d_{A}]\times[d_{B}]\times[d_{C}]$,
which satisfies properties (I), (II) and (III), and relating $\log[1+\rho(R_{s},\phi)]$
to the mutual information. Concretely, an application of the \emph{log sum inequality} \AppRef{(Lemma \ref{lem: log-sum inequality} in Appendix
\ref{sec:Auxiliary-results}),}{\cite[Thm. 2.7.1]{cover2012elements}} shows that \AppRef {(see (\ref{eq: derivation of log-spurious tuples with conditional MI}))}{}
\begin{multline}
\log\left[1+\rho(R_{s},\phi)\right]\leq\log d_{C}-H(C_{s})\\ +\sum_{\ell\in[d_{C}]}\P[C_{s}=\ell]\log\left[1+\overline{\rho}_{s}(\ell)\right],
\end{multline}
which can be bounded by conditional mutual information, and an additional
additive deviation term, utilizing the aforementioned assumption that
$s$ satisfies properties (I), (II) and (III). 
\eat{
\paragraph*{An alternative random model}

Assuming $d_{C}=1$ for simplicity, Definition \ref{def: random relation model}
specifies a relation which is randomly determined by choosing a subset
of size $\eta$ from $[d_{A}]\times[d_{B}]$, uniformly at random
from all possible subsets. Equivalently, $\eta$ points are chosen
from $[d_{A}]\times[d_{B}]$, without replacement. An alternative
random model to the one of Definition \ref{def: random relation model}
is one which allows \emph{replacements.} Evidently, this model allows
for multiple tuples (a point can be chosen more than once), and thus
may be less suited to model a relation to begin with. However, one
may speculate that this model is simpler to analyze, since typically
random model which allow replacements lead to simpler analysis. Nonetheless,
simple numerical simulations show that a result analogous to Corollary
\ref{cor: confidence bound for MI} for a model with replacements
most likely do not hold, especially in the most interesting regime
in which $\overline{\rho}=\frac{d_{A}d_{B}}{N}-1$ is a constant.
The crux of the difference is that in our model $H(A_{S},B_{S})=\log\eta$
holds with probability $1$, yet the corresponding quantity in the
alternative model may not concentrate, and does not seem to be easily
characterized (at least not using the methods we have used).
}
\subsection{Confidence Bound of the Conditional Entropy \label{subsec:Confidence-bound-of}}

In this section, we describe the proof of the confidence interval in Theorem \ref{thm: confidence bound for random entropy}, which is comprised of three main steps on its own: (I) A bound
on the expected value of $H(A_{S})$, which is shown to be asymptotically
close to $\log d_{A}$ under the random relation model. (II) A concentration
result of $H(A_{S})$ to its expected value. (III) A combination of
these bounds. 
In the next two subsections we provide a formal statement of the
first two steps, and outline their proof. \AppRef{The full proof
is deferred to Appendix \ref{sec:Proof-of-MI-Theorem}, along with
the third part (which is more technical in its nature) and completes
the proof of Theorem~\ref{thm: confidence bound for random entropy}.}{The full proof is rather long, and appears in \cite{kenig2022quantifying}.} 

\subsubsection{The expected value of the entropy\label{subsec:The-expected-value}}

In this section, we state a bound on the average mutual information
$\E[H(A_{S})]$ and outline its proof. \eat{This is the first main step
of the proof of Theorem \ref{thm: confidence bound for random entropy}.}
Let us denote, for notational brevity, 
\begin{equation}
C(d)\eqdef\frac{2\log(d)}{\sqrt{d}}.\label{eq: C constant def}
\end{equation}
\begin{prop}[Bounds on the expected entropy]
\label{prop: Expected value of entropy}Assume that $d_{A}\geq d_{B}$
and that $\eta\geq60d_{A}$. If $S$ is chosen uniformly at random
from one of the possible subsets of $[d_{A}]\times[d_{B}]$ of size
$\eta$ then
\begin{equation}
0\leq\log d_{A}-\E[H(A_{S})]\leq C(d_{B}),
\end{equation}
where $C(d)$ is as defined in (\ref{eq: C constant def}). An analogous
result hold for $H(B_{S})$:
\begin{equation}
0\leq\log d_{B}-\E[H(B_{S})]\leq C(d_{A}).
\end{equation}
\end{prop}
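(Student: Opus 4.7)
\textbf{The upper bound} is immediate. By the exchangeability of the random relation model under permutations of $[d_A]$, $\mathbb{E}[P_{A_S}(a)] = 1/d_A$ for every $a \in [d_A]$, so the expected marginal distribution is uniform on $[d_A]$. Since Shannon entropy is a concave functional of the probability mass function, Jensen's inequality gives $\mathbb{E}[H(A_S)] \leq H(\mathbb{E}[P_{A_S}]) = \log d_A$, which is the left inequality.

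\textbf{For the lower bound,} the plan is to start from the chain-rule identity $H(A_S, B_S) = H(A_S) + H(B_S \mid A_S)$. Because $S$ is a set of $\eta$ distinct tuples, the empirical joint distribution of $(A_S, B_S)$ is uniform on $\eta$ outcomes, so $H(A_S, B_S) = \log \eta$ holds almost surely. Setting $N_S(a) := |\{b : (a,b) \in S\}|$ and noting that $B_S$ given $A_S = a$ is uniform on $N_S(a)$ values, this yields
\[
H(A_S) = \log \eta - \frac{1}{\eta}\sum_{a \in [d_A]} N_S(a)\log N_S(a).
\]
Taking expectations and using row-symmetry of the model,
\[
\log d_A - \mathbb{E}[H(A_S)] = \frac{1}{M}\bigl(\mathbb{E}[g(N_S(1))] - g(M)\bigr),
\]
where $g(x) := x \log x$ and $M := \eta/d_A \geq 60$. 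Under the random relation model, $N_S(1)$ is hypergeometric on $\{0, 1, \ldots, d_B\}$ with mean $M$ and variance $\sigma^2 \leq M(1 - M/d_B)$; convexity of $g$ already gives $\mathbb{E}[g(N_S(1))] \geq g(M)$, which is the non-negativity of the bias.

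\textbf{To complete the lower bound,} the plan is to upper bound the bias $\mathbb{E}[g(N_S(1))] - g(M)$ by $M \cdot C(d_B)$. I would combine a second-order Taylor expansion of $g$ around $M$ with a Serfling-type tail inequality for the hypergeometric random variable $N_S(1)$, leveraging the deterministic range $N_S(1) \in [0, d_B]$. Concretely, I would split the expectation according to the event $\{|N_S(1) - M| \leq t\}$ for a threshold $t \asymp \sqrt{d_B \log d_B}$: on the concentrated event, the quadratic Taylor remainder $\tfrac{1}{2\xi}(N_S(1) - M)^2$ with $\xi$ between $N_S(1)$ and $M$ controls the contribution via $\sigma^2/M$; on the tail event, use the crude bound $g(N_S(1)) \leq d_B \log d_B$ together with the Serfling tail inequality $\mathbb{P}[|N_S(1) - M| > t] \leq 2e^{-2t^2/\eta}$. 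The main obstacle is the careful calibration of $t$ and the explicit constants, so that the two contributions (a second-moment term and an exponentially small tail term scaled by $d_B \log d_B$) combine into the clean form $C(d_B) = 2\log(d_B)/\sqrt{d_B}$, rather than the naive $O(1/M)$ bound that would come from the second-moment term alone. The symmetric bound $\log d_B - \mathbb{E}[H(B_S)] \leq C(d_A)$ then follows by interchanging the roles of $A$ and $B$.
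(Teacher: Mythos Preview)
Your reduction to $\log d_A - \E[H(A_S)] = \tfrac{1}{M}\bigl(\E[g(N_S(1))] - g(M)\bigr)$ with $g(x)=x\log x$ and $M=\eta/d_A$ is correct, and it is exactly where the paper starts as well (the paper writes the right-hand side as a multiple of the functional entropy $\Ent(Y_S)$ with $Y_S=N_S(1)/d_B$, which is the same quantity). From that point the paper takes a genuinely different route than you propose: it replaces $Y_S$ by an i.i.d.\ Bernoulli surrogate $\tilde Y$, bounds $\Ent(\tilde Y)$ via a logarithmic Sobolev inequality for asymmetric Bernoulli variables (through an Efron--Stein variance computation for $\sqrt{\tilde Y}$), and controls $|\Ent(Y_S)-\Ent(\tilde Y)|$ by a second-moment coupling argument. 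No Taylor expansion or hypergeometric tail bound enters.

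The step you flag as the ``main obstacle'' is a real gap that the Taylor-plus-Serfling plan does not close. The second-moment contribution is genuinely of order $1/M$, and this reflects the actual size of the bias, not slack in the method: when $M=\eta/d_A$ sits at its minimum value $60$ while $d_B\to\infty$ (take, e.g., $d_A=d_B=d$ and $\eta=60d$, which satisfies all hypotheses), the row-count $N_S(1)$ converges in distribution to $\mathrm{Poisson}(M)$, whose functional entropy $\E[g(N_S(1))]-g(M)$ tends to a positive constant (approximately $1/2$). Hence $\log d_A-\E[H(A_S)]$ stays $\Theta(1/M)>0$ while $C(d_B)\to 0$. A tail bound cannot rescue this, because the bias is produced by the bulk variance rather than by rare deviations; moreover, Serfling concentrates $N_S(1)$ only at scale $\sqrt{\eta}=\sqrt{Md_A}$, which is far larger than $M$ and so does not even keep $N_S(1)$ bounded away from $0$ on the ``good'' event. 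The clean $C(d_B)$ form you are aiming for is therefore not attainable by this calibration. For what it is worth, the paper's own derivation slips at precisely the analogous step: the prefactor in front of $\Ent(Y_S)$ is written as $1/(1+\bar\rho)$ where the computation in fact yields $1+\bar\rho$, so the same asymptotic obstruction is hidden there as well.
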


We next present the main ideas of the proof of Proposition \ref{prop: Expected value of entropy}.
As a first step, we identify that the expected value $\E[H(A_{S})]$
is, in fact, a conditional entropy $H(A\mid S)$, to wit,\textbf{
\begin{equation}
\E\left[H(A_{S})\right]=\sum_{s}\P[S=s]\cdot H(A_{s})=H(A\mid S).\label{eq: expected MI}
\end{equation}
}The crux of the proof of Proposition \ref{prop: Expected value of entropy}
requires \emph{lower} bounding $H(A\mid S)$. Nonetheless, to illuminate
the challenge in the proof, it is insightful to first note that as
\emph{conditioning reduces entropy} \cite[Theorem 2.6.5]{cover2012elements}, and so 
\begin{equation}
H(A\mid S)\leq H(A).\label{eq: CRE for description}
\end{equation}
Using the symmetry of the distribution of the set $S$, it follows that $H(A)=\log d_{A}$ \AppRef{(see Lemma~\ref{lem:unconditionalentropies} in Appendix \ref{sec:Proof-of-MI-Theorem} for a rigorous proof)}{(see~\cite{kenig2022quantifying} for a rigorous proof)}. So, Proposition \ref{prop: Expected value of entropy}
states that $H(A\mid S)$ is close to its unconditional value
$H(A)$, up to $C(d_{B})$. In other words, we need to show
that the conditioning (over the random variable $S$) only \emph{slightly}
reduces entropy in (\ref{eq: CRE for description}). 

To further delve into the proof of this property, we closely inspect
$H(A\mid S)$. For any $(i,j)\in[d_{A}]\times[d_{B}]$, we define
the random variable $U_{S}(i,j)\eqdef\indicator\{(i,j)\in S\}\}$,
which indicates if the tuple $(i,j)$ is in the random relation $R_{S}$
(see Definition \ref{def: random relation model}). By symmetry, $\P(U_{S}(i,j)=1)=\frac{\eta}{d_{A}d_{B}}$
for all $(i,j)\in[d_{A}]\times[d_{B}]$, and hence $\{U_{S}(i,j)\}_{(i,j)\in[d_{A}]\times[d_{B}]}$
are identically distributed. The values $\{U_{S}(i,j)\}_{(i,j)\in[d_{A}]\times[d_{B}]}$
uniquely determine $R_{S}$, and so also the entropy $H(A_{S})$.
However, $\{U_{S}(i,j)\}_{(i,j)\in[d_{A}]\times[d_{B}]}$ are \emph{dependent}
random variables, and such random variables are typically more difficult
to handle than independent ones.\textbf{ }Letting $Y_{S}\equiv Y_{S}(1)\eqdef\frac{1}{d_{B}}\sum_{j\in[d_{B}]}U_{S}(1,j)$,
it can be shown that \AppRef{(see (\ref{eq: conditional entropy expression}) in Appendix \ref{sec:Proof-of-MI-Theorem})}{}
\begin{align}
H(A\mid S) & =-\frac{d_{A}d_{B}}{\eta}\E\left[Y_{S}\cdot\log(Y_{S})\right]+\log\frac{\eta}{d_{B}}.
\end{align}
Noting that $f(t)\eqdef t\log(t)$ is a convex function on $\mathbb{R}_{+}$,
one obtains from Jensen's inequality that 
\begin{equation}
-\E\left[Y_{S}\cdot\log(Y_{S})\right]\leq-\E[Y_{S}]\log\E[Y_{S}],\label{eq: Jensen inequality for E}
\end{equation}
and since $\E(Y_{S})=\frac{\eta}{d_{A}d_{B}}$, it immediately follows
that 
\begin{equation}
H(A\mid S)\leq\log d_{A}=H(A),
\end{equation}
as is already known from the conditioning reduces entropy property~\eqref{eq: CRE for description}. From the
above discussion, we deduce that in order to obtain a \emph{lower} bound on $H(A\mid S)$,
which is close to $H(A)=\log d_{A}$, it is required to show that
the Jensen-based bound in (\ref{eq: Jensen inequality for E}) is
close to an equality. Trivially, if $Y_{S}$ had been a deterministic
quantity, then any Jensen-based inequality is satisfied with equality,
and specifically (\ref{eq: Jensen inequality for E}). Continuing
this line of thought, one expects that if $Y_{S}$ is tightly concentrated
around its expected value (i.e., ``close'' to being deterministic),
then (\ref{eq: Jensen inequality for E}) approximately holds with
equality. Indeed, such relations have been extensively explored via
the \emph{functional entropy} of a non-negative random variable $X$,
defined as 
\begin{equation}
\Ent(X)\eqdef\E\left[X\log X\right]-\E[X]\log\left[\E(X)\right].\label{eq:functionalEntropy}
\end{equation}
The functional entropy\footnote{Not to be confused with the \emph{Shannon} entropy of a random variable
$H(Y)$, see \cite{boucheron2013concentration}\eat{, though the two definitions are related (if $Y$ is a likelihood
ratio of two probability measures then $\Ent(\cdot)$ is the KL divergence
between them)}.} is non-negative, and is conveniently upper bounded via \emph{logarithmic
Sobolev inequalities }(LSIs) \cite[Chapter 5]{boucheron2013concentration}\emph{.
}Specifically, these inequalities bound $\Ent(X)$ by the \emph{Efron-Stein
variance} of $X$ \cite[Chapter 5]{boucheron2013concentration}, which
in turn quantifies the concentration of $Y$ around its expected value
-- low Efron-Stein variance implies tight concentration around the
expected value, and thus low functional entropy by LSIs. Therefore,
the proof addresses the bounding of $\Ent(Y_{S})$. Nonetheless, LSIs
are typically derived for functions of independent random variables,
whereas here, as discussed, $Y_{S}\eqdef\frac{1}{d_{B}}\sum_{j\in[d_{B}]}U_{S}(1,j)$
is an average of \emph{dependent} random variables. To address this
matter, we define a new set of random variables $\{V(j)\}_{j\in[d_{B}]}$,
so that each $V(j)$ has the same marginal distribution as $U_{S}(1,j)$,
but where the $\{V(j)\}_{j\in[d_{B}]}$ are independent. In other
words, $\{V(j)\}{}_{j\in[d_{B}]}$ is a set of Bernoulli random variables for which $\P[V(j)=1]=\frac{\eta}{d_{A}d_{B}}$,
thus possibly asymmetric. We then define $\tilde{Y}\eqdef\frac{1}{d_{B}}\sum_{j\in[d_{B}]}V(j)$, and instead of directly bounding $\Ent(Y_{S})$ as is required for the proof, we bound $\Ent(\tilde{Y})$ and the difference between the two  functional entropies, to wit, we write 
\begin{equation}
\Ent(Y_{S})=\Ent(\tilde{Y})+[\Ent(Y_{S})-\Ent(\tilde{Y})],\label{eq: decomposition of the functional entropy of Ys}
\end{equation}
and then separately bound each of the terms. Denoting $\bar{\rho}\eqdef\frac{d_{A}d_{B}}{\eta}-1$
(which is an upper bound on the relative number of spurious tuples),
\AppRef{we show in Lemma \ref{lem: Bound of functional entropy} in Appendix
\ref{sec:Proof-of-MI-Theorem} that}{the proof shows that } 
\begin{equation}
\Ent(\tilde{Y})\leq\frac{2\overline{\rho}\log(1/\overline{\rho})}{1-\overline{\rho}}\cdot\frac{1}{d_{B}}.
\end{equation}
\AppRef{The proof of Lemma \ref{lem: Bound of functional entropy}}{The proof of this property}is based
on a LSI for the asymmetric Bernoulli
random variables $\{V(j)\}_{j\in[d_{B}]}$ \cite[Chapter 5]{boucheron2013concentration},
along with a careful bounding of the Efron-Stein variance of $\tilde{Y}$.
The next term in the decomposition of $\Ent(Y_{S})$ in (\ref{eq: decomposition of the functional entropy of Ys})
is absolutely bounded \AppRef{in Lemma \ref{lem: Bound on functional entropy difference }
in Appendix \ref{sec:Proof-of-MI-Theorem} as}{in the proof as} 
\begin{equation}
\left|\Ent(Y_{S})-\Ent(\tilde{Y})\right|\leq\sqrt{\frac{2\log^{2}(d_{B})}{d_{B}}}.
\end{equation}
\eat{The proof of \textbf{Lemma \ref{lem: Bound on functional entropy difference }}
first shows that the difference in functional entropies $|\Ent(Y_{S})-\Ent(\tilde{Y})|$
is upper bounded by the function $-\sqrt{t}\log(\sqrt{t})$ applied
to 
\begin{equation}
\E\left[\left(\tfrac{1}{d_{B}}\sum_{j\in[d_{B}]}(U_{S}(1,j)-V(j))\right)^{2}\right]=\E\left[(Y_{S}-\tilde{Y})^{2}\right],\label{eq:secondMoment}
\end{equation}
that is, the second moment of $Y_{S}-\tilde{Y}$. In the expectation
defining this second moment in (\ref{eq:secondMoment}), the joint
distribution of $\{U_{S}(1,j)\}_{j\in[d_{B}]}$ is set according to
the random relation model, and the joint distribution of $\{V(j)\}_{j\in[d_{B}]}$
is also set according to our construction above. However, any \emph{joint
}distribution over this pair of sets $\{U_{S}(1,j),V(j)\}_{j\in[d_{B}]}$
(that is, a \emph{coupling}), can be used in this bound. The proof
then chooses the trivial coupling in which the random variables $\{U_{S}(1,j)\}_{j\in[d_{B}]}$
are independent of $\{V(j)\}_{j\in[d_{B}]}$. For this choice, the
second moment is the simplest to evaluate, and it already leads to
an efficient bound (combined with the concentration result in the
next section).} Summing the bounds on $\Ent(\tilde{Y})$ and $|\Ent(Y_{S})-\Ent(\tilde{Y})|$
leads to a bound on $\Ent(Y_{S})$, which in turn shows that the Jensen-bound
in (\ref{eq: Jensen inequality for E}) is close to equality. This
shows that $H(A\mid S)\leq\log d_{A}$ in fact
approximately achieved, up to the defined vanishing term $C(d_{B})$.

\subsubsection{The concentration to the expected value of the entropy\label{subsec:The-concentration-to}}

We next discuss the second step of the proof of Theorem
\ref{thm: confidence bound for random entropy}. We state a concentration
bound on $H(A_{S})$ to $\E[H(A_{S})]$ and outline
its proof. For brevity, for $t\in\mathbb{R}_{+}$, we denote
\begin{equation}
h(t)\eqdef t\log(1+t).\label{eq: h function definition}
\end{equation}
\begin{prop}
\label{prop: concentration of conditional entropy}Assume that $d_{A}>d_{B}$,
that $\eta\geq60d_{A}$ and that $\eta\leq d_{A}d_{B}-d_{B}$. Then,
it holds that 
\begin{multline}
\P\left[\left|H(A_{S})-\E[H(A_{S})]\right|>t\right]\\
\leq\frac{1}{2}\cdot e^{-\frac{\eta}{12}}+\frac{1}{2}\exp\left\{ -\frac{\eta}{2d_{A}}\cdot h\left(\frac{r}{2\log(\eta/e)}\right)+4\log(\eta)\right\} ,\label{eq: concentration of conditioanl entropy lemma statement}
\end{multline}
where 
\begin{equation}
r=\max\left\{ 0,t-\frac{8d_{A}}{\eta}-C(d_{B})\right\} .\label{eq: relation between t and r in concentration}
\end{equation}
\end{prop}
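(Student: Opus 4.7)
The plan is to view $H(A_S)$ as a function of the random set $S$ and apply a Bennett-type concentration inequality for sampling without replacement, after first restricting attention to a ``typical'' event on which the column counts are well behaved. Writing $N_i(S)\eqdef|\{j\in[d_B]:(i,j)\in S\}|$, we have $\sum_i N_i(S)=\eta$ and
\[ H(A_S)=\log\eta-\frac{1}{\eta}\sum_{i\in[d_A]}N_i(S)\log N_i(S). \]
Each $N_i(S)$ is marginally hypergeometric with mean $\eta/d_A$, so the count profile $(N_1,\dots,N_{d_A})$ should concentrate near the uniform profile and $H(A_S)$ near $\log d_A$; the bias between $\E[H(A_S)]$ and $\log d_A$ is quantified by Proposition~\ref{prop: Expected value of entropy}.

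First, I would fix a truncation threshold $K=O(\eta/d_A)$ (times a small logarithmic slack) and define the typical event $\mathcal{E}\eqdef\{\max_i N_i(S)\leq K\}$. A Serfling-type concentration inequality for hypergeometric variables \cite{serfling1974probability}, together with a union bound over $i\in[d_A]$, yields $\P[\mathcal{E}^c]\leq\tfrac12 e^{-\eta/12}$ once $\eta\geq60 d_A$, which is the source of the first term in \eqref{eq: concentration of conditioanl entropy lemma statement}. On $\mathcal{E}$, swapping a single element of $S$ with one outside $S$ changes at most two of the $N_i$'s by $\pm1$, and since $t\mapsto t\log t$ has bounded increments on $[0,K+1]$, the resulting change in $H(A_S)$ is at most $\tfrac{2}{\eta}\log(K+1)=O(\log(\eta/e)/\eta)$; this explains the factor $2\log(\eta/e)$ appearing in the denominator inside $h(\cdot)$.

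Next, I would apply a Bennett/Bernstein-type inequality for a Lipschitz function of a uniformly random subset of a fixed size, obtained via the entropy (log-Sobolev) method as in \cite[Ch.~5]{boucheron2013concentration} and transported to the sampling-without-replacement setting by a standard coupling or negative-association argument. Such inequalities produce an exponent of the form $-\tfrac{\sigma^2}{b^2}\,h(br/\sigma^2)$, where $b$ bounds the per-swap increment and $\sigma^2$ is the Efron--Stein / self-bounding proxy for the variance. Combining $b=O(\log(\eta/e)/\eta)$ from the previous step with the self-bounding estimate $\sigma^2=O(\log^2(\eta/e)/d_A)$, which follows because only $O(\eta/d_A)$ swaps can materially affect $N_i\log N_i$ for any fixed $i$, a direct computation gives $\sigma^2/b^2=\eta/(2d_A)$ up to constants, producing exactly the claimed exponent $-\tfrac{\eta}{2d_A}\,h(r/(2\log(\eta/e)))$. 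The residual $+4\log\eta$ in the exponent absorbs a union bound over the (at most $d_A\leq\eta$) choices of index together with a mild discretization step.

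Finally, the shift $8d_A/\eta+C(d_B)$ subtracted from $t$ to define $r$ accounts for two sources of bias: the gap $|\E[H(A_S)]-\E[H(A_S)\mid\mathcal{E}]|\leq 8d_A/\eta$ incurred by conditioning on the truncated event $\mathcal{E}$ (which is controlled using the deterministic bound $H(A_S)\leq\log d_A$ together with $\P[\mathcal{E}^c]\leq\tfrac12 e^{-\eta/12}$), and the bias $C(d_B)$ inherited from Proposition~\ref{prop: Expected value of entropy} when passing between $\E[H(A_S)]$ and $\log d_A$ in the intermediate estimates. The main obstacle, I expect, is obtaining the \emph{Bennett-type} rather than merely sub-Gaussian exponent: a naive bounded-differences argument yields only a Hoeffding tail of the form $\exp(-r^2\eta/\log^2\eta)$, which is substantially weaker for moderate $r$. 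The sharper Bennett form essentially requires the self-bounding structure of the entropy functional and a log-Sobolev inequality on the ``slice'' of the hypercube $\{0,1\}^{d_A d_B}$ consisting of vectors of Hamming weight exactly $\eta$; all remaining steps reduce to bookkeeping of constants.
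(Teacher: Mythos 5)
Your plan takes a genuinely different route from the paper. You treat $H(A_S)$ as a single swap-Lipschitz function of the random set $S$ and invoke a Bennett-type inequality for sampling without replacement, whereas the paper first applies a union bound over the $d_A$ coordinates, reducing the problem to the deviation of $g(Z_S(1)/\eta)$ for a single hypergeometric count; it then ``Poissonizes'' that count (proving $\P[Z_S=b]\leq 21\,d_A^2\,\P[W=b]$ for $W\sim\text{Poisson}(\eta/d_A)$), replaces $g(t)=-t\log t$ by Lipschitz surrogates, and applies an existing Bennett-form concentration bound for Lipschitz functions of a \emph{Poisson} random variable. In the paper the prefactor $\eta/(2d_A)$ in the exponent and the additive $4\log\eta$ arise, respectively, from the Poisson mean $\eta/d_A$ and from the $21 d_A^3$ Poissonization-plus-union-bound prefactor; neither has a natural source in a global argument that never splits over coordinates.

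There are three concrete gaps. First, the typical event: with $K=O((\eta/d_A)\cdot\mathrm{polylog})$, a hypergeometric tail bound plus a union bound over $i\in[d_A]$ gives only $\P[\mathcal{E}^c]\leq d_A e^{-\Omega(\eta/d_A)}$, which is far larger than $\tfrac12 e^{-\eta/12}$ when, say, $\eta=60d_A$ with $d_A$ large; to reach $e^{-\Theta(\eta)}$ you must truncate at $K=\Theta(\eta)$, as the paper effectively does by cutting its Poisson variable at $W\leq\eta/e$. Second, the bookkeeping does not reproduce the stated exponent: with $b=\Theta(\log(\eta/e)/\eta)$ and $\sigma^2=\Theta(\log^2(\eta/e)/d_A)$ one gets $\sigma^2/b^2=\Theta(\eta^2/d_A)$ and $br/\sigma^2=\Theta(d_A r/(\eta\log(\eta/e)))$, not $\eta/(2d_A)$ and $r/(2\log(\eta/e))$; moreover the natural Efron--Stein proxy for a swap-Lipschitz function on the slice is $\eta b^2=\Theta(\log^2(\eta)/\eta)$, not $\log^2(\eta)/d_A$, so the claimed variance proxy is unmotivated in either direction. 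Third, the central tool --- a Bennett-form inequality with Efron--Stein variance proxy for general Lipschitz functions of a uniformly random subset of fixed size --- is asserted rather than supplied; the paper's detour through coordinate-wise Poissonization exists precisely to avoid having to establish such a statement on the slice. The ingredients you identify (truncation, swap increment of order $\log\eta/\eta$, a Bennett rather than Hoeffding tail) are the right ones, but as written the plan neither matches the constants of the claimed bound nor rests on a concentration inequality that is actually available.
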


Previously, in Section \ref{subsec:The-expected-value}, we defined
the random variables $\{U_{S}(i,j)\}_{i\in[d_{A}],j\in[d_{B}]}$ and
then $\{Y_{S}(i)\}_{i\in[d_{A}]}$. For the proof of Proposition \ref{prop: concentration of conditional entropy}
it will be more convenient to use their scaled version
\begin{equation}
Z_{S}(i)\eqdef d_{B}\cdot Y_{S}(i)=\sum_{j\in[d_{B}]}U_{S}(i,j).
\end{equation}
By the definition of the random relation model, for each $i\in[d_{A}]$,
$Z_{S}(i)\sim\text{Hypergeometric}(d_{A}d_{B},d_{B},\eta)$, that
is, a hypergeometric random variable, with population size $d_{A}d_{B}$,
$d_{B}$ success states in the population, and $\eta$ draws. We also
note that since $\E[Y_{S}(i)]=\frac{\eta}{d_{A}d_{B}}$ then $\E[Z_{S}(i)]=\frac{\eta}{d_{A}}$. Since $\{Z_{S}(i)\}_{i\in[d_{A}]}$ are dependent random variables,
the first step of the proof uses a union bound over all $i\in[d_{A}]$,
and thus reduces the probability required to be bounded to just a
single one of them, say $Z_{S}(1)$. Specifically, the first step
of the proof\AppRef{ (see (\ref{eq: first bound on concentration inequality})
in Appendix \ref{sec:Proof-of-MI-Theorem})}{}shows that 
\begin{multline}
\P\left[\left|H(A_{S})-\E[H(A_{S})]\right|>t\right]\\
\leq d_{A}\cdot\P\left[\left|g\left(\frac{Z_{S}}{\eta}\right)-\frac{\E[H(A_{S})]}{d_{A}}\right|>\frac{t}{d_{A}}\right],\label{eq: concentration bound first step for proof outline}
\end{multline}
where $Z_{S}\equiv Z_{S}(1)$, and $g(t)\eqdef-t\log t$. Thus, the
probability that the entropy is close to its expected value is bounded
by the probability that a function of a hypergeometric random variable
is close to its expectation. To bound the latter probability, we aim
to use known concentration results, and specifically, concentration
of Lipschitz functions of \emph{Poisson} random variables. Therefore,
the next step is to replace the hypergeometric random variable $Z_{S}$
with a Poisson random variable $W\sim\text{Poisson}$($\frac{\eta}{d_{A}}$),
which has the same mean $\E[W]=\frac{\eta}{d_{A}}=\E[Z_{S}]$. As
is well known, the binomial distribution (and more generally, the
multinomial distribution) can be ``Poissonized'' in the sense that
the probability of any event under the binomial distribution is upper
bounded by the same probability under the Poisson distribution, with
a proper factor \cite[Thm. 5.7]{mitzenmacher2017probability}.
The hypergeometric is known to behave similarly to the binomial distribution,
\eat{(e.g., there are bounds on the total-variation between them)},and
so one may expect that it can also be ``Poissonized''. 
\AppRef{Lemma \ref{lem:Poissonization}
in Appendix \ref{sec:Proof-of-MI-Theorem}, which is a preliminary
step to the proof of Proposition \ref{prop: concentration of conditional entropy},
shows this Poissonization effect, and states the proper condition and
constants.}{In \cite{kenig2022quantifying} we prove this Poissonization effect, and state the proper condition and
constants.} Its statement and results are general, and may be of independent interest. Equipped with the ``Poissonization bound'', $Z_{S}$ can be replaced
by $W$, and as a result, the bound in (\ref{eq: concentration bound first step for proof outline})
is further upper bounded with a similar bound, except that the hypergeometric random variable $Z_S$ is replaced with a Poisson random variable $W$, and a larger multiplicative pre-factor ($21d_A^3$ instead of just $d_A$)
\eat{
\begin{align}
 & \P\left[\left|H(A_{S})-\E[H(A_{S})]\right|>t\right]\nonumber \\
 & \leq21d_{A}^{3}\cdot\P\left[\left|g\left(\frac{W}{\eta}\right)-\frac{\E[H(A_{S})]}{d_{A}}\right|>\frac{t}{d_{A}}\right].\label{eq: entropy concentration probability bound with Poisson}
\end{align}
}
The next matter to address is that $g(t)=-t\log t$ is \emph{not }a
Lipschitz function since its derivative is unbounded for $t\downarrow0$
as well as $t\uparrow\infty$ (note that while $Z_{S}\leq d_{B}$
with probability $1$, $W$ is unbounded). We first address the $t\downarrow0$
case. Since $W$ is supported on integers, the minimal non-zero argument
possible for $g(\frac{W}{\eta})$ is $1/\eta$. So, if we restrict
$t\in[\frac{1}{\eta},1]$ then $g(t)$ is a Lipschitz function with
semi-norm $\frac{1}{\eta}\log\eta$. Based on this observation we
propose a function $\hat{g}_{\eta}(t)$ which well approximates $g(t)$
on one hand, and is Lipschitz on the other hand. By an application
of the triangle inequality, the term in
(\ref{eq: concentration bound first step for proof outline}) \eat{
(\ref{eq: entropy concentration probability bound with Poisson})}
is upper bounded as 
\begin{multline}
\left|g\left(\frac{W}{\eta}\right)-\frac{\E[H(A_{S})]}{d_{A}}\right|\leq\left|g\left(\frac{W}{\eta}\right)-\hat{g}_{\eta}\left(\frac{W}{\eta}\right)\right|+\\
\left|\hat{g}_{\eta}\left(\frac{W}{\eta}\right)-\E\left[\hat{g}_{\eta}\left(\frac{W}{\eta}\right)\right]\right|+\left|\E\left[\hat{g}_{\eta}\left(\frac{W}{\eta}\right)\right]-\frac{\E[H(A_{S})]}{d_{A}}\right|.\label{eq: triangle inequality for concentration proof outline}
\end{multline}
The first term in (\ref{eq: triangle inequality for concentration proof outline})
is bounded as $1/\eta$ with probability $1$ directly from the construction
of the function $\hat{g}_{\eta}(t)$. \AppRef{The last term in (\ref{eq: triangle inequality for concentration proof outline})
is bounded in Lemma \ref{lem:modified entropy of Poission}, whose
proof is rather technical, and utilizes both the bound on the expected
value of Proposition \ref{prop: Expected value of entropy} previously
stated, as well as tools such as \emph{Poisson LSI} (see Lemma \ref{lem: Poisson LSI}
in Appendix \ref{sec:Auxiliary-results}).}{The last term in (\ref{eq: triangle inequality for concentration proof outline})
is  is difficult to bound, and the proof mainly uses a \emph{Poisson LSI} \cite[Thm. 6.17]{boucheron2013concentration} along with various approximation steps.} The proof continues by
bounding the probability that the middle term in (\ref{eq: triangle inequality for concentration proof outline})
is larger than some value. The main tool for this bound in a concentration bound for Lipschitz
functions of Poisson random variables. This bound is not used directly, since $\hat{g}_{\eta}(t)$ is not Lipschitz over the entire real line. However, the argument $W/\eta$ is small enough with high probability, and thus belong to the Lipschitz continuous part of this function. Additional approximation arguments show that this suffices to obtain tight upper bound. 
\eat{As said, a concentration bound for Lipschitz
functions of Poisson random variables cannot be directly used since
the constructed $\hat{g}_{\eta}(t)$ still has an unbounded derivative
as $t\uparrow\infty$ (and thus is not Lipschitz). Nonetheless, from
concentration of Poisson random variables, the probability that $\frac{W}{\eta}$
is much larger than its expected value $\E[\frac{W}{\eta}]=\frac{1}{d_{A}}\ll1$
is small, and so with high probability the argument of $\hat{g}_{\eta}$
is much less than $1$. We thus propose a to further modify $\hat{g}_{\eta}(t)$
to a function $\tilde{g}_{\eta}(t)$ which is Lipschitz on $\mathbb{R}_{+}$
and well approximates $\hat{g}_{\eta}$ in the domain for which $W$
belongs to with very high probability (the complementary low probability
event is the source of the additional term $\frac{1}{2}\cdot e^{-\frac{\eta}{12}}$
in (\ref{eq: concentration of conditioanl entropy lemma statement})).
The concentration of $\tilde{g}_{\eta}(\frac{W}{\eta})$ to its expected
value is then assured by a concentration property of Lipschitz functions
of Poisson random variables \cite{bobkov1998modified,kontoyiannis2006measure}
(\textbf{Lemma \ref{lem: Poisson concentration} in Appendix \ref{sec:Auxiliary-results}}).} The combination of the bounds for all three terms then establishes the proof of Theorem \ref{thm: confidence bound for random entropy}. 
\eat{The proof is then completed by taking into account the various approximations
made in the above steps. This completes the outline of the second
main step of the proof of Theorem \ref{thm: confidence bound for random entropy}.} 

	\section{Conclusion}
We show that the KL-Divergence is a useful measure for capturing the loss of an AJD with respect to the number of redundant tuples generated by the acyclic join. Our proposed random database model has allowed us to establish a high probability upper-bound on the percentage of redundant tuples, which coincides with the deterministic lower bound for large databases. Overall, our findings  provide insights into the information-theoretic nature of AJD loss.

\begin{acks}
The work of B.K. was supported by the US-Israel Binational Science Foundation (BSF) Grant No. 2030983, and the work of N.W. was supported in part by the Israel Science Foundation (ISF), Grant No. 1782/22. The work was also supported by the Technion MLIS-TDSI Grant No. 86703064. The authors thanks Or Glassman for various numerical computations relatd to this research. N.W. thanks Nadav Merlis for a discussion on tail bounds for hypergeometric random variables and sampling without replacement. 
\end{acks}    
\clearpage

    \bibliographystyle{ACM-Reference-Format}
    \balance
    \bibliography{main,Spurious_tuples_bounds}

    \ifFlag
	    \onecolumn
		\appendix
\section{Proofs for Section~\ref{sec:JMeasureKL}}
\label{sec:ProofsJMeasureKL}
\begin{repproposition}{\ref{prop:PT}}
Let $P(X_1,\dots,X_n)$ be any joint probability distribution over $n$ variables, and let $(\T,\chi)$ be a join tree where $\chi(\T)=\set{X_1,\dots,X_n}$. Then $P\models \T$ (Definition~\ref{def:modelsT}) if and only if $P=P_\T$ where:
	\begin{equation}
		\label{eq:PtAppendix}
		P_\T(x_1,\dots,x_n)\eqdef \frac{\prod_{i=1}^mP[ \Omega_i](\boldsymbol{\bx}[ \Omega_i])}{\prod_{i=1}^{m-1}P[ \Delta_i](\boldsymbol{\bx}[\Delta_i])}
	\end{equation}
	where $P[ \Omega_i]$ ($P[ \Delta_i]$) denote the marginal probabilities over $\Omega_i$ ($\Delta_i$).\eat{ We say that a probability distribution $P$ \e{models} the join tree $\T$ denoted $P\models \T$ if $P=P_\T$.}
\end{repproposition}
\begin{proof}
We prove both directions by induction on $m=|\nodes(\T)|$. 
If direction: we assume that $P\models \T$ (see Definition~\ref{def:modelsT}). When $m=1$ the claim is immediate, so assume that it holds for $m\geq 1$, and we prove for $|\nodes(\T)|=m+1$.
So $(\T,\jointreeMapFunction)$ is a junction tree over $m+1$ nodes. Let $u_{m+1}$ denote a leaf in $\T$ where $\jointreeMapFunction(u_{m+1})=\Omega_{m+1}$, and let $u_p=\parent(u_{m+1})$ in $\T$ where $\jointreeMapFunction(u_p)=\Omega_p$. We let $\Delta_m=\Omega_p\cap \Omega_{m+1}$.
By the assumption that $P\models \T$, then by Definition~\ref{def:modelsT}, we have that $I(\Omega_{(1:m)};\Omega_{m+1}|\Delta_m)$.
Now, create a new junction tree $\T'$ where nodes $\Omega_{m+1}$ and $\Omega_p$ are combined to a single node $u_x$ where $\jointreeMapFunction(u_x)=\Omega_{m+1}\cup \Omega_p$. Since $\T$ is a junction tree (Definition~\ref{def:joinTree}), then so is $\T'$. The set of edges of $\T'$ is a subset of those of $\T$. Therefore, since $P\models \T$ then it must hold that $P\models \T'$. Since $|\nodes(\T')|=m$, then by the induction hypothesis we have that:
\begin{equation}
\label{eq:PtPropProof}
    P_{\T'}(x_1,\dots,x_n)=\left(\frac{\prod_{i\in [m]\setminus \set{p}}P[\Omega_i](\bx[\Omega_i])}{\prod_{i\in [m-1]}P[\Delta_i](\bx[\Delta_i])}\right)\times P[\jointreeMapFunction(u_x)](\bx[\jointreeMapFunction(u_x)])
\end{equation}
Since $I(\Omega_{(1:m)};\Omega_{m+1}|\Delta_m)=0$, and since $\Omega_p\subseteq \Omega_{(1:m)}$ then $I(\Omega_{p};\Omega_{m+1}|\Delta_m)=0$. Therefore, $P[\jointreeMapFunction(u_x)]=\frac{P[\Omega_{m+1}]P[\Omega_p)]}{P[\Delta_m]}$. Substituting this back in~\eqref{eq:PtPropProof} proves the claim for the junction tree $\T$ with $m+1$ nodes.

For the other direction, we assume that~\eqref{eq:PtAppendix} holds, and prove by induction on $m=|\nodes(\T)|$ that $P\models\T$. That is, we prove that for every edge $i\in \edges(\T)$ where $i\in [m-1]$, it holds that $I(\Omega_{1,(i-1)};\Omega_i|\Delta_{i:m})=0$.
The claim clearly holds for $m=1$ (since $\edges(\T)=\emptyset$). We assume the claim holds for junction trees with at most $m\geq 1$ nodes, and prove for a junction tree $\T$ with $m+1$ nodes.
Let $u_{m+1}$ be a leaf node in $\T$ with parent $u_p$. Let $\T'$ be the junction tree that results from $\T$ by removing the node $u_{m+1}$. By the induction hypothesis we have that $I(\Omega_{1:(i-1)};\Omega_{i:m}|\Delta_i)$ for every $i\in [1,m-1]$. Now, by the assumption of the claim, we have that $P_\T=P_{\T'}\cdot\frac{P(\Omega_{m+1})}{P(\Delta_m)}$. Since $\jointreeMapFunction(\T')=\Omega_{(1:m)}$, we immediately get that $I(\Omega_{1:m}; \Omega_{m+1}|\Delta_m)=0$ as required.
\end{proof}

\begin{replemma}{\ref{lem:marginals}}
\lemMarginals
\end{replemma}
\begin{proof}
We prove the claim by by induction on $m=|\nodes(\T)|$. By definition of $P_\T$, the claim is immediate for $m=1$. So, we assume the claim holds for $m\geq 1$, and prove for $|\nodes(\T)|=m+1$.
Let $u_{m+1}$ be a leaf node in $\T$ where $\parent(u_{m+1})=u_p$, and $\Omega_{m+1}\cap \Omega_p=\Delta_{m}$. Consider the tree $T'$ where $\nodes(\T')\eqdef \nodes(\T)\setminus \set{u_{m+1}}$. Let $B_{m+1}\eqdef \Omega_{m+1}\setminus \Delta_m$. By the junction tree property (see Definition~\ref{def:joinTree}), $B_{m+1}\cap \Omega_{(1:m)}=\emptyset$. Accordingly, define $P_{\T'}(\Omega\setminus B_{m+1})=\frac{\prod_{i=1}^mP[\Omega_i]}{\prod_{i=1}^{m-1}P[\Delta_i]}$. Since $\T'$ has exactly $m$ nodes, and since $u_p\in \nodes(\T')$, then by the induction hypothesis, it holds that $P[\Omega_p]=P_{\T'}[\Omega_p]=\sum_{X\notin \Omega_p}P_{\T'}(\Omega)$.
Now,
\begin{align}
    \sum_{X\notin \Omega_{m+1}}P_\T(\Omega)&=\sum_{X\notin \Omega_{m+1}}\frac{\prod_{i=1}^mP[\Omega_i]}{\prod_{i=1}^{m-1}P[\Delta_i]}\cdot \frac{P[\Omega_{m+1}]}{P[\Delta_m]}\\
    &=\frac{P[\Omega_{m+1}]}{P[\Delta_m]}\cdot \sum_{X\notin \Delta_m}\frac{\prod_{i=1}^mP[\Omega_i]}{\prod_{i=1}^{m-1}P[\Delta_i]}\\
    &=\frac{P[\Omega_{m+1}]}{P[\Delta_m]}\cdot \sum_{X\notin \Delta_m}P_{\T'}(\Omega\setminus B_{m+1})\\
    &\trre[=,a]\frac{P[\Omega_{m+1}]}{P[\Delta_m]}\cdot \sum_{X\notin \Delta_m}\left(\sum_{X\notin \Omega_p}P_{\T'}(\Omega\setminus B_{m+1})\right)\\    &\trre[=,b]\frac{P[\Omega_{m+1}]}{P[\Delta_m]}\cdot \sum_{X\notin \Delta_m}P[\Omega_p] \\
    &\trre[=,c] \frac{P[\Omega_{m+1}]}{P[\Delta_m]}\cdot P[\Delta_m]=P[\Omega_{m+1}],
\end{align}
where $(a)$ follows since $\Delta_m\subseteq \Omega_p$, $(b)$ follows from the induction hypothesis, and $(c)$ follows again from $\Delta_m\subseteq \Omega_p$. 
\end{proof}

\begin{replemma}{\ref{lem:Pt}}. 
	\lemmaPt
\end{replemma}
\begin{proof}
	From Lemma~\ref{lem:marginals} we have that, for every $i\in \set{1,\dots,m}$,
	$P_\T[\Omega_i]=P[\Omega_i]$ where,  $\Omega_i=\jointreeMapFunction(u_i)$. Since $\Delta_i\subset \Omega_i$, then $P_\T[\Delta_i]=P[\Delta_i]$.
	\begin{align}
		\min_{Q \models \T}D_{KL}(P||Q)&=\min_{Q \models \T}\expectation_{P}\left[\log \frac{P(X_1,\dots,X_n)}{Q(X_1,\dots,X_n)}\right]\\	
		&=\min_{Q \models \T}\expectation_{P}\left[\log \frac{P(X_1,\dots,X_n)}{P_\T(X_1,\dots,X_n)}\cdot \frac{P_\T(X_1,\dots,X_n)}{Q(X_1,\dots,X_n)}\right]\\
		\eat{
			&=\min_{Q \models \T}\expectation_{P(\bX)}\left[\log \frac{P(X_1,\dots,X_n)}{P_\T(X_1,\dots,X_n)}\right]\\
			&~~~~~~~~~+\min_{Q \models \T}\expectation_{P}\left[\log \frac{P_\T(X_1,\dots,X_n)}{Q(X_1,\dots,X_n)}\right]\\}
		&=D_{KL}(P||P_\T)+\min_{Q \models \T}\expectation_{P}\left[\log \frac{P_\T(X_1,\dots,X_n)}{Q(X_1,\dots,X_n)}\right] 
	\end{align}
	Since the chosen distribution $Q(\bX)$ has no consequence on the first term $D_{KL}(P|| P_\T)$, we take a closer look at the second term, to wit, $\min_{Q \models \T}\expectation_{P}\left[\log \frac{P_\T(X_1,\dots,X_n)}{Q(X_1,\dots,X_n)}\right]$. Since $Q \models \T$, then
	$Q(X_1,\dots,X_n)=\frac{\prod_{i=1}^mQ[\Omega_i](\bX[\Omega_i])}{\prod_{i=1}^{m-1}Q[\Delta_i](\bX[\Delta_i])}$ (see~\eqref{eq:Pt}).
	Hence, in what follows, we refer to $Q$ as $Q_{\T}$. In the remainder of the proof we show that:
	\begin{align}
		\expectation_{P}\left[\log \frac{P_\T(X_1,\dots,X_n)}{Q^\T(X_1,\dots,X_n)}\right]&=\expectation_{P_{\T}}\left[\log \frac{P_\T(X_1,\dots,X_n)}{Q_\T(X_1,\dots,X_n)}\right] \\
		&=D_{KL}(P_\T||Q_\T),
	\end{align}
    where the last equality follows from \eqref{eq:KLD}. 
	Since $D_{KL}(P_\T||Q_\T)\geq 0$, with equality if and only if $P_\T=Q_\T$, then choosing $Q_\T$ to be $P_\T$ minimizes $D_{KL}(P||Q)$, thus proving the claim. The remainder of the proof follows from the fact that $P[\Omega_i]=P_\T[\Omega_i]$ for every $i\in [1,m]$, and $P[\Delta_j]=P_\T[\Delta_j]$ for every $j\in [m-1]$.
	\begin{align}
		&\expectation_{P}\left[\log \frac{P_\T(X_1,\dots,X_n)}{Q^\T(X_1,\dots,X_n)}\right] \nonumber\\
		&= \sum_{\bx\in \D(\bX)}P(\bx)\log\frac{\prod_{i=1}^mP^\T[\Omega_i](\bx[\Omega_i])}{\prod_{i=1}^mQ^\T[\Omega_i](\bx[\Omega_i])}\cdot\frac{\prod_{i=1}^{m-1}Q^\T[\Delta_i](\bx[\Delta_i]}{\prod_{i=1}^{m-1}P^\T[ \Delta_i](\bx[\Delta_i])}\\
		&=\sum_{\bx\in \D(\bX)}P(\bx)\log\frac{\prod_{i=1}^mP^\T[\Omega_i](\bx[\Omega_i])}{\prod_{i=1}^mQ^\T[\Omega_i](\bx[\Omega_i])}+\sum_{\bx\in \D(X)}P(\bx)\log\frac{\prod_{i=1}^{m-1}Q^\T[\Delta_i](\bx[\Delta_i])}{\prod_{i=1}^{m-1}P^\T[\Delta_i](\bx[\Delta_i])}\\
		&=\sum_{\bx\in \D(\bX)}P(\bx) \sum_{i=1}^m\log \frac{P^\T[\Omega_i](\bx[ \Omega_i])}{Q^\T[\Omega_i](\bx[\Omega_i])}+\sum_{\bx\in \D(\bX)}P(\bx) \sum_{i=1}^{m-1}\log \frac{Q^\T[\Delta_i](\bx[\Delta_i])}{P^\T[\Delta_i](\bx[\Delta_i])}\\
		&=\sum_{i=1}^m\sum_{\bx\in \D(\Omega_i)}P[\Omega_i](\bx)\log \frac{P^\T[\Omega_i](\bx)}{Q^\T[\Omega_i](\bx)}+ \sum_{i=1}^{m-1}\sum_{\bx\in \D(\Delta_i)}P[ \Delta_i](\bx)\log \frac{Q^\T[\Delta_i](\bx)}{P^\T[\Delta_i](\bx}\\
		&=\sum_{i=1}^m\sum_{\bx\in \D(\Omega_i)}P^\T[\Omega_i](\bx)\log \frac{P^\T[\Omega_i](\bx)}{Q^\T[\Omega_i](\bx)}+ \sum_{i=1}^{m-1}\sum_{\bx\in \D(\Delta_i)}P^\T[\Delta_i](\bx)\log \frac{Q^\T[\Delta_i](\bx)}{P^\T[\Delta_i](\bx}\\
		\eat{		
			&= \sum_{i=1}^m\sum_{\bx_{\mid \Omega_i}}P^\T_{\Omega_i}(\bx_{\mid \Omega_i})\log \frac{P_{\Omega_i}^\T(\bx_{\mid \Omega_i})}{Q_{\Omega_i}^\T(\bx_{\mid \Omega_i})}+ \sum_{i=2}^m\sum_{\bx_{\mid \Delta_i}}P^\T_{\Delta_i}(\bx_{\mid \Delta_i})\log \frac{Q_{\Delta_i}^\T(\bx_{\mid \Delta_i})}{P_{\Delta_i}^\T(\bx_{\mid \Delta_i})}\\}
		&=\sum_{i=1}^m \E_{P_\T}\left[\log \frac{P_\T[\Omega_i]}{Q_\T[\Omega_i]}\right]+\sum_{i=1}^{m-1}\E_{P_\T}\left[\log \frac{Q_\T[\Delta_i]}{P_\T[\Delta_i]}\right]\\
		&=\E_{P_\T}\left[\sum_{i=1}^m \log \frac{P_\T[\Omega_i]}{Q_\T[\Omega_i]}-\sum_{i=1}^{m-1}\log \frac{P_\T[\Delta_i]}{Q_\T[\Delta_i]}\right]\\
		&=\E_{P_\T}\left[\log \frac{\prod_{i=1}^m P_\T[\Omega_i]}{ \prod_{i=1}^m Q_\T[\Omega_i]}-\log \frac{\prod_{i=1}^{m-1} P_\T[\Delta_i]}{\prod_{i=1}^{m-1}Q_\T[\Delta_i]}\right]\\
		&= \E_{P_\T}\left[\log \frac{P^\T(X_1,\dots,X_n)}{Q^\T(X_1,\dots,X_n)}\right]\\
		&=D_{KL}(P^{\T},Q^{\T}),
	\end{align}
	%\end{footnotesize}
	and since the right term is nonnegative and equals zero if and only if we
	choose $Q^\T=P^\T$, the result follows.
\end{proof}
\subsection{Proof of Theorem~\ref{prop:JMeasureKL}}
\begin{reptheorem}{\ref{prop:JMeasureKL}}
	%\label{prop:JMeasureKL}
	\JKLThm
\end{reptheorem}
\begin{proof}
	From Lemma~\ref{lem:Pt}, we have that $\min_{Q\models\T}D_{KL}(P||Q)=D_{KL}(P||P_\T)$, where $P_\T$ is defined in~\eqref{eq:Pt}. 
	Therefore, we prove that $\J(\T)=D_{KL}(P||P_\T)$.
	\begin{align}
		&D_{KL}(P||P_\T)=E_{P(\bX)}\left[\log \frac{P(\bX)}{P_\T(\bX)} \right] \\
		&=\sum_{\bx\in \D(\bX)}P(\bx)\log \frac{P(\bx)\prod_{i=1}^{m-1}P[\Delta_i](\bx[\Delta_i])}{\prod_{i=1}^mP[\Omega_i](\bx[\Omega_i])} \\
		&=\sum_{\bx\in \D(\bX)}P(\bx)\log P(\bx)+\sum_{\bx}P(\bx)\log \frac{\prod_{i=1}^{m-1}P[\Delta_i](\bx[\Delta_i])}{\prod_{i=1}^mP[\Omega_i](\bx[\Omega_i])}\\
		&=-H(\bX)+\sum_{\bx}P(\bx)\log \prod_{i=1}^{m-1}P[\Delta_i](\bx[\Delta_i])\nonumber \\
		&\qquad\qquad\qquad\qquad-\sum_{\bx}P(\bx) \log\prod_{i=1}^mP[\Omega_i](\bx[\Omega_i])\\
		&=-H(\bX)+\sum_{\bx}P(\bx)\sum_{i=1}^{m-1}\log P[\Delta_i](\bx[\Delta_i])\nonumber\\
		&\qquad\qquad\qquad\qquad-\sum_{\bx}P(\bx) \sum_{i=1}^m\log P[\Omega_i](\bx[\Omega_i])\\
		&=-H(\bX)+\sum_{i=1}^{m-1}\sum_{\by\in \D(\Delta_i)}P[\Delta_i](\by)\log P[\Delta_i](\by)\nonumber\\
		&\qquad\qquad\qquad\qquad-\sum_{i=1}^m\sum_{\by\in \D(\Omega_i)}P[\Omega_i](\by)\log P[\Omega_i](\by)\\
		\eat{
		&=\sum_{\bx}P(\bx)\log P(\bx)+\sum_{i=1}^{m-1} \sum_{\bx[\Delta_i]}P[\Delta_i](\bx[\Delta_i])\log P[\Delta_i](\bx[ \Delta_i])-\sum_{i=1}^m\sum_{\bx[\Omega_i]}P[\Omega_i](\bx_{[\Omega_i]}) \log P_{\Omega_i}(\bx_{\mid \Omega_i})\\}
		&=-H(\bX)-\Sigma_{i=1}^{m-1}H(\Delta_i)+\Sigma_{i=1}^mH(\Omega_i)&\\
		&=\J(\T).&
	\end{align}
 \end{proof}
		\section{Proof of Theorem \ref{thm: confidence bound for random entropy}
and Corollary \ref{cor: confidence bound for MI} \label{sec:Proof-of-MI-Theorem}}

In this section, we prove Theorem \ref{thm: confidence bound for random entropy}
and afterwards Corollary \ref{cor: confidence bound for MI}. We begin
with the proof of Theorem \ref{thm: confidence bound for random entropy},
which is comprised of three main steps: (I) The proof of Proposition
\ref{prop: Expected value of entropy} (analysis of $\E[H(A_{S})]$).
(II) The proof of  Proposition \ref{prop: concentration of conditional entropy}
(concentration of $H(A_{S})$ to its expected value). (III) A combination
of both propositions to establish a confidence interval.

\subsection{Proof of Proposition \ref{prop: Expected value of entropy}}

Let $A$ be the random variable that agrees with the distribution
of $A_{s}$, conditioned on $S=s$. Bayes rule implies that for any
$i\in[d_{A}]$ 
\begin{equation}
\P[A=i]=\sum_{s}\P[S=s]\cdot\P[A_{s}=i].
\end{equation}
Before delving into the proof of Proposition \ref{prop: Expected value of entropy},
we note that it is fairly intuitive that $A$ is distributed uniformly
over $[d_{A}]$, and so its entropy equals to $\log d_{A}$. This
is rigorously established in the next lemma. An analogous statement
holds for $B$. 
\begin{lem}
\label{lem:unconditionalentropies}It holds that $H(A)=\log d_{A}$
and $H(B)=\log d_{B}$.
\end{lem}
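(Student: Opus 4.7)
The plan is to argue by symmetry of the random relation model. By Bayes' rule, for any $i \in [d_A]$ we have
\begin{equation}
\P[A=i] = \sum_s \P[S=s] \cdot P_s[A=i] = \E_S\!\left[\frac{|S \cap (\{i\} \times [d_B])|}{\eta}\right],
\end{equation}
so the task reduces to showing that these expectations are equal across all $i \in [d_A]$, forcing the value $1/d_A$ by normalization.

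The first step is to observe that the distribution of $S$ is invariant under any permutation of the first coordinate: if $\pi$ is a permutation of $[d_A]$, then the set $\pi(S) \eqdef \{(\pi(a),b) : (a,b) \in S\}$ is again a uniformly random size-$\eta$ subset of $[d_A] \times [d_B]$, and therefore equidistributed with $S$. Consequently, for any $i, i' \in [d_A]$, taking $\pi$ to be the transposition swapping $i$ and $i'$ yields $\E_S[|S \cap (\{i\} \times [d_B])|] = \E_S[|S \cap (\{i'\} \times [d_B])|]$.

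Second, since $\sum_{i \in [d_A]} |S \cap (\{i\} \times [d_B])| = |S| = \eta$ holds pointwise (with probability $1$), taking expectations and using the common value across $i$ gives $\E_S[|S \cap (\{i\} \times [d_B])|] = \eta/d_A$ for every $i$. Plugging back, $\P[A=i] = 1/d_A$, hence $A$ is uniform on $[d_A]$ and $H(A) = \log d_A$. The claim for $B$ follows by the identical argument, now invoking invariance under permutations of the second coordinate.

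There is no real obstacle here; the lemma is a direct consequence of the coordinate-wise exchangeability of the uniform random subset $S$. The only thing to be careful about is stating the symmetry precisely (invariance of the \emph{distribution} of $S$ under coordinate permutations, not of $S$ itself) and noting that the deterministic identity $\sum_i |S \cap (\{i\}\times[d_B])| = \eta$ is what allows us to pin down the common marginal without any further computation.
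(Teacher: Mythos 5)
Your proof is correct and is essentially the same argument as the paper's: both reduce $\P[A=i]$ to an expectation over the uniformly random set $S$ and conclude uniformity by symmetry. The only cosmetic difference is that the paper explicitly computes the per-tuple inclusion probability $\P[(i,j)\in S]=\frac{\eta}{d_A d_B}$ and sums over $j$, whereas you invoke permutation invariance of the first coordinate together with the pointwise identity $\sum_{i}|S\cap(\{i\}\times[d_B])|=\eta$ to pin down the same value without that computation.
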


\begin{proof}
We prove only for $A$. For any $i\in[d_{A}]$, by elementary arguments
\begin{align}
\P(A=i) & =\sum_{j\in[d_{B}]}\P(A=i,B=j)\\
 & =\sum_{j\in[d_{B}]}\sum_{s\colon|s|=\eta}\P(A=i,B=j,S=s)\\
 & =\sum_{s\colon|s|=\eta}\sum_{j\in[d_{B}]}\P(S=s)\cdot\P(A=i,B=j\mid S=s)\\
 & =\sum_{s\colon|s|=\eta}\sum_{j\in[d_{B}]}\P(S=s)\cdot\P(A_{s}=i,B_{s}=j)\\
 & =\sum_{s\colon|s|=\eta}\P(S=s)\sum_{j\in[d_{B}]}\frac{1}{\eta}\cdot\indicator((i,j)\in s)\\
 & =\sum_{j\in[d_{B}]}\frac{1}{\eta}\sum_{s\colon|s|=\eta}\P(S=s)\cdot\indicator((i,j)\in s)\\
 & =\frac{d_{B}}{\eta}\cdot\sum_{s\colon|s|=\eta}\P(S=s)\cdot\indicator((i,1)\in s)\\
 & =\frac{d_{B}}{\eta}\cdot\P\left[(i,1)\in S\right]\\
 & =\frac{d_{B}}{\eta}\cdot\frac{\eta}{d_{A}d_{B}}=\frac{1}{d_{A}}.
\end{align}
\end{proof}
We now turn to the more challenging task of bounding the average entropy
$\E[H(A_{S})]$, and the proof of Proposition \ref{prop: Expected value of entropy},
which shows that $\E[H(A_{S})]=H(A\mid S)$ is close its unconditional
value $H(A)=\log d_{A}$. 
\begin{proof}[Proof of Proposition \ref{prop: Expected value of entropy}]
Recall that we denote $\bar{\rho}\eqdef\frac{d_{A}d_{B}}{\eta}-1.$
Let $U_{S}(i,j)\eqdef\indicator\{(i,j)\in S\}\}$ be $\{0,1\}$ random
variables, and further let 
\begin{equation}
Y_{S}(i)\eqdef\frac{1}{d_{B}}\sum_{j\in[d_{B}]}U_{S}(i,j),
\end{equation}
for which $\sum_{i\in[d_{A}]}Y_{S}(i)=\frac{\eta}{d_{B}}$ holds with
probability $1$. It should be noted that $\{U_{S}(i,j)\}_{i\in[d_{A}],j\in[d_{B}]}$
are identically distributed, but \emph{not }independent (they are
\emph{exchangeable}). With this notation 
\begin{equation}
H(A_{S})=\sum_{i\in[d_{A}]}-\frac{d_{B}Y_{S}(i)}{\eta}\log\left(\frac{d_{B}Y_{S}(i)}{\eta}\right)=\frac{d_{B}}{\eta}\sum_{i\in[d_{A}]}-Y_{S}(i)\log(Y_{S}(i))+\log\frac{\eta}{d_{B}},
\end{equation}
and so 
\begin{align}
\E[H(A_{S})] & =\frac{d_{B}}{\eta}\sum_{i\in[d_{A}]}-\E\left[Y_{S}(i)\cdot\log(Y_{S}(i))\right]+\log\frac{\eta}{d_{B}}\\
 & \trre[=,a]-\frac{d_{A}d_{B}}{\eta}\E\left[Y_{S}(1)\cdot\log(Y_{S}(1))\right]+\log\frac{\eta}{d_{B}}\\
 & \trre[=,b]-\frac{d_{A}d_{B}}{\eta}\E\left[Y_{S}\cdot\log(Y_{S})\right]+\log\frac{\eta}{d_{B}}\\
 & \trre[=,c]-\frac{d_{A}d_{B}}{\eta}\cdot\left(\E[Y_{S}]\cdot\log\left(\E[Y_{S}]\right)+\Ent(Y_{S})\right)+\log\frac{\eta}{d_{B}}\\
 & \trre[=,d]-\frac{1}{1+\overline{\rho}}\cdot\Ent(Y_{S})+\log d_{A},\label{eq: conditional entropy expression}
\end{align}
where $(a)$ follows from the symmetry of the distribution of $S$,
in $(b)$ we denote $Y_{S}\eqdef Y_{S}(1)$ for brevity, in $(c)$
we define the functional entropy \cite[Chapter 5]{boucheron2013concentration}
of a non-negative random variable $X$ by 
\begin{equation}
\Ent(X)\eqdef\E\left[X\log X\right]-\E[X]\log\left[\E(X)\right],
\end{equation}
and $(d)$ follows since $\E[Y_{S}]=\frac{\eta}{d_{A}d_{B}}=\frac{1}{1+\overline{\rho}}$. 

We continue the proof by bounding $\Ent(Y_{S})$, with $Y_{S}=\frac{1}{d_{B}}\sum_{j\in[d_{B}]}U_{S}(1,j)$,
and omit the first index (which is constant $1$) for notational brevity.
As mentioned before, $\{U_{S}(j)\}_{j\in[d_{B}]}$ are not independent
random variables, but as we shall see, this dependence is rather weak.
Each of them (marginally) follows the identical probability distribution
\begin{equation}
\P\left[U_{S}(j)=1\right]=\P\left[\indicator\{(1,j)\in S\}\}\right]=\frac{\eta}{d_{A}d_{B}}=\frac{1}{1+\overline{\rho}}.
\end{equation}
Thus, we define another sequence of random variables $\{V(j)\}_{j\in[d_{B}]}$
that are i.i.d., and where $V(j)$ is distributed as $U_{S}(j)$,
that is
\begin{equation}
\P\left[V(j)=1\right]=\P\left[U_{S}(j)=1\right].
\end{equation}
We then denote, analogously to $Y_{S}$, the random variable $\tilde{Y}\eqdef\frac{1}{d_{B}}\sum_{j\in[d_{B}]}V(j)$.
We expect that the distribution of $\tilde{Y}$ is close to that of
$Y_{S}$, and thus upper bound $\Ent(Y_{S})$ as follows:
\begin{align}
\Ent(Y_{S}) & \leq\Ent(\tilde{Y})+\left|\Ent(Y_{S})-\Ent(\tilde{Y})\right|\label{eq: bound on functional entropy prelim decomposition}\\
 & \leq\frac{2\overline{\rho}\log(1/\overline{\rho})}{1-\overline{\rho}}\cdot\frac{1}{d_{B}}+\sqrt{\frac{2\log^{2}\left((1+\overline{\rho})d_{B}\right)}{(1+\overline{\rho})d_{B}}},\label{eq: bound on functional entropy}
\end{align}
where the first term in (\ref{eq: bound on functional entropy prelim decomposition})
is bounded in the following Lemma \ref{lem: Bound of functional entropy},
and the second term in (\ref{eq: bound on functional entropy prelim decomposition})
is bounded afterwards in Lemma \ref{lem: Bound on functional entropy difference }.
Equipped with this bound on $\Ent(Y_{S})$, we return to (\ref{eq: conditional entropy expression})
to obtain 
\begin{align}
\E[H(A_{S})] & \trre[\geq,a]\log d_{A}-\frac{2\overline{\rho}\log(1/\overline{\rho})}{(1-\overline{\rho}^{2})d_{B}}-\sqrt{\frac{2\log^{2}(d_{B})}{(1+\overline{\rho})^{2}d_{B}}}\\
 & \trre[\geq,b]\log d_{A}-\frac{1}{d_{B}}-\sqrt{\frac{2\log^{2}(d_{B})}{(1+\overline{\rho})^{2}d_{B}}}\\
 & \trre[\geq,c]\log d_{A}-2\sqrt{\frac{\log^{2}(d_{B})}{d_{B}}},
\end{align}
where $(a)$ follows from (\ref{eq: conditional entropy expression})
and (\ref{eq: bound on functional entropy}), $(b)$ follows since
$\frac{2\overline{\rho}\log(1/\overline{\rho})}{(1-\overline{\rho}^{2})}\leq1$
for all $\overline{\rho}\ge0$, and $(c)$ is a slight weakening of
the bound. The result of the proposition then follows with the definition
of $C(d)$ stated in (\ref{eq: C constant def}). To complete the
proof, it remains to establish (\ref{eq: bound on functional entropy}).
Next, this is stated and then proved in Lemmas \ref{lem: Bound of functional entropy}
and \ref{lem: Bound on functional entropy difference }.
\end{proof}
\begin{lem}
\label{lem: Bound of functional entropy}Under the setting of the
proof of Proposition \ref{prop: Expected value of entropy}
\begin{equation}
\Ent(\tilde{Y})\leq\frac{2\overline{\rho}\log(1/\overline{\rho})}{1-\overline{\rho}}\cdot\frac{1}{d_{B}}.\label{eq: bound on functional entropy of bernoulli}
\end{equation}
\end{lem}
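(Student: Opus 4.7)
The plan is to use the tensorization of the functional entropy for independent random variables, reducing the problem to controlling the one-coordinate entropy of a two-valued function of a single Bernoulli random variable, and then to invoke an asymmetric Bernoulli LSI to get the stated bound.

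First, since $V(1), \dots, V(d_B)$ are i.i.d., the subadditivity (tensorization) property of functional entropy yields
\begin{equation}
\Ent(\tilde{Y}) \;\le\; \sum_{j=1}^{d_B} \E\bigl[\Ent_j(\tilde{Y})\bigr],
\end{equation}
where $\Ent_j(\tilde{Y})$ denotes the functional entropy of $\tilde{Y}$ taken only with respect to $V(j)$, conditionally on $V(-j) \eqdef \{V(i)\}_{i\neq j}$. Fixing $V(-j) = v_{-j}$ and writing $s_j \eqdef \sum_{i\neq j} v_i$, the conditional distribution of $\tilde{Y}$ is supported on two points $b_j \eqdef s_j / d_B$ and $a_j \eqdef (s_j+1)/d_B$, with probabilities $1-p$ and $p$ respectively, where $p = 1/(1+\overline{\rho})$.

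Next, I would apply the asymmetric Bernoulli logarithmic Sobolev inequality at each coordinate. For a Bernoulli$(p)$ random variable $V$ and a nonnegative function $f$ with $f(1)=a_j$, $f(0)=b_j$, a direct computation of $\Ent_j$ together with the elementary bound
\begin{equation}
p a \log a + (1-p) b \log b - (pa+(1-p)b)\log(pa+(1-p)b) \;\le\; \frac{p(1-p)\log(1/p)}{1 - p}\cdot \frac{(a-b)^2}{p a + (1-p) b},
\end{equation}
which follows from the standard modified LSI for asymmetric Bernoulli measures, controls $\Ent_j(\tilde{Y})$ by a quantity depending only on $(a_j - b_j)^2 = 1/d_B^2$ and the conditional mean. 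Substituting $p = 1/(1+\overline{\rho})$, the prefactor simplifies to a multiple of $\overline{\rho}\log(1/\overline{\rho})/(1-\overline{\rho})$.

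Finally, summing over $j \in [d_B]$ and taking expectation over $V(-j)$, the inner term $1/(pa_j + (1-p)b_j)$ averages against the distribution of $s_j$ to a term proportional to $1/\E[\tilde{Y}] = 1+\overline{\rho}$, while the factor $(a_j - b_j)^2 = 1/d_B^2$ yields a total of $1/d_B$ after summing the $d_B$ coordinates. Combining these factors and simplifying the constants matches the claimed bound
\begin{equation}
\Ent(\tilde{Y}) \;\le\; \frac{2\overline{\rho}\log(1/\overline{\rho})}{1-\overline{\rho}} \cdot \frac{1}{d_B}.
\end{equation}
The main obstacle is the careful tracking of constants through the asymmetric Bernoulli LSI, particularly ensuring that the expression $\log(1/p)/(1-p)$ yields the tight constant $2\overline{\rho}\log(1/\overline{\rho})/(1-\overline{\rho})$ and not a weaker one; a secondary subtlety is handling the regime $\overline{\rho} \ge 1$, where the right-hand side remains well-defined (the ratio $\overline{\rho}\log(1/\overline{\rho})/(1-\overline{\rho})$ is a positive continuous function for all $\overline{\rho} > 0$, $\overline{\rho} \neq 1$, with removable singularity at $\overline{\rho}=1$), so one may need a limiting or case-split argument to apply the LSI uniformly.
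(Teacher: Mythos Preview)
Your high-level plan (Bernoulli LSI plus tensorization) is the same mechanism the paper uses, but there are two genuine gaps that prevent the argument from going through as written.

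First, the single-coordinate inequality you state is not the standard asymmetric Bernoulli LSI, and the constant is wrong. The correct form (as invoked in the paper) is
\[
\Ent_j(\tilde Y)=\Ent\bigl((\sqrt{\tilde Y})^2\bigr)\le \frac{p(1-p)}{1-2p}\log\!\left(\frac{1-p}{p}\right)\cdot(\sqrt{a_j}-\sqrt{b_j})^2,
\]
with prefactor $\frac{p(1-p)}{1-2p}\log\frac{1-p}{p}$, not $\frac{p(1-p)\log(1/p)}{1-p}=p\log(1/p)$. With $p=\frac{1}{1+\overline\rho}$ your stated prefactor becomes $\frac{\log(1+\overline\rho)}{1+\overline\rho}$, which is \emph{not} a multiple of $\frac{\overline\rho\log(1/\overline\rho)}{1-\overline\rho}$; only the correct LSI constant yields $\frac{\overline\rho\log(1/\overline\rho)}{1-\overline\rho^2}$ and hence the target bound after combining with the Efron--Stein term.

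Second, the averaging step is not free. You write that $1/(pa_j+(1-p)b_j)=d_B/(s_j+p)$ ``averages to a term proportional to $1/\E[\tilde Y]$'', but Jensen's inequality goes the wrong way here: $\E[1/(s_j+p)]\ge 1/\E[s_j+p]$, and the event $s_j=0$ contributes $1/p=1+\overline\rho$, so an upper bound genuinely requires concentration of $s_j$ around its mean. The paper handles exactly this issue by working with the Efron--Stein variance of $f=\sqrt{\tilde Y}$ (which produces the $(\sqrt{a_j}-\sqrt{b_j})^2$ terms), splitting on the Chernoff event $\mathcal A=\{\tfrac{1}{d_B}\sum_i V(i)<\tfrac12\E[V(1)]\}$, and using the $\sqrt{1+\overline\rho}$-Lipschitzness of $t\mapsto\sqrt t$ on $[\tfrac{1}{4(1+\overline\rho)},1]$ on $\mathcal A^c$. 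This yields $\mathcal E(f)\le \frac{2(1+\overline\rho)}{d_B}$, and multiplying by the correct LSI constant gives the stated bound. Your proposal would be repaired by replacing the ad hoc inequality with the standard LSI on $\sqrt{\tilde Y}$ and inserting this same Chernoff-based event split to control $\E[d_B/(s_j+p)]$ (or equivalently $(\sqrt{a_j}-\sqrt{b_j})^2$); once both fixes are made, your argument and the paper's are essentially identical.
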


\begin{proof}
Recall that $\{V(j)\}_{j\in[d_{B}]}$ are $\{0,1\}$ i.i.d. Bernoulli
random variables. We denote by $R(j)\eqdef2V(j)-1$ the corresponding
$\{-1,1\}$ variables. Further denote the function
\begin{equation}
f\left(R(1),\ldots,R(d_{B})\right)\eqdef\sqrt{\frac{1}{2d_{B}}\sum_{j\in[d_{B}]}(R(j)+1)},
\end{equation}
so that using the above notation, $f^{2}(R(1),\ldots,R(d_{B}))=\sum_{j\in[d_{B}]}V(j)=\tilde{Y}$.
Thus, our goal is to bound $\Ent(f^{2})$, and to this end, we will
utilize an LSI for asymmetric Bernoulli random variables \cite[Chapter 5]{boucheron2013concentration}
restated in Lemma \ref{lem: LSI for asymmetric Bernoulli} in Appendix
\ref{sec:Auxiliary-results}. Let $\mathcal{E}(g)$ be the Efron-Stein
variance of $g$ defined in the statement of Lemma \ref{lem: LSI for asymmetric Bernoulli}
in (\ref{eq: Efron-Stein variance}) (note that it also depends on
$p$, the assumed distribution of $\{R(j)\}_{j\in[d_{B}]}$). Then,
the LSI in Lemma \ref{lem: LSI for asymmetric Bernoulli} states that
\begin{equation}
\Ent(g^{2})\leq\frac{p(1-p)}{1-2p}\log\left(\frac{1-p}{p}\right)\cdot\mathcal{E}(g).\label{eq: LSI on entropy}
\end{equation}
First, we evaluate the pre-factor in (\ref{eq: LSI on entropy}).
Per our definitions, it holds that 
\begin{equation}
p=\P[R(1)=1]=\P[U_{S}(1)=1]=\frac{\eta}{d_{A}d_{B}}=\frac{1}{1+\overline{\rho}},
\end{equation}
and so, the pre-factor in the functional entropy bound (\ref{eq: LSI on entropy})
is 
\begin{align}
\frac{p(1-p)}{1-2p}\log\frac{1-p}{p} & =\frac{\overline{\rho}}{1-\overline{\rho}^{2}}\log(1/\overline{\rho}).\label{eq: pre-factor evaluation in LSI}
\end{align}
Second, we bound the Efron-Stein variance as 
\begin{align}
\mathcal{E}(f) & =\E\left[\sum_{j\in[d_{B}]}\left(\sqrt{\frac{1}{2d_{B}}\sum_{i\in[d_{B}]}(R(i)+1)}-\sqrt{\frac{1}{2d_{B}}\sum_{i\in[d_{B}]}((-1)^{\indicator\{i=j\}}R(i)+1)}\right)^{2}\right]\\
 & =\E\left[\sum_{j\in[d_{B}]}\left(\sqrt{\frac{1}{2d_{B}}\sum_{i\in[d_{B}]}(R(i)+1)}-\sqrt{\frac{1}{2d_{B}}\sum_{i\in[d_{B}]}(R(i)+1)-\frac{R(j)}{d_{B}}}\right)^{2}\right]\\
 & =\E\left[\sum_{j\in[d_{B}]}\left(\sqrt{\frac{1}{d_{B}}\sum_{i\in[d_{B}]}V(i)}-\sqrt{\frac{1}{d_{B}}\sum_{i\in[d_{B}]}V(i)-\frac{2V(j)-1}{d_{B}}}\right)^{2}\right]
\end{align}
using $V(j)\eqdef\frac{R(j)+1}{2}$ (i.e., returning to $\{0,1\}$
random variables). Note that $\E[V(j)]=\frac{1}{1+\overline{\rho}}$
for all $j\in[d_{B}]$. Consider the event 
\begin{equation}
\mathcal{A}\eqdef\left\{ \frac{1}{d_{B}}\sum_{i\in[d_{B}]}V(i)<\frac{1}{2}\E[V(i)]\right\} .
\end{equation}
Utilizing the relative Chernoff's bound ((\ref{eq: relative chernoff})
in Lemma \ref{lem: relative Chernoff for binomial} in Appendix \ref{sec:Auxiliary-results})
for $\mathcal{A}$ (with $\xi=\frac{1}{2}$) results
\begin{equation}
\P[\mathcal{A}]\leq e^{-\frac{d_{B}}{12(1+\overline{\rho})}}.
\end{equation}
On $\mathcal{A}^{c}$ (the complementary event of $\mathcal{A}$)
it holds that $\frac{1}{d_{B}}\sum_{i\in[d_{B}]}V(i)\geq\frac{1}{2(1+\overline{\rho})}$.
Under the assumption of the Proposition \ref{prop: Expected value of entropy}
$\eta\geq60d_{A}$ it definitely holds that $d_{B}\geq4(1+\overline{\rho})$,
and so $|\frac{2V(j)-1}{d_{B}}|\leq\E[V(i)]=\frac{1}{(1+\overline{\rho})}$.
We continue with the bound on $\mathcal{E}(f)$ as follows:
\begin{align}
\mathcal{E}(f) & =\P(\mathcal{A})\cdot\E\left[\sum_{j\in[d_{B}]}\left(\sqrt{\frac{1}{d_{B}}\sum_{i\in[d_{B}]}V(i)}-\sqrt{\frac{1}{d_{B}}\sum_{i\in[d_{B}]}V(i)-\frac{2V(j)-1}{d_{B}}}\right)^{2}\,\middle\vert\,\mathcal{A}\right]\nonumber \\
 & \hphantom{=}+\E\left[\sum_{j\in[d_{B}]}\left(\sqrt{\frac{1}{d_{B}}\sum_{i\in[d_{B}]}V(i)}-\sqrt{\frac{1}{d_{B}}\sum_{i\in[d_{B}]}V(i)-\frac{2V(j)-1}{d_{B}}}\right)^{2}\cdot\indicator\left\{ \mathcal{A}^{c}\right\} \right].\label{eq: ES as two terms}
\end{align}
We separately bound each of the two terms of (\ref{eq: ES as two terms}).
For the first term in (\ref{eq: ES as two terms}), it holds with
probability $1$ that 
\begin{equation}
\sum_{j\in[d_{B}]}\left(\sqrt{\frac{1}{d_{B}}\sum_{i\in[d_{B}]}V(i)}-\sqrt{\frac{1}{d_{B}}\sum_{i\in[d_{B}]}V(i)-\frac{2V(j)-1}{d_{B}}}\right)^{2}\leq1.
\end{equation}
To see this, note that this trivially holds if all $V(i)=0$. Otherwise,
we use the fact that the concavity of $\sqrt{t}$ implies that $(\sqrt{t}-\sqrt{t-\tau})^{2}$
for $t\in[\frac{1}{d_{B}},1]$ and $|\tau|\leq\frac{1}{d_{B}}$ is
maximized at $t=\frac{1}{d_{B}}$. Hence, 
\begin{equation}
\P(\mathcal{A})\cdot\E\left[\sum_{j\in[d_{B}]}\left(\sqrt{\frac{1}{d_{B}}\sum_{i\in[d_{B}]}V(i)}-\sqrt{\frac{1}{d_{B}}\sum_{i\in[d_{B}]}V(i)-\frac{2V(j)-1}{d_{B}}}\right)^{2}\,\middle\vert\,\mathcal{A}\right]\leq e^{-\frac{d_{B}}{12(1+\overline{\rho})}}.\label{eq: ES bound around zero case}
\end{equation}
This term decays exponentially fast to zero with $d_{B}$. For the
second term of (\ref{eq: ES as two terms}), it holds that $\frac{1}{d_{B}}\sum_{i\in[d_{B}]}V(i)\geq\frac{1}{2(1+\overline{\rho})}$
and so 
\begin{equation}
\frac{1}{d_{B}}\sum_{i\in[d_{B}]}V(i)-\frac{2V(j)-1}{d_{B}}\geq\frac{1}{2(1+\overline{\rho})}-\frac{1}{2d_{B}}\geq\frac{1}{4(1+\overline{\rho})},
\end{equation}
where the last inequality holds under the assumption that $d_{B}\geq4(1+\overline{\rho})$.
On the interval $[\frac{1}{4}\frac{1}{1+\overline{\rho}},1]$ the
function $t\to\sqrt{t}$ has maximal derivative 
\begin{equation}
\max_{t\in[\frac{1}{4(1+\overline{\rho})},1]}\frac{\d\sqrt{t}}{\d t}=\max_{t\in[\frac{1}{4(1+\overline{\rho})},1]}\frac{1}{2\sqrt{t}}=\frac{1}{2\sqrt{\frac{1}{4(1+\overline{\rho})}}}=\sqrt{1+\overline{\rho}}.
\end{equation}
In other words, the square-root function is $(\sqrt{1+\overline{\rho}})$-Lipschitz
on that interval. Thus, for any $t_{0},t_{1}\in[\frac{1}{2}\frac{1}{1+\overline{\rho}},1]$
it holds that 
\begin{equation}
\left|\sqrt{t_{0}}-\sqrt{t_{1}}\right|\leq\sqrt{1+\overline{\rho}}\cdot|t_{0}-t_{1}|.
\end{equation}
Applying this to the second term of (\ref{eq: ES as two terms}) results
\begin{align}
 & \E\left[\sum_{j\in[d_{B}]}\left(\sqrt{\frac{1}{d_{B}}\sum_{i\in[d_{B}]}V(i)}-\sqrt{\frac{1}{d_{B}}\sum_{i\in[d_{B}]}V(i)-\frac{2V(j)-1}{d_{B}}}\right)^{2}\cdot\indicator\left\{ \mathcal{A}^{c}\right\} \right]\nonumber \\
 & \leq\E\left[\sum_{j\in[d_{B}]}(1+\overline{\rho})\cdot\frac{1}{d_{B}}\right]\\
 & \leq\frac{1+\overline{\rho}}{d_{B}}.\label{eq: ES bound Lipschitz case}
\end{align}
Substituting the bounds (\ref{eq: ES bound around zero case}) and
(\ref{eq: ES bound Lipschitz case}) into (\ref{eq: ES as two terms})
results
\begin{equation}
\mathcal{E}(f)\leq e^{-\frac{d_{B}}{12(1+\overline{\rho})}}+\frac{1+\overline{\rho}}{d_{B}}\leq\frac{2(1+\overline{\rho})}{d_{B}},\label{eq: ES bound}
\end{equation}
where the last inequality holds since by the assumption $\eta\geq60d_{A}$
it holds that $d_{B}\geq60(1+\overline{\rho})$, and since it can
be numerically verified that $e^{-t}\leq\frac{1}{12t}$ for all $t\geq5$.
Thus, from the LSI (\ref{eq: LSI on entropy}), the computation of
the pre-factor in (\ref{eq: pre-factor evaluation in LSI}) and from
(\ref{eq: ES bound}), the entropy of $\tilde{Y}$ is bounded as claimed
in the statement of the lemma (\ref{eq: bound on functional entropy of bernoulli}).
\end{proof}
\begin{lem}
\label{lem: Bound on functional entropy difference }Under the setting
of the proof of Proposition \ref{prop: Expected value of entropy}
\begin{equation}
\left|\Ent(Y_{S})-\Ent(\tilde{Y})\right|\leq\sqrt{\frac{2\log^{2}(d_{B})}{d_{B}}}.\label{eq: bound on functional entropy difference statement}
\end{equation}
\end{lem}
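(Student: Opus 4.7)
The key observation is that $\E[Y_S] = \E[\tilde Y] = \frac{1}{1+\overline{\rho}}$ by construction, so the linear term in the definition~\eqref{eq:functionalEntropy} of functional entropy cancels, giving
$$\Ent(Y_S) - \Ent(\tilde Y) = \E[f(Y_S)] - \E[f(\tilde Y)], \qquad f(t) \eqdef t\log t.$$
My plan is to bound this difference by coupling $Y_S$ and $\tilde Y$ on a common probability space and invoking the modulus of continuity of $f$ on $[0,1]$, then controlling the resulting moment of $|Y_S - \tilde Y|$ via a crude variance estimate.

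For the coupling, I take the simplest choice: draw $\{U_S(1,j)\}_{j\in[d_B]}$ from the without-replacement model and $\{V(j)\}_{j\in[d_B]}$ \emph{independently} as i.i.d.\ Bernoulli with the common mean $p = \eta/(d_A d_B)$. This is a legitimate joint distribution since only the marginal laws enter the two expectations above. Let $D \eqdef Y_S - \tilde Y$. Because the two blocks are independent and share the same mean,
$$\E[D^2] = \E[(Y_S - \E Y_S)^2] + \E[(\tilde Y - \E \tilde Y)^2].$$
The second summand equals $p(1-p)/d_B \leq 1/(4 d_B)$ by the i.i.d.\ structure. For the first, the indicators $\{U_S(1,j)\}_{j\in[d_B]}$ arise from sampling without replacement and are therefore negatively associated, so the variance of their average is no larger than in the i.i.d.\ case. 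Hence $\E[D^2] \leq 1/(2 d_B)$.

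Next, I would apply a Fannes-type continuity estimate for $f(t) = t\log t$ on $[0,1]$: for $a,b \in [0,1]$ with $|a-b| \leq 1/e$,
$$|f(a) - f(b)| \leq |a-b|\log\frac{1}{|a-b|} + |a-b|,$$
while the range $|a-b| > 1/e$ contributes only a bounded constant that can be absorbed into the target bound (and in fact only occurs on an event of vanishing probability under the variance estimate above). Taking expectations under the coupling and applying Jensen's inequality to the concave map $u \mapsto u\log(1/u) + u$ on $[0, 1/e]$ yields
$$|\Ent(Y_S) - \Ent(\tilde Y)| \leq \E[|D|]\log\frac{1}{\E[|D|]} + \E[|D|].$$
Finally, Cauchy-Schwarz gives $\E[|D|] \leq \sqrt{\E[D^2]} \leq 1/\sqrt{2 d_B}$, which upon substitution produces a bound of order $\sqrt{\log^2(d_B)/d_B}$, matching the claim~\eqref{eq: bound on functional entropy difference statement} up to the explicit constant.

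The main obstacle I anticipate is the modulus-of-continuity step: the function $u \mapsto -u\log u$ is not monotone on the entire unit interval, so a direct Jensen application is only valid on the concave region $[0, 1/e]$. This forces a careful split into the events $\{|D| \leq 1/e\}$ and $\{|D| > 1/e\}$, and one must check that the latter is negligible (e.g.\ by Markov applied to $\E[D^2]$) to make its contribution lower-order. A secondary, much milder, point is to cite negative association of sampling-without-replacement indicators to justify the variance bound on $Y_S$; this is classical and adds no real difficulty.
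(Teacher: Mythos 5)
Your proposal is correct and follows essentially the same route as the paper: cancel the linear terms using the equality of means, couple the two blocks independently so that $\E[(Y_S-\tilde Y)^2]=\V[Y_S]+\V[\tilde Y]\leq O(1/d_B)$, then pass through a continuity estimate for $t\log t$, Jensen's inequality, and Cauchy--Schwarz. The only cosmetic differences are that the paper bounds $\V[Y_S]$ by a direct exchangeability computation rather than citing negative association, and it sidesteps your anticipated event-split entirely by using the bound $|g(t)-g(s)|\leq 2g(|t-s|)$ valid on all of $[0,1]$ together with the global concavity of $g$ there.
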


\begin{proof}
Recall that $\{V(j)\}_{j\in[d_{B}]}$ are an i.i.d. version of $\{U_{S}(j)\}_{j\in[d_{B}]}$
in $\{0,1\}$ with equal marginals. Letting $g(t)\eqdef-t\cdot\log t$
it holds that 
\begin{equation}
\Ent(Y_{S})-\Ent(\tilde{Y})=\E\left[g\left(\frac{1}{d_{B}}\sum_{j\in[d_{B}]}U_{S}(j)\right)-g\left(\frac{1}{d_{B}}\sum_{j\in[d_{B}]}V(j)\right)\right].
\end{equation}
Note that when computing the expectation of the difference, we may
assume any joint distribution on $\{U_{S}(j)\}_{j\in[d_{B}]}$ and
$\{V(j)\}_{j\in[d_{B}]}$, which agrees with the marginals. In what
follows we choose them as \emph{independent}. We further bound 
\begin{align}
\Ent(Y_{S})-\Ent(\tilde{Y}) & \trre[\leq,a]2\E\left[g\left(\left|\frac{1}{d_{B}}\sum_{j\in[d_{B}]}\left(U_{S}(j)-V(j)\right)\right|\right)\right] \\
 & \trre[\leq,b]2g\left(\E\left[\left|\frac{1}{d_{B}}\sum_{j\in[d_{B}]}\left(U_{S}(j)-V(j)\right)\right|\right]\right),\label{eq: Ent difference from multinomial to iid in the proof via continuity of self information}
\end{align}
where $(a)$ follows from Lemma \ref{lem: continuity of self-information function},
which states that $|g(t)-g(s)|\leq2g(|s-t|)$ for $s,t\in[0,1]$,
and $(b)$ follows from Jensen's inequality and the concavity of $g(t)$.
We next upper bound the argument of $g(\cdot)$ (though note that
$g(t)$ is only monotonically increasing on $t\in[0,e^{-1}]$), as
follows. We begin with Jensen's inequality that implies
\begin{equation}
\E\left[\left|\frac{1}{d_{B}}\sum_{j\in[d_{B}]}\left(U_{S}(j)-V(j)\right)\right|\right]\leq\sqrt{\E\left[\left(\frac{1}{d_{B}}\sum_{j\in[d_{B}]}\left(U_{S}(j)-V(j)\right)\right)^{2}\right]}.\label{eq: from absoulte to square for mutlinomial and iid bits}
\end{equation}
We next evaluate the inner expectation (inside the square root), while,
as said, assuming that $\{U_{S}(j)\}_{j\in[d_{B}]}$ are independent
of $\{V(j)\}_{j\in[d_{B}]}$. For any pair of independent random variables
$X,\tilde{X}$ such that $\E[X]=\E[\tilde{X}]$ and $\E[X^{2}],\E[\tilde{X}^{2}]<\infty$
it holds that 
\begin{equation}
\E[(X-\tilde{X})^{2}]=\E[X^{2}]-2\E[X]\E[\tilde{X}]+\E[\tilde{X}^{2}]=\V[X]+\V[\tilde{X}].
\end{equation}
We next use this result for $X\equiv\frac{1}{d_{B}}\sum_{j\in[d_{B}]}U_{S}(j)$
and $\tilde{X}\equiv\frac{1}{d_{B}}\sum_{j\in[d_{B}]}V(j)$ in order
to bound the second moment on the left-hand side of (\ref{eq: from absoulte to square for mutlinomial and iid bits}).
First, since $\{V(j)\}_{j\in[d_{B}]}$ are i.i.d., and $V(1)\sim\text{Bernoulli}(\frac{\eta}{d_{A}d_{B}})$
it holds that 
\begin{equation}
\V\left[\frac{1}{d_{B}}\sum_{j\in[d_{B}]}V(j)\right]=\frac{1}{d_{B}^{2}}\sum_{j\in[d_{B}]}\V[V(j)]=\frac{1}{d_{B}}\cdot\V[V(1)]\leq\frac{1}{d_{B}}\cdot\frac{\eta}{d_{A}d_{B}}\leq\frac{1}{d_{B}}.
\end{equation}
Second, we evaluate 
\begin{align}
\V\left[\frac{1}{d_{B}}\sum_{j\in[d_{B}]}U_{S}(j)\right] & =\E\left[\left(\frac{1}{d_{B}}\sum_{j\in[d_{B}]}U_{S}(j)\right)^{2}\right]-\E^{2}\left[\frac{1}{d_{B}}\sum_{j\in[d_{B}]}U_{S}(j)\right]\\
 & =\E\left[\left(\frac{1}{d_{B}}\sum_{j\in[d_{B}]}U_{S}(j)\right)^{2}\right]-\left(\frac{\eta}{d_{A}d_{B}}\right)^{2}\\
 & =\E\left[\frac{1}{d_{B}^{2}}\sum_{j\in[d_{B}]}\sum_{k\in[d_{B}]}U_{S}(j)U_{S}(k)\right]-\left(\frac{\eta}{d_{A}d_{B}}\right)^{2}\\
 & \trre[=,a]\frac{d_{B}^{2}-d_{B}}{d_{B}^{2}}\E[U_{S}(1)U_{S}(2)]+\frac{1}{d_{B}}\E[U_{S}(1)]-\left(\frac{\eta}{d_{A}d_{B}}\right)^{2}\\
 & \leq\E[U_{S}(1)U_{S}(2)]+\frac{1}{d_{B}}-\left(\frac{\eta}{d_{A}d_{B}}\right)^{2}\\
 & \trre[\leq,b]\frac{1}{d_{B}},
\end{align}
where $(a)$ follows since $\{U_{S}(j)\}_{j\in[d_{B}]}$ are exchangeable
random variables (the joint distribution is invariant to permutations),
and since $U_{S}^{2}(1)=U_{S}(1)\in\{0,1\}$, and $(b)$ follows from
\begin{align}
\E[U_{S}(1)U_{S}(2)] & =\P[U_{S}(1)=1,U_{S}(2)=1]\\
 & =\P[U_{S}(1)=1]\cdot\P[U_{S}(1)=1\mid U_{S}(k)=1]\\
 & =\frac{\eta}{d_{A}d_{B}}\cdot\frac{\eta-1}{d_{A}d_{B}-1}\\
 & \leq\left(\frac{\eta}{d_{A}d_{B}}\right)^{2},
\end{align}
where the last inequality here holds since $\eta\leq d_{A}d_{B}$. 

Therefore 
\begin{equation}
\E\left[\left(\frac{1}{d_{B}}\sum_{j\in[d_{B}]}U_{S}(j)-V(j)\right)^{2}\right]=\V\left[\frac{1}{d_{B}}\sum_{j\in[d_{B}]}U_{S}(j)\right]+\V\left[\frac{1}{d_{B}}\sum_{j\in[d_{B}]}V(j)\right]\leq\frac{2}{d_{B}}.
\end{equation}
In turn, (\ref{eq: from absoulte to square for mutlinomial and iid bits})
results 
\begin{equation}
\E\left[\left|\frac{1}{d_{B}}\sum_{j\in[d_{B}]}\left(U_{S}(j)-V(j)\right)\right|\right]\leq\sqrt{\frac{2}{d_{B}}.}
\end{equation}
Under the assumption $\eta\geq60d_{A}$ it holds that $d_{B}\geq60(1+\overline{\rho})\geq15$
and then the upper bound $\sqrt{\frac{2}{d_{B}}}\leq e^{-1}$ holds.
As $g(t)$ is monotonically increasing in $t\in[0,e^{-1}]$, this
implies that
\begin{equation}
g\left(\E\left[\left|\frac{1}{d_{B}}\sum_{j\in[d_{B}]}U_{S}(j)-\frac{1}{d_{B}}\sum_{j\in[d_{B}]}V(j)\right|\right]\right)\leq g\left(\sqrt{\frac{2}{d_{B}}}\right).
\end{equation}
Combining this with (\ref{eq: Ent difference from multinomial to iid in the proof via continuity of self information})
results
\begin{equation}
\Ent(Y_{S})-\Ent(\tilde{Y})\leq2g\left(\sqrt{\frac{2}{d_{B}}}\right)=\sqrt{\frac{2}{d_{B}}}\log\left(\frac{d_{B}}{2}\right)\leq\sqrt{\frac{2\log^{2}(d_{B})}{d_{B}}},
\end{equation}
as claimed. 
\end{proof}

\subsection{Proof of Proposition \ref{prop: concentration of conditional entropy}}

As a preliminary step to the proof of Proposition \ref{prop: concentration of conditional entropy},
we state and prove a Poissonization bound on the hypergeometric random
variable $Z_{S}$. This bound can be of independent interest. 
\begin{lem}
\label{lem:Poissonization}Assume that $d_{A}\geq d_{B}$ and that
$\eta\in[d_{A},d_{A}d_{B}-d_{B}]$. Let $Z_{S}\sim\text{Hypergeometric}(d_{A}d_{B},d_{B},\eta)$
and let $W\sim\text{Poisson}(\frac{\eta}{d_{A}})$. Then, for any
$b\in[d_{B}]$
\begin{equation}
\P[Z_{S}=b]\leq21\cdot d_{A}^{2}\cdot\P[W=b].\label{eq: poissonization of the distribution}
\end{equation}
\end{lem}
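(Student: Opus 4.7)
The plan is to compare the hypergeometric PMF with the Poisson PMF by going through the binomial with matching mean as an intermediate quantity. Both $Z_S$ and $W$ have mean $\eta/d_A$, and the natural bridge between sampling without replacement from $d_A d_B$ balls and the Poisson limit is the binomial $Y \sim \text{Bin}(\eta, 1/d_A)$, whose mean is also $\eta/d_A$. So I would decompose the target bound multiplicatively as
\begin{equation}
\frac{\P[Z_S = b]}{\P[W = b]} = \frac{\P[Z_S = b]}{\P[Y = b]} \cdot \frac{\P[Y = b]}{\P[W = b]},
\end{equation}
and bound each factor separately.

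For the first factor, I would use the product representation
\begin{equation}
\P[Z_S = b] = \binom{\eta}{b} \prod_{i=0}^{b-1} \frac{d_B - i}{d_A d_B - i} \prod_{j=0}^{\eta - b - 1} \frac{d_A d_B - d_B - j}{d_A d_B - b - j},
\end{equation}
and compare term-by-term against the binomial PMF $\binom{\eta}{b}(1/d_A)^b(1 - 1/d_A)^{\eta - b}$. Each factor in the first product equals $\tfrac{1}{d_A}\cdot\tfrac{1 - i/d_B}{1 - i/(d_A d_B)} \leq 1/d_A$, and each factor in the second product equals $(1-1/d_A)$ multiplied by a correction close to $1$. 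Taking logs and using the elementary bound $-\log(1-x) \leq x + x^2$ for small $x$, together with the assumption $\eta \leq d_A d_B - d_B$ so that all the fractions stay bounded away from $1$, one obtains a bound of the form $\P[Z_S = b] \leq C_1 \cdot \P[Y = b]$ where $C_1$ is an absolute constant (in fact, one can push the ratio to something like $e$ or a small multiple thereof).

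For the second factor, I would apply a standard Poissonization inequality for the binomial, derived via Stirling's formula. The cleanest route is to use $\binom{\eta}{b} \leq \eta^b/b!$ and write
\begin{equation}
\frac{\P[Y = b]}{\P[W = b]} \leq \frac{(\eta/d_A)^b / b! \cdot (1 - 1/d_A)^{\eta - b}}{e^{-\eta/d_A}(\eta/d_A)^b/b!} = e^{\eta/d_A}\left(1 - \tfrac{1}{d_A}\right)^{\eta - b},
\end{equation}
and then bound $e^{\eta/d_A}(1 - 1/d_A)^{\eta - b} \leq e^{b/d_A + 1/(d_A - 1)} \leq C_2 \cdot d_A$ using $\log(1 - 1/d_A) \leq -1/d_A - 1/(2 d_A^2)$ and the fact that we only care about $b \leq d_B \leq \eta$. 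A more refined Stirling analysis for the binomial coefficient and ratio would, in the worst case, give an additional $\sqrt{\eta/d_A}$ factor, which can be absorbed into $d_A$ since we assumed $\eta \leq d_A d_B$.

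Multiplying $C_1$ and the resulting $\tilde O(d_A)$ bound, together with the multiplicative slack required to handle boundary cases where $b$ is close to $\min(d_B, \eta)$, yields an overall constant of the form $C \cdot d_A^2$. Optimizing the constants (and verifying numerically that $C \leq 21$) gives the claimed inequality. The main obstacle I anticipate is the careful tracking of constants, particularly through the Stirling-based comparison of the binomial to the Poisson and at the boundary $b \to \min(d_B, \eta)$ where the product-of-ratios bound for the first factor degrades; the factor $d_A^2$ on the right-hand side seems precisely calibrated to absorb this degradation plus the binomial-to-Poisson gap, so I would expect the tightness of the bound to be sensitive to these boundary estimates rather than to the bulk behavior near $b \approx \eta/d_A$.
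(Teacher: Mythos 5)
Your two-step route (hypergeometric $\to$ binomial $\to$ Poisson) is a genuinely different decomposition from the paper's, which compares the hypergeometric PMF to the Poisson PMF directly, collects the entire ratio into a single exponent $Q$, bounds the log-products by exact integrals, and then maximizes $Q$ jointly over $b$ and $d_A$ to get $Q\le 3+2\log d_A$. Your second factor is in fact cleaner than you suggest: $e^{\eta/d_A}(1-1/d_A)^{\eta-b}\le e^{\eta/d_A-(\eta-b)/d_A}=e^{b/d_A}\le e^{d_B/d_A}\le e$, an absolute constant with no Stirling analysis and no extra $\sqrt{\eta/d_A}$ to absorb. The problem is that this pushes the entire burden onto the first factor, and that is exactly where your argument has a genuine gap. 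The ratio $\P[Z_S=b]/\P[Y=b]$ is \emph{not} an absolute constant: the hypergeometric variance carries the finite-population factor $\frac{d_Ad_B-\eta}{d_Ad_B-1}$, so when $\eta$ is near its permitted maximum $d_Ad_B-d_B$ the hypergeometric is about $\sqrt{d_A}$ times more concentrated than $\text{Bin}(\eta,1/d_A)$, and its PMF at the mode exceeds the binomial's by a factor of order $\sqrt{d_A}$ (already $\approx 3.15$ for $\text{Hyp}(100,10,90)$ vs.\ $\text{Bin}(90,1/10)$ at $b=9$). This still fits inside $21d_A^2$, but it cannot come out of the method you describe.

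Concretely, bounding the two products separately fails. After cancelling $\binom{\eta}{b}$, the second product's aggregate deviation from $(1-1/d_A)^{\eta-b}$ is $\exp\{(\eta-b)b/(d_Ad_B)+\cdots\}$, which for $b$ near the mode and $\eta$ near $d_Ad_B-d_B$ is $e^{\Theta(d_B)}$; it is cancelled only by a matching $e^{-\Theta(d_B)}$ deficit of $\prod_{i=0}^{b-1}\frac{1-i/d_B}{1-i/(d_Ad_B)}$ below $1$, which you discard when you bound each factor of the first product by $1/d_A$. A term-by-term, first-order treatment of each product in isolation therefore overshoots by an exponential amount. Separately, the inequality $-\log(1-x)\le x+x^2$ is unusable at the ends of the products: with $\eta=d_Ad_B-d_B$ and $b=0$, the arguments reach $\frac{d_B}{d_B+1}$, so individual log terms are of size $\log d_B$, not $O(x)$ --- this is precisely why the paper integrates the logarithms exactly, producing the boundary terms $(\eta-b+1)\log\frac{\eta-b+1}{\eta}$ and $(d_B-b+1)\log\frac{d_B-b+1}{d_B}$ and then showing they contribute only $O(\log d_A)$ to the exponent. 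To salvage your decomposition you would need to bound the two hypergeometric products jointly (or invoke a known nonasymptotic bound on the hypergeometric-to-binomial likelihood ratio) so that the cancellation is explicit, and replace the Taylor step by exact summation; at that point the work is essentially the same as the paper's direct comparison.
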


\begin{proof}
We note that under the assumption $\eta\leq d_{A}d_{B}-d_{B}$ it
holds that $Z_{S}$ is supported on $\{0,1,2,\ldots,d_{B}\}$. The
following chain of inequalities then holds
\begin{align}
\P[Z_{S}=b] & \trre[=,a]{d_{B} \choose b}\cdot\prod_{i=0}^{b-1}\frac{\eta-i}{d_{A}d_{B}-i}\cdot\prod_{i=0}^{d_{B}-b-1}\left(1-\frac{\eta-b}{d_{A}d_{B}-b-i}\right)\\
 & \trre[=,b]\frac{1}{b!}\cdot\prod_{i=0}^{b-1}\frac{(\eta-i)(d_{B}-i)}{d_{A}d_{B}-i}\cdot\prod_{i=0}^{d_{B}-b-1}\left(1-\frac{\eta-b}{d_{A}d_{B}-b-i}\right)\\
 & =\frac{1}{b!}\cdot\left(\frac{\eta}{d_{A}}\right)^{b}\cdot\prod_{i=0}^{b-1}\frac{(1-\frac{i}{\eta})(d_{B}-i)}{d_{B}-\frac{i}{d_{A}}}\cdot\prod_{i=0}^{d_{B}-b-1}\left(1-\frac{\eta-b}{d_{A}d_{B}-b-i}\right)\\
 & =\frac{1}{b!}\cdot\left(\frac{\eta}{d_{A}}\right)^{b}\cdot\prod_{i=0}^{b-1}\frac{(1-\frac{i}{\eta})(1-\frac{i}{d_{B}})}{\left(1-\frac{i}{d_{A}d_{B}}\right)}\prod_{i=0}^{d_{B}-b-1}\left(1-\frac{\eta-b}{d_{A}d_{B}-b-i}\right)\\
 & \trre[\leq,c]4\cdot\frac{1}{b!}\cdot\left(\frac{\eta}{d_{A}}\right)^{b}\cdot\prod_{i=0}^{b-1}\left(1-\frac{i}{\eta}\right)\left(1-\frac{i}{d_{B}}\right)\prod_{i=0}^{d_{B}-b-1}\left(1-\frac{\eta-b}{d_{A}d_{B}-b-i}\right)\\
 & \leq4\cdot\frac{1}{b!}\cdot\left(\frac{\eta}{d_{A}}\right)^{b}\cdot\prod_{i=0}^{b-1}\left(1-\frac{i}{\eta}\right)\times\prod_{i=0}^{b-1}\left(1-\frac{i}{d_{B}}\right)\times\left(1-\frac{\eta-b}{d_{A}d_{B}-b}\right)^{d_{B}-b-1},\label{eq: Poissonization first bound}
\end{align}
where:
\begin{itemize}
\item In $(a)$, the first term ${d_{B} \choose b}$ is the number of possibilities
to choose the $b$ non-zero $\{U_{S}(1,j)\}_{j\in[d_{B}]}$, the second
term is the probability that this specific set of $j$'s is chosen
to be $1$, and the third term is the probability that the complementary
set is not chosen;
\item In $(b)$ we use 
\begin{equation}
{d_{B} \choose b}=\frac{d_{B}!}{b!(d_{B}-b)!}=\frac{1}{b!}\cdot d_{B}(d_{B}-1)\cdots d_{B}(d_{B}-b+1);
\end{equation}
\item In $(c)$ we use 
\begin{equation}
\prod_{i=0}^{b-1}\left(1-\frac{i}{d_{A}d_{B}}\right)\geq\left(1-\frac{d_{B}}{d_{A}d_{B}}\right)^{d_{B}}=\left[\left(1-\frac{1}{d_{A}}\right)^{d_{A}}\right]^{d_{B}/d_{A}}\geq\left[\frac{1}{4}\right]^{d_{B}/d_{A}}\geq\frac{1}{4}
\end{equation}
where here, the first inequality utilizes $b\leq d_{B}$, the second
inequality follows from the fact that $(1-\frac{1}{d})^{d}$ is monotonic
increasing on $[1,\infty)$, and so $(1-\frac{1}{d_{A}})^{d_{A}}\geq\frac{1}{4}$
(obtained for $d_{A}=2$), and the last inequality utilizes the assumption
$d_{B}\leq d_{A}$. 
\end{itemize}
We next bound the two product terms in (\ref{eq: Poissonization first bound}).
For the first product term, it holds that 
\begin{align}
\log\prod_{i=0}^{b-1}\left(1-\frac{i}{\eta}\right) & =\sum_{i=0}^{b-1}\log\left(1-\frac{i}{\eta}\right)\\
 & =\sum_{i=1}^{b-1}\log\left(1-\frac{i}{\eta}\right)\\
 & \trre[\leq,a]\int_{0}^{b-1}\log\left(1-\frac{x}{\eta}\right)\d x\\
 & =\left.\left(\eta-x\right)\cdot\log\left(1-\frac{x}{\eta}\right)-\left(\eta-x\right)\right|_{b-1}^{0}\\
 & =-\eta-\left[\left(\eta-b+1\right)\cdot\log\left(1-\frac{b-1}{\eta}\right)-\left(\eta-b+1\right)\right]\\
 & =-\left[\left(\eta-b+1\right)\cdot\log\left(1-\frac{b-1}{\eta}\right)+b-1\right],\label{eq: integral bound on first product}
\end{align}
where $(a)$ follows from the monotonicity of $t\to(1-\frac{t}{\eta})$
and Riemann integration. Similarly, it holds for the second product
term in (\ref{eq: Poissonization first bound}) that
\begin{equation}
\log\prod_{i=0}^{b-1}\left(1-\frac{i}{d_{B}}\right)\leq-\left[\left(d_{B}-b+1\right)\cdot\log\left(1-\frac{b-1}{d_{B}}\right)+b-1\right].\label{eq: integral bound on second product}
\end{equation}
Inserting the estimates (\ref{eq: integral bound on first product})
and (\ref{eq: integral bound on second product}) back to (\ref{eq: Poissonization first bound})
results 
\begin{align}
 & \P[Z_{S}=b]\nonumber \\
 & \leq4\cdot\frac{1}{b!}\cdot\left(\frac{\eta}{d_{A}}\right)^{b}\exp\left[-\left[\left(\eta-b+1\right)\cdot\log\left(1-\frac{b-1}{\eta}\right)+b-1\right]\right]\nonumber \\
 & \hphantom{==}\times\exp\left[-\left[\left(d_{B}-b+1\right)\cdot\log\left(1-\frac{b-1}{d_{B}}\right)+b-1\right]\right]\nonumber \\
 & \hphantom{==}\times\exp\left[\left(d_{B}-b-1\right)\log\left(1-\frac{\eta-b}{d_{A}d_{B}-b}\right)\right].\label{eq: Poissonization second bound}
\end{align}
Since the Poisson p.m.f. of $W\sim\text{Poisson}(\frac{\eta}{d_{A}})$
is given by 
\begin{equation}
\P[W=b]=\frac{\left(\frac{\eta}{d_{A}}\right)^{b}}{b!}\cdot e^{-d_{A}/\eta}
\end{equation}
it follows from (\ref{eq: Poissonization second bound}) that 
\begin{equation}
\frac{\P[Z_{S}=b]}{\P[W=b]}\leq4\cdot\exp[Q],
\end{equation}
where 
\begin{align}
Q & \eqdef-2b-\left(\eta-b+1\right)\cdot\log\left(\frac{\eta-b+1}{\eta}\right)-\left(d_{B}-b+1\right)\cdot\log\left(\frac{d_{B}-b+1}{d_{B}}\right)\nonumber \\
 & +\left(d_{B}-b-1\right)\log\left(1-\frac{\eta-b}{d_{A}d_{B}-b}\right)+\frac{\eta}{d_{A}}.\label{eq: definitoin of true probability to Poisson ratio exponent}
\end{align}
In the rest of the proof of the lemma, we prove that $Q\leq3+2\log d_{A}$
for any possible $0\leq b\leq d_{B}\leq d_{A}\leq\eta$. We prove
this in a few steps. We first simplify the fourth additive term of
$Q$, to wit, 
\begin{equation}
T\eqdef\left(d_{B}-b-1\right)\log\left(1-\frac{\eta-b}{d_{A}d_{B}-b}\right),
\end{equation}
by showing that $T$ can be tightly upper bounded by 
\begin{equation}
\tilde{T}\eqdef\left(d_{B}-b-1\right)\log\left(1-\frac{\eta}{d_{A}d_{B}}\right).
\end{equation}
To this end, we bound the difference 
\begin{align}
\tilde{T}-T & =\left(d_{B}-b-1\right)\log\left(1-\frac{\eta}{d_{A}d_{B}}\right)-\left(d_{B}-b-1\right)\log\left(1-\frac{\eta-b}{d_{A}d_{B}-b}\right)\\
 & =\left(d_{B}-b-1\right)\log\left(\frac{d_{A}d_{B}-\eta}{d_{A}d_{B}}\cdot\frac{d_{A}d_{B}-b}{d_{A}d_{B}-\eta}\right)\\
 & =\left(d_{B}-b-1\right)\log\left(1-\frac{b}{d_{A}d_{B}}\right)\\
 & \geq\left(d_{B}-b-1\right)\log\left(1-\frac{1}{d_{A}}\right),\label{eq: bound on difference in the fourth Poissonization term}
\end{align}
where the inequality follows since $b\leq d_{B}$. If $b<d_{B}$ then
we continue the lower bound on $\tilde{T}-T$ in (\ref{eq: bound on difference in the fourth Poissonization term})
as 
\begin{align}
\tilde{T}-T & \geq\left(d_{B}-b-1\right)\log\left(1-\frac{1}{d_{A}}\right)\\
 & \trre[\geq,a]-2\left(d_{B}-b-1\right)\frac{1}{d_{A}}\\
 & \trre[\geq,a]-2,
\end{align}
where $(a)$ follows since $\log(1-x)\geq-2x$ holds for $x\in[0,1/2]$,
and $(b)$ follows since $d_{B}\leq d_{A}$. Otherwise, if $b=d_{B}$,
then 
\begin{equation}
\tilde{T}-T\geq-\log\left(1-\frac{1}{d_{A}}\right)=\log\left(\frac{d_{A}}{d_{A}-1}\right)\geq0.
\end{equation}
We thus deduce that $T\leq\tilde{T}+2$ for any $0\leq b\leq d_{B}$.
Next, we further show that $\tilde{T}$ can be accurately upper bounded
by 
\begin{equation}
\overline{T}\eqdef\left(d_{B}-b+1\right)\log\left(1-\frac{\eta}{d_{A}d_{B}}\right).
\end{equation}
Indeed, denoting $q=d_{A}d_{B}-\eta$ where $q\geq d_{B}$ holds by
our assumption, it holds that 
\begin{align}
\overline{T}-\tilde{T} & =\left[\left(d_{B}-b+1\right)-\left(d_{B}-b-1\right)\right]\log\left(1-\frac{\eta}{d_{A}d_{B}}\right)\\
 & =2\cdot\log\left(1-\frac{\eta}{d_{A}d_{B}}\right)\\
 & =2\cdot\log\left(1-\frac{d_{A}d_{B}-q}{d_{A}d_{B}}\right)\\
 & =2\cdot\log\left(\frac{q}{d_{A}d_{B}}\right)\\
 & \geq-2\log d_{A}.
\end{align}
We thus have upper bounded $T$, the fourth term in (\ref{eq: definitoin of true probability to Poisson ratio exponent}),
by 
\begin{equation}
T\leq\tilde{T}+2\leq\overline{T}+2+2\log d_{A}.
\end{equation}
Inserting this upper bound back to (\ref{eq: definitoin of true probability to Poisson ratio exponent}),
we may bound $Q\leq Q_{1}+2+2\log d_{A}$ where 
\begin{align}
Q_{1} & \eqdef-2b-\left(\eta-b+1\right)\cdot\log\left(\frac{\eta-b+1}{\eta}\right)-\left(d_{B}-b+1\right)\cdot\log\left(\frac{d_{B}-b+1}{d_{B}}\right)\nonumber \\
 & \hphantom{==}+\left(d_{B}-b+1\right)\log\left(1-\frac{\eta}{d_{A}d_{B}}\right)+\frac{\eta}{d_{A}}\\
 & =-2b-\left(\eta-b+1\right)\cdot\log\left(\frac{\eta-b+1}{\eta}\right)\nonumber \\
 & \hphantom{==}-\left(d_{B}-b+1\right)\cdot\log\left((d_{B}-b+1)\cdot\frac{d_{A}}{d_{A}d_{B}-\eta}\right)+\frac{\eta}{d_{A}}.\label{eq: definition of true probability to Poisson ratio exponent second bound}
\end{align}
Next, we maximize $Q_{1}$ over $d_{A}$. Focusing only on the terms
which depend on $d_{A}$, the term required to be maximized is 
\begin{equation}
-(d_{B}-b+1)\cdot\log\left(\frac{d_{A}}{d_{A}d_{B}-\eta}\right)+\frac{\eta}{d_{A}}=(d_{B}-b+1)\cdot\log\left(d_{B}-\frac{\eta}{d_{A}}\right)+\frac{\eta}{d_{A}}.
\end{equation}
Under the constraint $\eta>d_{A}$ it can be easily verified that
the maxima occurs when $\frac{\eta}{d_{A}}=b-1$. Substituting this
into (\ref{eq: definition of true probability to Poisson ratio exponent second bound}),
we obtain that $Q_{1}\leq Q_{2}$ where 
\begin{equation}
Q_{2}\eqdef-b-1-\left(\eta-b+1\right)\cdot\log\left(\frac{\eta-b+1}{\eta}\right).\label{eq: definition of true probability to Poisson ratio exponent third bound}
\end{equation}
We further upper bound $Q_{2}$ by finding $b\in[0,d_{B}]$ which
maximizes its value. Taking derivative, we get 
\begin{equation}
\frac{\d Q_{2}}{\d b}=-1+\log\left(\frac{\eta-b+1}{\eta}\right)+1\leq0,
\end{equation}
and so the constrained maxima is $b=0$. Substituting $b=0$ in (\ref{eq: definition of true probability to Poisson ratio exponent third bound})
results $Q_{2}\leq Q_{3}$ where 
\begin{equation}
Q_{3}=1-\left(\eta+1\right)\cdot\log\left(\frac{\eta+1}{\eta}\right)\leq1.
\end{equation}
Summarizing all the above bounds, we obtain 
\begin{equation}
\frac{\P[Z_{S}=b]}{\P[W=b]}\leq4\cdot e^{3+2\log d_{A}}\leq21\cdot e^{2\log d_{A}}=21\cdot d_{A}^{2},
\end{equation}
 as claimed.
\end{proof}

We may now prove Proposition \ref{prop: concentration of conditional entropy}. 
\begin{proof}[Proof of Proposition \ref{prop: concentration of conditional entropy}]
Let $W\sim\text{Poisson}(\frac{\eta}{d_{A}})$, so that $W$ has
the same mean as $Z_{S}(i)$, and further let $g(t)\eqdef-t\log t$
(defining $g(0)\eqdef0$, which continuously extends $g(t)$ for $t\in[0,1]$).
We begin with the following chain of inequalities:
\begin{align}
 & \P\left[\left|H(A_{S})-\E[H(A_{S})]\right|>t\right]\nonumber \\
 & =\P\left[\left|\sum_{i\in[d_{A}]}g\left(\frac{Z_{S}(i)}{\eta}\right)-\E[H(A_{S})]\right|>t\right]\\
 & \leq\P\left[\frac{1}{d_{A}}\sum_{i\in[d_{A}]}\left|g\left(\frac{Z_{S}(i)}{\eta}\right)-\frac{\E[H(A_{S})]}{d_{A}}\right|>\frac{t}{d_{A}}\right]\\
 & \trre[\leq,a]d_{A}\cdot\P\left[\left|g\left(\frac{Z_{S}(i)}{\eta}\right)-\frac{\E[H(A_{S})]}{d_{A}}\right|>\frac{t}{d_{A}}\right]\\
 & \trre[\leq,b]21\cdot d_{A}^{3}\cdot\P\left[\left|g\left(\frac{W}{\eta}\right)-\frac{\E[H(A_{S})]}{d_{A}}\right|>\frac{t}{d_{A}}\right],\label{eq: first bound on concentration inequality}
\end{align}
where $(a)$ follows from the union bound, $(b)$ follows from the
Poisson based bound of Lemma \ref{lem:Poissonization} in (\ref{eq: poissonization of the distribution}). 

To further bound the probability in (\ref{eq: first bound on concentration inequality})
we aim to use a concentration inequality for Lipschitz functions of
Poisson random variables. However, strictly speaking $g(t)$ is not
a Lipschitz function, since the derivative $\frac{\d}{\d t}g(t)=-\log(et)$
is unbounded as $t\downarrow0$. However, if the argument $t$ is
lower bounded then $g(t)$ can be modified to a Lipschitz function,
with a small error. To this end, let a parameter $\zeta\geq e$ be
given, and define a modified version of $g(t)$ by
\begin{equation}
\hat{g}_{\zeta}(t)\eqdef\begin{cases}
t\log(\zeta/e)+1/\zeta, & 0\leq t\leq1/\zeta\\
-t\log t, & t\geq1/\zeta
\end{cases}.
\end{equation}
Then, on $[0,1]$, $\hat{g}_{\zeta}(t)$ is a continuous function
of $t$, its first derivative is continuous, it is $\log(\frac{\zeta}{e})$-Lipschitz,
and the difference between $g(t)$ and $\hat{g}_{\zeta}(t)$ is upper
bounded as
\begin{equation}
\max_{t\in[0,1]}\left|\hat{g}_{\zeta}(t)-g(t)\right|=\max_{t\in[0,1/\zeta]}\left|\hat{g}_{\zeta}(t)-g(t)\right|=\max_{t\in[0,1/\zeta]}\hat{g}_{\zeta}(t)-g(t)=\frac{1}{\zeta},\label{eq: bound on approximated tlogt}
\end{equation}
since on $t\in[0,1/\zeta]$, the function $t\to\hat{g}_{\zeta}(t)-g(t)$
is nonnegative, and monotonic decreasing (thus obtains its maximal
value at $t=0$). The function $\hat{g}_{\zeta}(t)$ is appropriate
for approximating $g(t)$ for $t\geq1/\zeta$. Specifically, we will
next use this approximation with $\zeta=\eta$ since the argument
$g(\frac{W}{\eta})$ appearing in (\ref{eq: first bound on concentration inequality})
is either zero or at least $1/\eta$. Concretely, the term defining
the event in probability (\ref{eq: first bound on concentration inequality})
is upper bounded as
\begin{align}
 & \left|g\left(\frac{W}{\eta}\right)-\frac{\E[H(A_{S})]}{d_{A}}\right|\nonumber \\
 & \trre[\leq,a]\left|g\left(\frac{W}{\eta}\right)-\hat{g}_{\eta}\left(\frac{W}{\eta}\right)\right|+\left|\hat{g}_{\eta}\left(\frac{W}{\eta}\right)-\E\left[\hat{g}_{\eta}\left(\frac{W}{\eta}\right)\right]\right|+\underbrace{\left|\E\left[\hat{g}_{\eta}\left(\frac{W}{\eta}\right)\right]-\frac{\E[H(A_{S})]}{d_{A}}\right|}_{\eqdef T}\\
 & \trre[\leq,b]\frac{1}{\eta}+\left|\hat{g}_{\eta}\left(\frac{W}{\eta}\right)-\E\left[\hat{g}_{\eta}\left(\frac{W}{\eta}\right)\right]\right|+\frac{5}{\eta}+\frac{C(d_{B})}{d_{A}},\label{eq: triangleq inequality approximation of t.log(t) function}
\end{align}
where $(a)$ follows from the triangle inequality, $(b)$ follows
from (\ref{eq: bound on approximated tlogt}), and the bound on $T$
follows from Lemma \ref{lem:modified entropy of Poission} that will
be proved separately after completing the rest of the proof. 

Letting 
\begin{equation}
\tilde{t}\eqdef\frac{t}{d_{A}}-\frac{5}{\eta}-\frac{C(d_{B})}{d_{A}},
\end{equation}
(which is positive under the assumption of the lemma), and utilizing
(\ref{eq: triangleq inequality approximation of t.log(t) function})
in (\ref{eq: first bound on concentration inequality}), the probability
of interest is upper bounded as 
\begin{equation}
\P\left[\left|H(A_{S})-\E[H(A_{S})]\right|>t\right]\leq21\cdot d_{A}^{3}\cdot\P\left[\left|\hat{g}_{\eta}\left(\frac{W}{\eta}\right)-\E\left[\hat{g}_{\eta}\left(\frac{W}{\eta}\right)\right]\right|>\tilde{t}\right].
\end{equation}
Next, we will bound this probability by utilizing concentration of
Lipschitz functions of Poisson random variables. However, since after
replacing $Z_{S}(i)$, which is bounded by $\eta$ with probability
$1$, with the unbounded $W$, the function $\hat{g}_{\eta}(t)$ is
not Lipschitz on the unbounded support of $\frac{W}{\eta}$. Indeed,
$\hat{g}_{\eta}(t)=-t\log t$ for $t>1$ and its derivative $-\log(te)$
is unbounded as $t\uparrow\infty$. However, since $\E[W]=\frac{\eta}{d_{A}}$,
then $\frac{W}{\eta}$ is close with high probability to its expected
value $\frac{1}{d_{A}}$, which is less than $\frac{1}{2}$. On the
region close to $\frac{1}{d_{A}}$, the function $\hat{g}_{\eta}(t)$
is indeed Lipschitz, which allows the utilization of the Poisson concentration
result. Formally, from Chernoff's bound for a Poisson random variables
(\ref{eq: Poisson Chernoff general-1}) (Lemma \ref{lem: Chernoff for Poisson rvs}
in Appendix \ref{sec:Auxiliary-results}) it holds that 
\begin{equation}
\P\left[X\geq\alpha\E[X]\right]\leq e^{-\alpha\lambda}\label{eq: Poisson Chernoff general}
\end{equation}
for any $\alpha>3e\approx8.15$. Thus, defining the event ${\cal E}\eqdef\{\frac{W}{\eta}\geq e^{-1}\}$,
and using $\E[W]=\frac{\eta}{d_{A}}$, it holds that 
\begin{align}
\P[\mathcal{E}]=\P\left[\frac{W}{\eta}\geq e^{-1}\right] & \le\P\left[\frac{W}{\eta}\geq\frac{1}{3}\right]\\
 & =\P\left[W\geq\frac{d_{A}}{3}\E[W]\right]\\
 & \leq e^{-\frac{\eta}{3}},\label{eq: Poission Chernoff for our bound}
\end{align}
assuming $\frac{d_{A}}{3}\geq3e$ (which is satisfied by the assumption
of the proposition $d_{A}\geq d_{B}\geq60(1+\overline{\rho})$, which,
in turn, holds by the assumption $\eta\geq60d_{A}$). 

Now, consider a further modification of $\hat{g}_{\eta}$, given by
\begin{equation}
\tilde{g}_{\eta}(t)\eqdef\begin{cases}
\hat{g}_{\eta}(t), & 0\leq t\leq e^{-1}\\
\hat{g}_{\eta}(e^{-1}), & t>e^{-1}
\end{cases}.
\end{equation}
This is a continuous function, with a continuous first derivative,
bounded by $\frac{1}{\eta}\log(\eta/e)$ (essentially, $\tilde{g}_{\eta}(t)$
tracks $\hat{g}_{\eta}(t)$ exactly for $t\in[0,e^{-1}]$, and as
$t$ increases above $e^{-1}$, $\tilde{g}_{\eta}(t)$ remains constant
at the maximal value of $\hat{g}_{\eta}(e^{-1})=e^{-1}$ obtained
at $t=e^{-1}$). We first bound the difference in the expectation
between $\hat{g}_{\eta}$ and its modification $\tilde{g}_{\eta}$,
that is,
\begin{align}
 & \left|\E\left[\hat{g}_{\eta}\left(\frac{W}{\eta}\right)-\tilde{g}_{\eta}\left(\frac{W}{\eta}\right)\right]\right|\nonumber \\
 & \leq\E\left[\left|\hat{g}_{\eta}\left(\frac{W}{\eta}\right)-\tilde{g}_{\eta}\left(\frac{W}{\eta}\right)\right|\right]\\
 & =\E\left[\left|\hat{g}_{\eta}\left(\frac{W}{\eta}\right)-\tilde{g}_{\eta}\left(\frac{W}{\eta}\right)\right|\cdot\indicator\left\{ \frac{W}{\eta}>e^{-1}\right\} \right]\\
 & =\E\left[\left|-\frac{W}{\eta}\log\frac{W}{\eta}-e^{-1}\right|\cdot\indicator\left\{ \frac{W}{\eta}>e^{-1}\right\} \right]\\
 & \leq\E\left[\left(\left|\frac{W}{\eta}\log\frac{W}{\eta}\right|+e^{-1}\right)\cdot\indicator\left\{ \frac{W}{\eta}>e^{-1}\right\} \right]\\
 & \leq\E\left[\left|\frac{W}{\eta}\log\frac{W}{\eta}\right|\cdot\indicator\left\{ \frac{W}{\eta}>e^{-1}\right\} \right]+e^{-1}\cdot\P\left[\frac{W}{\eta}\geq e^{-1}\right]\\
 & \trre[\leq,a]\E\left[\left(\left(\frac{W}{\eta}\right)^{2}+1\right)\cdot\indicator\left\{ \frac{W}{\eta}>e^{-1}\right\} \right]+e^{-1}\cdot\P\left[\frac{W}{\eta}\geq e^{-1}\right]\\
 & =\E\left[\frac{W^{2}}{\eta^{2}}\right]+(1+e^{-1})\P\left[\frac{W}{\eta}\geq e^{-1}\right]\\
 & \trre[\leq,b]\frac{\frac{\eta}{d_{A}}+\frac{\eta^{2}}{d_{A}^{2}}}{\eta^{2}}+(1+e^{-1})\P\left[\frac{W}{\eta}\geq e^{-1}\right]\\
 & \trre[\leq,c]\frac{1}{d_{A}\eta}+\frac{1}{d_{A}^{2}}+2e^{-\eta/3}\\
 & \trre[\leq,d]\frac{1}{d_{A}\eta}+\frac{1}{d_{A}^{2}}+\frac{1}{\eta}\\
 & \trre[\leq,e]\frac{3}{\eta},\label{eq: bound on difference in expectation for double modification}
\end{align}
where $(a)$ follows since $|t\log t|\leq t^{2}+1$ for all $t\geq e^{-1}$,
$(b)$ follows since for $X\sim\text{Poisson}(\lambda)$ $\E[X^{2}]=\lambda+\lambda^{2}$,
and $W$ has parameter $\E[W]=\frac{\eta}{d_{A}}$, $(c)$ follows
from (\ref{eq: Poission Chernoff for our bound}), $(d)$ follows
from $e^{-t}\leq\frac{1}{6t}$ for all $t\geq16$ when plugging $t\equiv\eta/3$
and under the assumption $\eta\geq d_{A}\geq60$, and $(e)$ follows
again from $d_{A}^{2}>\eta>d_{A}$. Defining 
\begin{equation}
\tilde{t}-\frac{3}{\eta}=\frac{t}{d_{A}}-\frac{8}{\eta}-\frac{C(d_{B})}{d_{A}}\eqdef\hat{t},
\end{equation}
we thus obtain 

\begin{align}
 & \P\left[\left|\hat{g}_{\eta}\left(\frac{W}{\eta}\right)-\E\left[\hat{g}_{\eta}\left(\frac{W}{\eta}\right)\right]\right|>\tilde{t}\right]\nonumber \\
 & \trre[\leq,a]\P\left[\left|\hat{g}_{\eta}\left(\frac{W}{\eta}\right)-\E\left[\tilde{g}_{\eta}\left(\frac{W}{\eta}\right)\right]\right|>\hat{t}\right]\\
 & =\P[{\cal E}]\cdot\P\left[\left|\hat{g}_{\eta}\left(\frac{W}{\eta}\right)-\E\left[\tilde{g}_{\eta}\left(\frac{W}{\eta}\right)\right]\right|>\hat{t}\,\middle\vert\,{\cal E}\right]\nonumber \\
 & \hphantom{=}+\P[{\cal E}^{c}]\cdot\P\left[\left|\hat{g}_{\eta}\left(\frac{W}{\eta}\right)-\E\left[\tilde{g}_{\eta}\left(\frac{W}{\eta}\right)\right]\right|>\hat{t}\,\middle\vert\,{\cal E}^{c}\right]\\
 & \leq\P[{\cal E}]+\P\left[\left|\hat{g}_{\eta}\left(\frac{W}{\eta}\right)-\E\left[\tilde{g}_{\eta}\left(\frac{W}{\eta}\right)\right]\right|>\hat{t}\,\middle\vert\,{\cal E}^{c}\right]\\
 & =\P[{\cal E}]+\frac{\P\left[\left|\tilde{g}_{\eta}\left(\frac{W}{\eta}\right)-\E\left[\tilde{g}_{\eta}\left(\frac{W}{\eta}\right)\right]\right|>\hat{t},{\cal E}^{c}\right]}{\P[{\cal E}^{c}]}\\
 & \trre[\leq,b]e^{-\frac{\eta}{3}}+\frac{\P\left[\left|\tilde{g}_{\eta}\left(\frac{W}{\eta}\right)-\E\left[\tilde{g}_{\eta}\left(\frac{W}{\eta}\right)\right]\right|>\hat{t},{\cal E}^{c}\right]}{1-e^{-\eta/3}}\\
 & \leq e^{-\frac{\eta}{3}}+\frac{\P\left[\left|\tilde{g}_{\eta}\left(\frac{W}{\eta}\right)-\E\left[\tilde{g}_{\eta}\left(\frac{W}{\eta}\right)\right]\right|>\hat{t}\right]}{1-e^{-\eta/3}}\\
 & \trre[\leq,c]e^{-\frac{\eta}{3}}+2\P\left[\left|\tilde{g}_{\eta}\left(\frac{W}{\eta}\right)-\E\left[\tilde{g}_{\eta}\left(\frac{W}{\eta}\right)\right]\right|>\hat{t}\right],\label{eq: concentration of first modified entropy in terms of second modificiation}
\end{align}
where $(a)$ follows from (\ref{eq: bound on difference in expectation for double modification})
and the definitions of $\tilde{t}$ and $\hat{t}$, $(b)$ follows
from (\ref{eq: Poission Chernoff for our bound}), and $(c)$ follows
from the assumption $\eta\geq d_{A}\geq4$. 

Now, $\frac{\eta}{\log(\eta/e)}\tilde{g}_{\eta}(t)$ is $1$-Lipschitz
over $t\in[0,\infty)$. From the Poisson concentration of Lipschitz
functions in Lemma \ref{lem: Poisson concentration} (Appendix \ref{sec:Auxiliary-results}),
for any $\hat{t}>0$ 
\begin{align}
 & \P\left[\left|\tilde{g}_{\eta}\left(\frac{W}{\eta}\right)-\E\left[\tilde{g}_{\eta}\left(\frac{W}{\eta}\right)\right]\right|>\hat{t}\right]\nonumber \\
 & =\P\left[\left|\frac{\eta}{\log(\eta/e)}\hat{g}_{\eta}\left(\frac{W}{\eta}\right)-\E\left[\frac{\eta}{\log(\eta/e)}\hat{g}_{\eta}\left(\frac{W}{\eta}\right)\right]>\frac{\eta\hat{t}}{\log(\eta/e)}\right|\right]\\
 & \leq\exp\left\{ -\frac{\eta\hat{t}}{4\log(\eta/e)}\log\left(1+\frac{\eta\hat{t}}{2\log(\eta/e)\cdot\eta/d_{A}}\right)\right\} \\
 & \leq\exp\left\{ -\frac{\eta\hat{t}}{4\log(\eta/e)}\log\left(1+\frac{d_{A}\hat{t}}{2\log(\eta/e)}\right)\right\} .
\end{align}
Inserting this back into (\ref{eq: concentration of first modified entropy in terms of second modificiation}),
and then back into (\ref{eq: first bound on concentration inequality})
results 
\begin{equation}
\P\left[\left|H(A_{S})-\E[H(A_{S})]\right|>t\right]
\leq21\cdot d_{A}^{3}\cdot\left[e^{-\frac{\eta}{3}}+2\exp\left\{ -\frac{\eta r}{4d_{A}\log(\eta/e)}\log\left(1+\frac{r}{2\log(\eta/e)}\right)\right\} \right]\label{eq: bound on concentraion of random entropy before simplification}
\end{equation}
where 
\begin{equation}
r=d_{A}\hat{t}=t-\frac{8d_{A}}{\eta}-C(d_{B}).\label{eq: r versus t}
\end{equation}
 We next simplify this bound by loosening it. For the first term in
(\ref{eq: bound on concentraion of random entropy before simplification}),
the assumption of the proposition $\eta\geq d_{A}$ implies that 
\begin{equation}
d_{A}^{3}\leq\eta^{3}\leq64\cdot\left(\frac{\eta}{4}\right)^{3}\leq\frac{64}{2688}\cdot e^{\eta/3}=\frac{1}{42}\cdot e^{\eta/3},
\end{equation}
since $e^{t}\geq2688\cdot t^{3}$ for all $t\geq900$ (as can be numerically
verified), and since we set $t\equiv\frac{\eta}{4}\geq\frac{60d_{A}}{4}\geq\frac{3600}{4}$.
For the second term in (\ref{eq: bound on concentraion of random entropy before simplification}),
we use $\eta\geq60d_{A}\geq3600$ to loosely bound $42d_{A}^{3}\leq\frac{1}{2}d_{A}^{4}\leq\frac{1}{2}\eta^{4}$.
Using this bound and the upper bound (\ref{eq: r versus t}) results
\begin{align}
 & \P\left[\left|H(A_{S})-\E[H(A_{S})]\right|>t\right]\nonumber \\
 & \leq\frac{1}{2}e^{-\frac{\eta}{12}}+\frac{1}{2}\exp\left\{ -\frac{\eta r}{4d_{A}\log(\eta/e)}\log\left(1+\frac{r}{2\log(\eta/e)}\right)+4\log(\eta)\right\} ,
\end{align}
which is the statement of the proposition in (\ref{eq: concentration of conditioanl entropy lemma statement}),
using the notation $h(t)\eqdef t\log(1+t)$. To complete the proof
it remains to prove Lemma \ref{lem:modified entropy of Poission},
which follows next.
\end{proof}
\begin{lem}
\label{lem:modified entropy of Poission}Under the setting and notation
in the proof of Proposition \ref{prop: concentration of conditional entropy},
\begin{equation}
\left|\E\left[\hat{g}_{\eta}\left(\frac{W}{\eta}\right)\right]-\frac{\E[H(A_{S})]}{d_{A}}\right|\leq\frac{5}{\eta}+\frac{C(d_{B})}{d_{A}}.
\end{equation}
\end{lem}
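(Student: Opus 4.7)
The plan is to show that both $\E[\hat{g}_{\eta}(W/\eta)]$ and $\E[H(A_{S})]/d_{A}$ lie close to the common benchmark $(\log d_{A})/d_{A}$, and then conclude by a triangle inequality. The second comparison is handled immediately by Proposition~\ref{prop: Expected value of entropy}, which yields $|\E[H(A_{S})]/d_{A} - (\log d_{A})/d_{A}| \leq C(d_{B})/d_{A}$; this already supplies the $C(d_B)/d_A$ term of the claim. It thus suffices to prove $|\E[\hat{g}_{\eta}(W/\eta)] - (\log d_{A})/d_{A}| = O(1/\eta)$, and in fact the argument below gives $\leq 2/\eta$.

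For the first comparison, I would begin by decomposing $\E[\hat{g}_{\eta}(W/\eta)] = \E[g(W/\eta)] + \P[W=0]/\eta$, which holds because $\hat{g}_{\eta}$ and $g$ agree on $\{k/\eta : k \geq 1\}$ while $\hat{g}_{\eta}(0) = 1/\eta$ and $g(0) = 0$. Setting $\lambda \eqdef \eta/d_{A}$ and applying the Poisson Stein identity $\E[W f(W)] = \lambda \E[f(W+1)]$ to $f(w) = \log w$ (with the convention $0 \log 0 = 0$) gives $\E[W \log W] = \lambda \E[\log(W+1)]$, and hence
\begin{equation*}
\E[g(W/\eta)] = -\tfrac{1}{d_A} \E[\log(W+1)] + \tfrac{\log \eta}{d_A}.
\end{equation*}
Using $\log \eta = \log \lambda + \log d_A$, this rearranges to
\begin{equation*}
\E[\hat{g}_{\eta}(W/\eta)] = \tfrac{\log d_{A}}{d_{A}} - \tfrac{\E[\log(W+1)] - \log \lambda}{d_{A}} + \tfrac{e^{-\lambda}}{\eta}.
\end{equation*}

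The remaining task is to control $\delta \eqdef \E[\log((W+1)/\lambda)]$. The upper bound $\delta \leq \log(1 + 1/\lambda) \leq 1/\lambda$ follows directly from Jensen's inequality applied to the concave logarithm. For a matching lower bound, I would use the elementary inequality $\log(1+x) \geq x/(1+x)$, valid for $x > -1$, applied to $x = (W+1-\lambda)/\lambda$, which yields $\log((W+1)/\lambda) \geq 1 - \lambda/(W+1)$; taking expectations and invoking the companion Poisson Stein identity $\lambda \E[1/(W+1)] = 1 - e^{-\lambda}$ (itself an instance of the same identity applied to $f(w) = 1/(w+1)$) produces $\delta \geq e^{-\lambda} \geq 0$. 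Hence $|\delta| \leq 1/\lambda = d_A/\eta$, giving $|\E[\hat{g}_{\eta}(W/\eta)] - (\log d_{A})/d_{A}| \leq 1/\eta + e^{-\lambda}/\eta \leq 2/\eta$. The triangle inequality with Proposition~\ref{prop: Expected value of entropy} then delivers $|\E[\hat{g}_{\eta}(W/\eta)] - \E[H(A_{S})]/d_{A}| \leq 2/\eta + C(d_{B})/d_{A}$, comfortably within the stated $5/\eta + C(d_{B})/d_{A}$.

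The main obstacle is essentially careful bookkeeping: rigorously applying the Poisson Stein identities under the $0 \log 0 = 0$ convention, and checking that the exponentially small $e^{-\lambda}/\eta$ atom term (which is harmless since $\lambda = \eta/d_A \geq 60$ under the standing hypothesis of Proposition~\ref{prop: concentration of conditional entropy}) together with other tiny corrections all fit inside the generous $5/\eta$ slack. One could alternatively invoke a Poisson logarithmic Sobolev inequality to refine $\delta$ to $O(1/\lambda^2)$, but this sharpening is not needed for the stated constants, and the two-sided Jensen/Stein sandwich already suffices.
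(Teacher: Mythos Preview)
Your proof is correct and takes a genuinely different, more elementary route than the paper. Both arguments invoke Proposition~\ref{prop: Expected value of entropy} for the $C(d_B)/d_A$ contribution and reduce the problem to showing $\E[\hat g_\eta(W/\eta)]$ is $O(1/\eta)$-close to $(\log d_A)/d_A$. The paper obtains the upper side via Jensen on the concave $\hat g_\eta$, and the lower side by writing $\E[g(W/\eta)]=-\tfrac{1}{\eta}\Ent(W)+\tfrac{\log d_A}{d_A}$ and then bounding the functional entropy $\Ent(W)\le 4$ via the Poisson LSI (Lemma~\ref{lem: Poisson LSI}), applied not to $W$ directly but through an auxiliary truncation $f_\zeta(w)$ and a limit $\zeta\downarrow 2$. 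You bypass the LSI entirely: the size-bias identity $\E[W\log W]=\lambda\E[\log(W+1)]$ reduces the task to controlling $\delta=\E[\log((W+1)/\lambda)]$, which you sandwich between $e^{-\lambda}$ and $1/\lambda$ using only Jensen, the elementary bound $\log(1+x)\ge x/(1+x)$, and the closed-form $\lambda\E[1/(W+1)]=1-e^{-\lambda}$ (which the paper itself computes). In effect you prove $\Ent(W)=\lambda\delta\in[\lambda e^{-\lambda},1]$, a sharper statement than the paper's $\Ent(W)\le 4$, and you recover the constant $2/\eta$ rather than $5/\eta$. The trade-off is that the paper's LSI machinery is reusable in related settings, while your Stein-identity computation is tailored to this exact expectation; for the lemma as stated, your argument is cleaner.
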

\begin{proof}
Proposition \ref{prop: Expected value of entropy} implies that
\begin{equation}
\left|\frac{\E[H(A_{S})]}{d_{A}}-\frac{\log d_{A}}{d_{A}}\right|\leq\frac{C(d_{B})}{d_{A}}.\label{eq: result of lemma on conditional entropy in the proof}
\end{equation}
We next show that $\E[\hat{g}_{\eta}(\frac{W}{\eta})]$ is $(\frac{5}{\eta})$-close
to $\frac{\log d_{A}}{d_{A}}$. First, we derive an upper bound
\begin{align}
\E\left[\hat{g}_{\eta}\left(\frac{W}{\eta}\right)\right] & \trre[\leq,a]\hat{g}_{\eta}\left(\E\left[\frac{W}{\eta}\right]\right)\\
 & =\hat{g}_{\eta}\left(\frac{1}{d_{A}}\right)\\
 & \trre[\leq,b]g\left(\frac{1}{d_{A}}\right)+\frac{1}{\eta}\\
 & =\frac{\log d_{A}}{d_{A}}+\frac{1}{\eta},\label{eq: Poisson approximation to entropy upper bound}
\end{align}
where $(a)$ follows from Jensen's inequality since $\hat{g}_{\eta}$
is concave, $(b)$ follows from the construction of $\hat{g}_{\eta}$
in (\ref{eq: bound on approximated tlogt}). Thus, $\E[\hat{g}_{\eta}(\frac{W}{\eta})]$
is $(\frac{1}{\eta})$-close to $\frac{\log d_{A}}{d_{A}}$ from below.
Second, we derive a lower bound as follows 
\begin{align}
\E\left[\hat{g}_{\eta}\left(\frac{W}{\eta}\right)\right] & \geq\E\left[g\left(\frac{W}{\eta}\right)\right]-\frac{1}{\eta}\\
 & =\E\left[-\frac{W}{\eta}\log\left(\frac{W}{\eta}\right)\right]-\frac{1}{\eta}\\
 & =\frac{1}{\eta}\E\left[-W\log(W)\right]+\frac{1}{\eta}\E\left[W\log(\eta)\right]-\frac{1}{\eta}\\
 & =-\frac{1}{\eta}\E\left[W\log(W)\right]+\frac{\log\eta}{d_{A}}-\frac{1}{\eta}\\
 & =-\frac{1}{\eta}\E\left[W\log(W)\right]+\frac{1}{\eta}\E\left[W\right]\log\left(\E[W]\right)-\frac{1}{\eta}\E\left[W\right]\log\left(\E[W]\right)+\frac{\log\eta}{d_{A}}-\frac{1}{\eta}\\
 & =-\frac{1}{\eta}\Ent(W)-\frac{1}{\eta}\E\left[W\right]\log\left(\E[W]\right)+\frac{\log\eta}{d_{A}}-\frac{1}{\eta}\\
 & =-\frac{1}{\eta}\Ent(W)-\frac{1}{d_{A}}\log\left(\frac{\eta}{d_{A}}\right)+\frac{\log\eta}{d_{A}}-\frac{1}{\eta}\\
 & =-\frac{1}{\eta}\Ent(W)+\frac{\log d_{A}}{d_{A}}-\frac{1}{\eta},\label{eq: Poisson approximation to entropy}
\end{align}
where the first inequality follows from (\ref{eq: bound on approximated tlogt}).
We next complete the proof of the lemma by showing that $\Ent(W)\leq4$,
thus showing that $\E[\hat{g}_{\eta}(\frac{W}{\eta})]$ is $(\frac{5}{\eta})$-close
to $\frac{\log d_{A}}{d_{A}}$ from above. Combining (\ref{eq: result of lemma on conditional entropy in the proof})
with (\ref{eq: Poisson approximation to entropy upper bound}) and
(\ref{eq: Poisson approximation to entropy}) results
\begin{equation}
\left|\E\left[\hat{g}_{\eta}\left(\frac{W}{\eta}\right)\right]-\frac{\E[H(A_{S})]}{d_{A}}\right|\leq\frac{5}{\eta}+\frac{C(d_{B})}{d_{A}},
\end{equation}
as was required to be proved. 

As said, we complete the proof by proving that $\Ent(W)\leq4$, as
follows. For a\emph{ }positive integer $w\in\mathbb{N}_{+}$, and
$\zeta>2$, let 
\begin{equation}
f_{\zeta}(w)\eqdef\begin{cases}
1/\zeta, & w=0\\
w, & w\geq1
\end{cases},
\end{equation}
and note that for any $w\in\mathbb{N}$
\begin{equation}
\max_{w\in\mathbb{N}}\left|w\log w-f_{\zeta}(w)\log f_{\zeta}(w)\right|=\frac{\log\zeta}{\zeta}\label{eq: approximation of Poisson function}
\end{equation}
(note that here $0\log0$ is defined as $0$, by continuity $t\log t\to0$
as $t\downarrow0$). Further note that 
\begin{equation}
\E\left[f_{\zeta}(W)\right]=\E[W]+\frac{1}{\zeta}e^{-\eta/d_{A}}=\frac{\eta}{d_{A}}+\frac{1}{\zeta}e^{-\eta/d_{A}}.
\end{equation}
Hence, 
\begin{align}
 & \left|\Ent(W)-\Ent(f_{\zeta}(W))\right|\nonumber \\
 & \leq\left|\E\left[W\log\left(W\right)\right]-\E\left[f_{\zeta}(W)\log\left(f(W)\right)\right]\right| \nonumber\\
 & \hphantom{=}+\left|\E\left[W\right]\log\left(\E[W]\right)-\E\left[f_{\zeta}(W)\right]\log\left(\E[f(W)]\right)\right|\\
 & \trre[\leq,a]\frac{\log\zeta}{\zeta}+\frac{1}{\zeta}e^{-\eta/d_{A}}\\
 & \leq\frac{\log\zeta}{\zeta},\label{eq: bound on functional entropy differene}
\end{align}
where $(a)$ follows from (\ref{eq: approximation of Poisson function})
and from the fact that $t\log t$ is $1$-Lipschitz on $t\in[1,\infty)$,
while assuming $\E[f_{\zeta}(W)]\geq\E[W]=\frac{\eta}{d_{A}}>1$ (by
assumption). We next bound $\Ent(f_{\zeta}(W))$ using the Poisson
LSI in Lemma \ref{lem: Poisson LSI} in Appendix \ref{sec:Auxiliary-results},
as follows:
\begin{align}
\Ent(f_{\zeta}(W)) & \leq\E\left[f_{\zeta}(W)\right]\E\left[\frac{1}{f_{\zeta}(W)}\right]\\
 & =\left(\frac{\eta}{d_{A}}+\frac{1}{\zeta}e^{-\eta/d_{A}}\right)\cdot\E\left[\frac{1}{f_{\zeta}(W)}\right]\\
 & \trre[\leq,a]\left(\frac{\eta}{d_{A}}+\frac{1}{\zeta}e^{-\eta/d_{A}}\right)\cdot\E\left[\frac{1}{\frac{1}{\zeta}+\frac{1}{2}W}\right]\\
 & =\left(\frac{\eta}{d_{A}}+\frac{1}{\zeta}e^{-\eta/d_{A}}\right)\zeta\cdot\E\left[\frac{1}{1+\frac{\zeta}{2}W}\right]\\
 & =\left(\frac{\zeta\eta}{d_{A}}+e^{-\eta/d_{A}}\right)\cdot\E\left[\frac{1}{1+\frac{\zeta}{2}W}\right]\\
 & \trre[\leq,b]\left(\frac{\zeta\eta}{d_{A}}+e^{-\eta/d_{A}}\right)\E\left[\frac{1}{1+W}\right]\\
 & \trre[=,c]\left(\frac{\zeta\eta}{d_{A}}+e^{-\eta/d_{A}}\right)\frac{1}{\eta/d_{A}}\left(1-e^{-\eta/d_{A}}\right)\\
 & \leq\left(\frac{\zeta\eta}{d_{A}}+e^{-\eta/d_{A}}\right)\frac{1}{\eta/d_{A}}\\
 & \leq\zeta+1,\label{eq: Bound on approximated entropy}
\end{align}
where $(a)$ follows from $1/\zeta+\frac{1}{2}w\leq f_{\zeta}(w)$
for all $w\in\mathbb{N}$ (under the assumption $\zeta>2$), $(b)$
follows again from $\zeta>2$, $(c)$ follows from follows from a
direct (series) computation, for $W\sim\text{Poisson}(\lambda)$ 
\begin{align}
\E\left[\frac{1}{1+W}\right] & =\sum_{i=0}^{\infty}\frac{\lambda^{i}e^{-\lambda}}{i!}\frac{1}{1+i}\\
 & =\frac{1}{\lambda}\sum_{i=0}^{\infty}\frac{\lambda^{i+1}e^{-\lambda}}{(i+1)!}\\
 & =\frac{1}{\lambda}\sum_{i=0}^{\infty}\frac{\lambda^{i+1}e^{-\lambda}}{(i+1)!}\\
 & =\frac{1}{\lambda}\sum_{i=1}^{\infty}\frac{\lambda^{i}e^{-\lambda}}{i!}\\
 & =\frac{1}{\lambda}\left(1-e^{-\lambda}\right).
\end{align}
Thus, from (\ref{eq: bound on functional entropy differene}) and
(\ref{eq: Bound on approximated entropy})
\begin{equation}
\Ent(W)\leq\min_{\zeta>2}\left[\zeta+1+\frac{\log\zeta}{\zeta}\right]\leq4,
\end{equation}
as was required to be proved in order to complete the proof. 
\end{proof}

\subsection{Proof of Theorem \ref{thm: confidence bound for random entropy}}

Based on Propositions \ref{prop: Expected value of entropy} and \ref{prop: concentration of conditional entropy},
we may next prove Theorem \ref{thm: confidence bound for random entropy}. 
\begin{proof}[Proof of Theorem \ref{thm: confidence bound for random entropy}]
The bound $\log d_{A}\geq H(A_{S})$ follows since $A_{S}$ is supported
on a domain of size at most $d_{A}$. We thus next focus on the lower
bound. Let first us consider the case that $\eta\leq d_{A}d_{B}-d_{B}$
for which the concentration results of Proposition \ref{prop: concentration of conditional entropy}
is valid. We will afterwards separately handle the complementary case
$\eta>d_{A}d_{B}-d_{B}$. The conditions of Theorem \ref{thm: confidence bound for random entropy}
$d_{A}\geq d_{B}$ and $\eta\geq128d_{A}\log\left(\frac{128d_{A}}{\delta}\right)$
imply that $\eta\ge60d_{A}$ also holds, and so the qualifying conditions
of Propositions \ref{prop: Expected value of entropy} and \ref{prop: concentration of conditional entropy},
and thus their results, are valid. 

Recall that concentration inequality of Proposition \ref{prop: concentration of conditional entropy}
which is comprised of two terms. We require each term to be less than
$\delta/2$. The first term is $\frac{1}{2}\cdot e^{-\frac{\eta}{12}}$
and is bounded by $\delta/2$ by the assumption $\eta\geq12\log(\frac{1}{\delta})$.
For the second term we evaluate the necessary values for $r$ to achieve
an upper bound of $\delta/2$. To this end, let $q\eqdef\frac{r}{2\log(\eta/e)}$
so that the second term is 
\begin{equation}
\frac{1}{2}\exp\left\{ -\frac{\eta}{2d_{A}}\cdot h(q)+4\log(\eta)\right\} .
\end{equation}
For this term to be less than $\delta/2$ it suffices that 
\begin{equation}
h(q)\geq\frac{2d_{A}}{\eta}\left[4\log(\eta)+\log\frac{1}{\delta}\right].
\end{equation}
Assuming that $q\in[0,\frac{1}{2}]$, it holds that $\log(1+q)\geq q-q^{2}>\frac{q}{2}$
and then $h(q)\geq\frac{q^{2}}{2}.$ Thus, assuming $q\leq\frac{1}{2}$,
a sufficient condition is 
\begin{equation}
q=\sqrt{\frac{4d_{A}}{\eta}\left[4\log(\eta)+\log\frac{1}{\delta}\right]}
\end{equation}
as long as this value is less than $1/2$, which is satisfied if 
\begin{equation}
\frac{4d_{A}}{\eta}\left[4\log\left(\frac{\eta}{\delta}\right)\right]\leq\frac{1}{8},
\end{equation}
or, equivalently 
\begin{equation}
\frac{\eta/\delta}{\log(\eta/\delta)}\geq\frac{128d_{A}}{\delta}.
\end{equation}
By Lemma \ref{lem: x over log x} in Appendix \ref{sec:Auxiliary-results},
this condition holds if 
\begin{equation}
\eta\geq128d_{A}\log\left(\frac{128d_{A}}{\delta}\right).
\end{equation}
Note that the condition required for the first term, that is $\eta\geq12\log(\frac{1}{\delta})$
holds under this last condition, and thus unnecessary.

Then, the required $r$ for the second term to be less than $\delta/2$
is 
\begin{align}
r & =2\log(\eta/e)q\\
 & =2\log(\eta/e)\sqrt{\frac{4d_{A}}{\eta}\left[4\log(\eta)+\log\frac{1}{\delta}\right]}\\
 & \leq\sqrt{\frac{64d_{A}\log^{3}\left(\frac{\eta}{\delta}\right)}{\eta}}.
\end{align}
We thus deduce from the above and Propositions \ref{prop: Expected value of entropy}
and \ref{prop: concentration of conditional entropy}, that with probability
larger than $1-\delta$
\begin{align}
\left|H(A_{S})-\log d_{A}\right| & \leq\left|H(A_{S})-\E[H(A_{S})]\right|+\left|\E[H(A_{S})]-\log d_{A}\right|\\
 & \leq\sqrt{\frac{64d_{A}\log^{3}\left(\frac{\eta}{\delta}\right)}{\eta}}+\frac{8d_{A}}{\eta}+2C(d_{B})\\
 & \leq16\sqrt{\frac{d_{A}\log^{3}\left(\frac{\eta}{\delta}\right)}{\eta}}+2C(d_{B})\\
 & =16\sqrt{\frac{d_{A}\log^{3}\left(\frac{\eta}{\delta}\right)}{\eta}}+\frac{4\log(d_{B})}{\sqrt{d_{B}}}\\
 & \trre[\leq,a]20\sqrt{\frac{d_{A}\log^{3}\left(\frac{\eta}{\delta}\right)}{\eta}},\label{eq: high probability bound entropy proof}
\end{align}
where the last inequality holds since $\log(\eta)\geq\log(d_{B})$
and $\frac{d_{A}}{\eta}\geq\frac{1}{d_{B}}$. 

We next consider the case $\eta>d_{A}d_{B}-d_{B}$, and provide a
bound which holds with probability $1$. Note that in this case, it
holds that 
\begin{equation}
\tau\eqdef d_{B}-d_{A}d_{B}+\eta\geq0.
\end{equation}
Trivially, $H(A_{s})\leq\log d_{A}$, so we next focus on a lower
bound. Recalling the definition of $Z_{S}(i)$, the entropy is given
by 
\begin{equation}
H(A_{S})=-\sum_{i\in[d_{A}]}\frac{Z_{S}(i)}{\eta}\log\left(\frac{Z_{S}(i)}{\eta}\right).
\end{equation}
The entropy is a concave function of $\{Z_{S}(i)\}_{i\in[d_{A}]}$,
and so its minimal value under the constraints $\tau\leq Z_{S}(i)\leq d_{B}$
and $\sum_{i\in[d_{A}]}Z_{S}(i)=\eta$ is attained at the boundary
of the constraint set. Concretely, it must hold that $Z_{S}(i)\in\{\tau,d_{B}\}$
for all $i\in[d_{A}]$. However, since $\eta\geq d_{A}d_{B}-d_{B}$,
it can hold that $Z_{S}(i)=\tau$ only for a single $i\in[d_{A}]$.
Thus, the assignment with minimal entropy is given by $(Z_{S}(1),\ldots,Z_{S}(d_{A}))=(d_{B},d_{B},\ldots,d_{B},\tau)$
and the resulting minimal entropy implies that 
\begin{align}
H(A_{S}) & \geq-(d_{A}-1)\frac{d_{B}}{\eta}\log\left(\frac{d_{B}}{\eta}\right)-\frac{\tau}{\eta}\log\left(\frac{\tau}{\eta}\right)\\
 & \trre[\geq,a]-(d_{A}-1)\frac{d_{B}}{\eta}\log\left(\frac{d_{B}}{\eta}\right)\\
 & =\frac{(d_{A}d_{B}-d_{B})}{\eta}\log\left(\frac{\eta}{d_{B}}\right)\\
 & \trre[\geq,b]\frac{(d_{A}d_{B}-d_{B})}{d_{A}d_{B}}\log\left(\frac{d_{A}d_{B}-d_{B}}{d_{B}}\right)\\
 & =\left(1-\frac{1}{d_{A}}\right)\log\left(d_{A}-1\right)\\
 & \geq\log(d_{A})+\log\left(1-\frac{1}{d_{A}}\right)-\frac{\log(d_{A})}{d_{A}}\\
 & \trre[\geq,c]\log(d_{A})-2\frac{\log(d_{A})}{d_{A}},
\end{align}
where $(a)$ holds since $\log(\frac{\tau}{\eta})\leq\log(\frac{d_{B}}{\eta})\leq0$,
$(b)$ holds since $d_{A}d_{B}-d_{B}\leq\eta\leq d_{A}d_{B}$ is assumed,
$(c)$ holds since for $q\in[-\frac{1}{2},0]$, it holds that $\log(1+q)\geq q-q^{2}>\frac{q}{2}$.
The resulting bound, which holds with probability $1$, is only tighter
than the high probability derived in (\ref{eq: high probability bound entropy proof}).
Combining the results for the case $\eta\leq d_{A}d_{B}-d_{B}$ (a
high probability bound), and for $\eta>d_{A}d_{B}-d_{B}$ (a probability
$1$ bound) then establishes the claim of the theorem.
\end{proof}

\subsection{Proof of Corollary \ref{cor: confidence bound for MI}}

The proof of Corollary \ref{cor: confidence bound for MI} follows
rather directly from the confidence bound for entropy of Theorem \ref{thm: confidence bound for random entropy},
and the standard decomposition of mutual information to a sum of marginal
entropies minus the joint entropy. The formal proof is below.
\begin{proof}[Proof of Corollary \ref{cor: confidence bound for MI}]
By Theorem \ref{thm: confidence bound for random entropy} and the
union bound, both 
\begin{equation}
H(A_{S})\geq\log d_{A}-20\sqrt{\frac{d_{A}\log^{3}\left(\frac{\eta}{\delta}\right)}{\eta}}\label{eq: use of entropy concentration for MI - A}
\end{equation}
and 
\begin{equation}
H(B_{S})\geq\log d_{B}-20\sqrt{\frac{d_{B}\log^{3}\left(\frac{\eta}{\delta}\right)}{\eta}}\label{eq: use of entropy concentration for MI - B}
\end{equation}
hold with probability larger than $1-2\delta$. We then have that
\begin{align}
I(A_{S};B_{S}) & \trre[=,a]H(A_{S})+H(B_{S})-H(A_{S},B_{S})\\
 & =H(A_{S})+H(B_{S})-\log(\eta)\\
 & \trre[\geq,b]\log d_{A}-20\sqrt{\frac{d_{A}\log^{3}\left(\frac{\eta}{\delta}\right)}{\eta}}+\log d_{B}-20\sqrt{\frac{d_{B}\log^{3}\left(\frac{\eta}{\delta}\right)}{\eta}}-\log\eta\\
 & \trre[\geq,c]\log d_{A}-20\sqrt{\frac{d_{A}\log^{3}\left(\frac{\eta}{\delta}\right)}{\eta}}+\log d_{B}-20\sqrt{\frac{d_{B}\log^{3}\left(\frac{\eta}{\delta}\right)}{\eta}}-\log\eta\\
 & =\log\left(1+\overline{\rho}\right)-40\sqrt{\frac{d_{A}\log^{3}\left(\frac{\eta}{\delta}\right)}{\eta}},
\end{align}
where $(a)$ follows from the standard decomposition of mutual information
to a sum of entropies \cite[Chapter 2]{cover2012elements}, $(b)$
follows from (\ref{eq: use of entropy concentration for MI - A})
and (\ref{eq: use of entropy concentration for MI - B}), $(c)$ follows
from $\sqrt{d_{A}}+\sqrt{d_{B}}\leq\sqrt{2(d_{A}+d_{B})}\leq\sqrt{4d_{A}}$.
The proof is completed by changing notation from $2\delta\to\delta$. 
\end{proof}

\section{Proof of Theorem \ref{thm: confidence of MI of MVD} \label{sec:Proof-of-MVD-Theorem}}

We begin with the following lemma.
\begin{lem}
\label{lem: large domain size per letter in c}Let $R_{S}$ be a random
relation drawn according to the random relation model in Definition
\ref{def: random relation model}, with 
\begin{equation}
N\geq256d_{A}\overline{d}\log\left(\frac{128\overline{d}}{\delta}\right).
\end{equation}
where $\overline{d}=\max\{d_{A},d_{C}\}$. Let $R_{S,\ell}\eqdef\sigma_{C=\ell}(R_{S})$,
and let $N_{S}(\ell)=|R_{S,\ell}|$. Then
\begin{equation}
\min_{\ell\in[d_{C}]}N_{S}(\ell)\geq128d_{A}\log\left(\frac{128d_{A}}{\delta}\right)
\end{equation}
 with probability larger than $1-\delta$. 
\end{lem}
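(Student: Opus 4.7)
The plan is to prove this by a direct concentration-and-union-bound argument, since under the random relation model each $N_S(\ell)$ is a hypergeometric random variable whose expectation is easy to compute and is forced to be comfortably large by the quantitative assumption on $N$.

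First, I would observe that for each fixed $\ell\in[d_C]$, the quantity $N_S(\ell)=|\{t\in S\colon t[C]=\ell\}|$ is the number of ``successes'' when drawing $N$ tuples uniformly without replacement from $\bigtimes_{i=1}^n[d_i]$, where the ``successful'' tuples are precisely the $D/d_C$ tuples (with $D\eqdef\prod_i d_i$) whose $C$-coordinate equals $\ell$. Hence $N_S(\ell)\sim\mathrm{Hypergeometric}(D,D/d_C,N)$, and in particular $\mu\eqdef\E[N_S(\ell)]=N/d_C$. Using the assumption $N\geq256 d_A\overline{d}\log(128\overline{d}/\delta)$ together with $\overline{d}\geq d_C$, this already gives the clean lower bound
\begin{equation}
\mu=\frac{N}{d_C}\geq 256\,d_A\cdot\frac{\overline{d}}{d_C}\log\!\left(\frac{128\overline{d}}{\delta}\right)\geq 256\,d_A\log\!\left(\frac{128\overline{d}}{\delta}\right),
\end{equation}
so that $\mu/2\geq 128\,d_A\log(128\overline{d}/\delta)\geq 128\,d_A\log(128\,d_A/\delta)$, since $\overline{d}\geq d_A$. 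Thus it suffices to show that $N_S(\ell)\geq\mu/2$ for every $\ell\in[d_C]$ with probability at least $1-\delta$.

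Next, I would invoke a one-sided relative Chernoff-type tail bound. Since sampling without replacement is at least as concentrated as sampling with replacement, the standard relative Chernoff bound applies to the hypergeometric; concretely one may use Serfling's inequality, or equivalently the multiplicative Chernoff bound already used elsewhere in the paper (Lemma \ref{lem: relative Chernoff for binomial} in the appendix). Applied with deviation parameter $\xi=1/2$, this yields
\begin{equation}
\Pr\!\left[N_S(\ell)\leq\tfrac{1}{2}\mu\right]\leq\exp\!\left(-\tfrac{\mu}{8}\right)\leq\exp\!\left(-32\,d_A\log\!\tfrac{128\overline{d}}{\delta}\right)\leq\left(\tfrac{\delta}{128\overline{d}}\right)^{32}.
\end{equation}
A union bound over the $d_C$ values of $\ell\in[d_C]$ then gives
\begin{equation}
\Pr\!\left[\exists\,\ell\in[d_C]\colon N_S(\ell)<\tfrac{\mu}{2}\right]\leq d_C\cdot\left(\tfrac{\delta}{128\overline{d}}\right)^{32}\leq\delta,
\end{equation}
where the last inequality uses $d_C\leq\overline{d}$ and $\delta\in(0,1)$, so that $32\log(128\overline{d}/\delta)\geq\log(d_C/\delta)$ trivially holds. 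Combining with the deterministic lower bound $\mu/2\geq 128\,d_A\log(128\,d_A/\delta)$ from the first step completes the proof.

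The argument has essentially no obstacle, provided one is willing to cite a standard hypergeometric concentration result; the only care needed is in chaining the inequalities so that the exponent $\mu/8$ overwhelms the $\log d_C$ loss from the union bound, which is why the assumed bound on $N$ carries the factor $\overline{d}=\max\{d_A,d_C\}$ rather than just $d_A$. If one wishes to avoid appealing to Serfling directly, the Chernoff bound already used for the binomial case in Lemma \ref{lem: relative Chernoff for binomial} can be transferred to the hypergeometric via Hoeffding's classical reduction (expressing hypergeometric probabilities as averages of binomials), which is the minor technical wrinkle but is by now routine.
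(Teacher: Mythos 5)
Your proof is correct, and it follows the same skeleton as the paper's (identify $N_S(\ell)$ as a hypergeometric variable with mean $N/d_C$, show it exceeds half its mean with high probability, union bound over $\ell\in[d_C]$), but the concentration step is genuinely different and in fact tighter. The paper bounds $\P[N_S(\ell)\leq \tfrac{1}{2}\E[N_S(\ell)]]$ by applying Serfling's \emph{additive} deviation inequality to $N-N_S(\ell)$ with $\epsilon=N/(2d_C)$, which yields the exponent $N/(2d_C^2)$ and therefore forces the auxiliary requirement $N\geq 2d_C^2\log(d_C/\delta)$ — a condition that is not transparently implied by the stated hypothesis $N\geq 256\,d_A\overline{d}\log(128\overline{d}/\delta)$ when $d_C$ is much larger than $d_A$. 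You instead use a \emph{multiplicative} Chernoff-type lower-tail bound, transferred from the binomial to the hypergeometric via Hoeffding's convex-ordering reduction for sampling without replacement; this gives an exponent proportional to $\mu=N/d_C$ rather than $N/d_C^2$, so the hypothesis on $N$ (which guarantees $\mu\geq 256\,d_A\log(128\overline{d}/\delta)$) suffices directly, and the union bound over $d_C$ values absorbs cleanly. The only point requiring care, which you correctly flag, is justifying that the relative Chernoff bound applies to the hypergeometric; citing Hoeffding's reduction or Serfling in its multiplicative form settles it. Your chain of inequalities ($\mu/2\geq 128\,d_A\log(128d_A/\delta)$ via $\overline{d}\geq d_C$ and $\overline{d}\geq d_A$, then $d_C\cdot(\delta/128\overline{d})^{32}\leq\delta$) is verified.
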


\begin{proof}
Let us first consider a specific $\ell\in[d_{C}]$. Then, $N_{S}(\ell)\sim\text{Hypergeometric}(d_{A}d_{B}d_{C},d_{A}d_{B},N)$
where $d_{A}d_{B}d_{C}$ is the population size, $d_{A}d_{B}$ is
the number of success states in the population, to wit, the tuples
for which $C=\ell$, and $N$ is the number of draws. It evidently
holds that $\E[N_{S}(\ell)]=\frac{N}{d_{C}}$. We next show that $N_{S}(\ell)$
is larger than $1/2$ of its expected value with high probability.
Concretely,
\begin{align}
\P\left[N_{S}(\ell)\leq\frac{1}{2}\E[N_{S}(\ell)]\right] & =\P\left[N_{S}(\ell)\leq\frac{N}{2d_{C}}\right]\\
 & =\P\left[N-N_{S}(\ell)\geq N\left(1-\frac{1}{2d_{C}}\right)\right]\\
 & =\P\left[N-N_{S}(\ell)\geq\E\left[N-N_{S}(\ell)\right]+\frac{N}{2d_{C}}\right]\\
 & \leq\exp\left[-\frac{N}{2d_{C}^{2}}\right],
\end{align}
where the inequality follows from the following reasoning: Due to
the symmetry of the hypergeometric distribution, it holds that 
\begin{equation}
N-N_{S}(\ell)\sim\text{Hypergeometric}(d_{A}d_{B}d_{C},d_{A}d_{B}d_{C}-d_{A}d_{B},N),
\end{equation}
and so Serfling's inequality (Lemma \ref{lem: Serfling's inequality}
in Appendix \ref{sec:Auxiliary-results}) used with $\epsilon=\frac{N}{2d_{C}}$
directly results the stated bound. If we choose 
\begin{equation}
N\geq2d_{C}^{2}\cdot\log\frac{d_{C}}{\delta}
\end{equation}
then under the assumptions of the lemma
\begin{equation}
N_{S}(\ell)\geq\frac{1}{2}\E[N_{S}(\ell)]=\frac{N}{2d_{C}}\geq128d_{A}\log\left(\frac{128d_{A}}{\delta}\right)
\end{equation}
holds with probability $1-\frac{\delta}{d_{C}}$. Taking a union bound
over all $\ell\in[d_{C}]$ assures that this holds uniformly over
$[d_{C}]$. 
\end{proof}
We may now prove Theorem \ref{thm: confidence of MI of MVD}.
\begin{proof}[Proof of Theorem \ref{thm: confidence of MI of MVD}]
 We use the bound on the (regular) mutual information for the $d_{C}=1$
case (Corollary \ref{cor: confidence bound for MI}) for each $\ell\in[d_{C}]$
separately. Lemma \ref{lem: large domain size per letter in c} assures
that with high probability, the number of sample points satisfies
\begin{equation}
\min_{\ell\in[d_{C}]}N_{S}(\ell)\geq128d_{A}\log\left(\frac{128d_{A}}{\delta}\right)\label{eq: first event - large number of points for all l}
\end{equation}
with probability $1-\delta$. For each $\ell\in[d_{C}]$. Then, applying
the result of Corollary \ref{cor: confidence bound for MI} with $\delta$
replaced by $\frac{\delta}{d_{C}}$ and then taking a union bound
over $\ell\in[d_{C}]$ assures that with probability $1-\delta$ it
holds that 
\begin{equation}
I(A_{S};B_{S}\mid C_{S}=\ell)\geq\log\left(1+\overline{\rho}_{S}(\ell)\right)-40\sqrt{\frac{d_{A}\log^{3}\left(\frac{2N_{S}(\ell)d_{C}}{\delta}\right)}{N_{S}(\ell)}}\,\,\,\,\,\,\,\,\text{for all }\ell\in[d_{C}]\label{eq: second event - good MI for all l}
\end{equation}
where
\begin{equation}
\overline{\rho}_{S}(\ell)=\frac{d_{A}\cdot d_{B}}{N_{S}(\ell)}-1.
\end{equation}
Moreover, considering $F\equiv(A,B)$ as a single joint random variable
with domain $[d_{f}]=[d_{A}d_{B}]$, the random draw of the set $S$
of size $N$ from $[d_{A}]\times[d_{B}]\times[d_{C}]$ can be considered
as a draw from $[d_{C}]\times[d_{f}]$. Using Theorem \ref{thm: confidence bound for random entropy}
then assures that 
\begin{equation}
0\leq\log d_{C}-H(C_{S})\leq20\sqrt{\frac{\max\{d_{C},d_{A}d_{B}\}\log^{3}\left(\frac{N}{\delta}\right)}{N}}\label{eq: third event - good entropy for c}
\end{equation}
holds with probability larger than $1-\delta$, as long as 
\begin{equation}
N\geq128\cdot\max\{d_{C},d_{A}d_{B}\}\log\left(\frac{128\cdot\max\{d_{C},d_{A}d_{B}\}}{\delta}\right).
\end{equation}
It can be verified that this indeed condition holds under the assumption
on $N$ in (\ref{eq: assumption on N in MVD theorem}) made in the
statement of Theorem \ref{thm: confidence of MI of MVD}. 

We next assume that the events in (\ref{eq: first event - large number of points for all l}),
(\ref{eq: second event - good MI for all l}) and (\ref{eq: third event - good entropy for c})
simultaneously hold. By the union bound, this occurs with probability
larger than $1-3\delta$. Note that $\P[C_{s}=\ell]=\frac{N_{s}(\ell)}{N}$
holds, and let $s$ be a given set which belongs to the high probability
set. Then, the relative number of spurious tuples is upper bounded
as 
\begin{align}
 & \log\left[1+\rho(s,\phi)\right]\nonumber \\
 & =\log\left[\frac{\sum_{\ell\in[d_{C}]}|\proj{A}(R_{\ell})|\cdot|\proj{B}(R_{\ell})|}{\sum_{\ell\in[d_{C}]}\proj{A,B}(R_{\ell})}\right]\\
 & \leq\log\left[\frac{d_{C}d_{A}d_{B}}{\sum_{\ell\in[d_{C}]}N_{s}(\ell)}\right]\\
 & =\log d_{C}+\log\left[\frac{1}{\sum_{\ell\in[d_{C}]}\frac{1}{1+\overline{\rho}_{s}(\ell)}}\right]\\
 & =\log d_{C}+\left(\sum_{\ell\in[d_{C}]}\P[C_{s}=\ell]\right)\cdot\log\left[\frac{\sum_{\ell\in[d_{C}]}\P[C_{s}=\ell]}{\sum_{\ell\in[d_{C}]}\frac{1}{1+\overline{\rho}_{s}(\ell)}}\right]\\
 & \trre[\leq,a]\log d_{C}+\sum_{\ell\in[d_{C}]}\P[C_{s}=\ell]\log\left[\P[C_{s}=\ell]\cdot(1+\overline{\rho}_{s}(\ell))\right]\\
 & =\log d_{C}-H(C_{s})+\sum_{\ell\in[d_{C}]}\P[C_{s}=\ell]\log\left[1+\overline{\rho}_{s}(\ell)\right]\\
 & \trre[\leq,b]20\sqrt{\frac{\max\{d_{C},d_{A}d_{B}\}\cdot\log^{3}\left(\frac{N}{\delta}\right)}{N}}\nonumber \\
 & \hphantom{=====}+\sum_{\ell\in[d_{C}]}\P[C_{s}=\ell]\left(I(A_{S};B_{S}\mid C_{S}=\ell)+40\sqrt{\frac{d_{A}\log^{3}\left(\frac{2N_{S}(\ell)d_{C}}{\delta}\right)}{N_{S}(\ell)}}\right)\\
 & \trre[=,c]20\sqrt{\frac{\max\{d_{C},d_{A}d_{B}\}\cdot\log^{3}\left(\frac{N}{\delta}\right)}{N}}+I(A_{s};B_{s}\mid C_{s})+\sum_{\ell\in[d_{C}]}\P[C_{s}=\ell]40\sqrt{\frac{d_{A}\log^{3}\left(\frac{2N_{S}(\ell)d_{C}}{\delta}\right)}{N_{S}(\ell)}}\\
 & \trre[\leq,d]20\sqrt{\frac{\max\{d_{C},d_{A}d_{B}\}\cdot\log^{3}\left(\frac{N}{\delta}\right)}{N}}+I(A_{s};B_{s}\mid C_{s})+40\sqrt{\frac{d_{A}d_{C}\log^{3}\left(\frac{2Nd_{C}}{\delta}\right)}{N}}\\
 & \trre[\leq,e]I(A_{s};B_{s}\mid C_{s})+60\sqrt{\frac{d_{A}\overline{d}\log^{3}\left(\frac{2Nd_{C}}{\delta}\right)}{N}},\label{eq: derivation of log-spurious tuples with conditional MI}
\end{align}
where $(a)$ follows from the log sum inequality (Lemma \ref{lem: log-sum inequality}
in Appendix \ref{sec:Auxiliary-results}), $(b)$ follows from the
assumption that the events in (\ref{eq: second event - good MI for all l})
and (\ref{eq: third event - good entropy for c}) hold, $(c)$ from
the standard definition of conditional mutual information as an weighted
average of mutual information terms
\begin{equation}
I(A_{s};B_{s}\mid C_{s})\eqdef\sum_{\ell\in[d_{C}]}\P[C_{s}=\ell]I(A_{s};B_{s}\mid C_{s}=\ell),
\end{equation}
and $(d)$ follows from the bound 
\begin{align}
 & \sum_{\ell\in[d_{C}]}\P[C_{s}=\ell]40\sqrt{\frac{d_{A}\log^{3}\left(\frac{2N_{S}(\ell)d_{C}}{\delta}\right)}{N_{S}(\ell)}}\nonumber \\
 & \leq40\sqrt{d_{A}\log^{3}\left(\frac{2Nd_{C}}{\delta}\right)}\cdot\sum_{\ell\in[d_{C}]}\P[C_{s}=\ell]\sqrt{\frac{1}{N_{S}(\ell)}}\\
 & =\frac{40\sqrt{d_{A}\log^{3}\left(\frac{2Nd_{C}}{\delta}\right)}}{N}\cdot\sum_{\ell\in[d_{C}]}\sqrt{N_{s}(\ell)}\\
 & \leq40\sqrt{\frac{d_{A}d_{C}\log^{3}\left(\frac{2Nd_{C}}{\delta}\right)}{N}},
\end{align}
where here the first inequality utilizes $N_{S}(\ell)\leq N$ for
the logarithmic term, and the second inequality uses the fact that
$\sum_{\ell\in[d_{C}]}\sqrt{N_{s}(\ell)}$ under the constraint $\sum_{\ell\in[d_{C}]}N_{s}(\ell)=N$
is maximized for $N_{s}(\ell)=\frac{N}{d_{C}}$ for all $\ell$. Finally,
in passage $(e)$ of (\ref{eq: derivation of log-spurious tuples with conditional MI})
we slightly loosen the bound using $\overline{d}=\max\{d_{A},d_{C}\}\geq\max\{d_{A},d_{B}\}$.
The proof is completed by replacing $3\delta\to\delta$.
\end{proof}

\section{Auxiliary results \label{sec:Auxiliary-results}}
\begin{lem}[{An LSI for asymmetric Bernoulli random variables \cite[Chapter 5]{boucheron2013concentration}}]
\label{lem: LSI for asymmetric Bernoulli} For any function $g\colon\{-1,1\}^{d}\to\mathbb{R}$
and i.i.d. random variables $\P[R(j)=1]=p=1-\P[R(j)=-1]$ let 
\begin{equation}
\mathcal{E}(g)\eqdef p(1-p)\E\left[\sum_{j\in[d]}\left(g\left(R(1),\ldots,R(j),\ldots,R(d)\right)-g\left(R(1),\ldots,-R(j),\ldots,R(d)\right)\right)^{2}\right]\label{eq: Efron-Stein variance}
\end{equation}
be the Efron-Stein variance of $g$. Then, the LSI \cite[Thms. 5.1 and 5.2]{boucheron2013concentration}
states that 
\begin{equation}
\Ent(g^{2})\leq\frac{1}{1-2p}\log\left(\frac{1-p}{p}\right)\cdot\mathcal{E}(g).
\end{equation}
\end{lem}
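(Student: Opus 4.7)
The plan is to prove this LSI via the classical two-step strategy: establish a one-dimensional (two-point) inequality for a single asymmetric Bernoulli variable, and then tensorize using subadditivity of the functional entropy under product measures.

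For the first step, fix $p \in (0,1) \setminus \{1/2\}$, consider $g \colon \{-1,1\} \to \mathbb{R}$, and write $a = g(1)$, $b = g(-1)$. The one-dimensional Efron-Stein variance reduces to $\mathcal{E}_1(g) = p(1-p)(a-b)^2$, and the target inequality becomes
\begin{equation}
pa^2 \log a^2 + (1-p)b^2 \log b^2 - \mu \log \mu \leq \frac{p(1-p)}{1-2p}\log\frac{1-p}{p}\,(a-b)^2,
\end{equation}
where $\mu \eqdef pa^2 + (1-p)b^2$. By homogeneity in $(a,b)$, I may normalize $\mu = 1$, reducing the problem to a single-parameter calculus inequality. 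The extremal constant $\frac{p(1-p)}{1-2p}\log\frac{1-p}{p}$ is known to be tight (achieved in the limit as one of $a,b$ tends to a specific ratio), and nonnegativity of the residual is verified by differentiating once and checking the sign of the derivative against the normalization constraint.

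For the second step, I invoke subadditivity of the functional entropy: for any nonnegative $f$ of independent random variables $R(1), \ldots, R(d)$,
\begin{equation}
\Ent(f) \leq \sum_{j=1}^d \mathbb{E}\!\left[\Ent^{(j)}(f)\right],
\end{equation}
where $\Ent^{(j)}$ denotes the entropy of $f$ viewed as a function of $R(j)$ alone, conditionally on the remaining coordinates. This subadditivity follows from the variational formula $\Ent(f) = \sup\{ \mathbb{E}[f\phi] : \mathbb{E}[e^\phi] \leq 1\}$, which linearizes the entropy and allows pointwise conditioning over the independent product structure. Applying the one-dimensional LSI to each $\Ent^{(j)}(g^2)$, and noting that $(g(\ldots, R(j), \ldots) - g(\ldots, -R(j), \ldots))^2$ equals $(g(\ldots, 1, \ldots) - g(\ldots, -1, \ldots))^2$ regardless of the sign of $R(j)$, the sum of the conditional bounds is exactly $\frac{1}{1-2p}\log\frac{1-p}{p}$ times the Efron-Stein variance in~(\ref{eq: Efron-Stein variance}).

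The main obstacle is the one-dimensional inequality. The constant $\frac{p(1-p)}{1-2p}\log\frac{1-p}{p}$ is asymmetric in $p$, has a removable singularity at $p=1/2$ (where L'Hôpital gives $1/2$, recovering the symmetric Bernoulli LSI of Gross), and simple symmetrization arguments do not yield the correct prefactor. Obtaining the sharp constant requires an honest analysis of $t\mapsto t\log t$ relative to the biased mean $\mu$, typically via reparametrization of $(a,b)$ around $\sqrt{\mu}$ and a Taylor-type residual estimate. Once the two-point inequality is established with the correct constant, the tensorization step is formal.
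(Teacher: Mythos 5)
The paper does not actually prove this lemma: it is imported verbatim from Boucheron, Lugosi and Massart (the cited Theorems 5.1 and 5.2), so the only ``proof'' on record is the citation. Your outline reproduces exactly the proof strategy of that reference: tensorize the functional entropy over the independent coordinates via subadditivity (their Theorem 5.1, which is indeed obtained from the variational formula $\Ent(f)=\sup\{\E[f\phi]:\E[e^{\phi}]\le 1\}$ as you say), then apply a one-dimensional two-point inequality coordinatewise. Your bookkeeping is also correct: since $\bigl(g(\ldots,R(j),\ldots)-g(\ldots,-R(j),\ldots)\bigr)^2$ does not depend on the sign of $R(j)$, summing the conditional two-point bounds recovers precisely $\frac{1}{1-2p}\log\frac{1-p}{p}\cdot\mathcal{E}(g)$ with $\mathcal{E}(g)$ as defined in \eqref{eq: Efron-Stein variance}, and your normalized one-dimensional target
\begin{equation}
pa^{2}\log a^{2}+(1-p)b^{2}\log b^{2}-\mu\log\mu\;\le\;\frac{p(1-p)}{1-2p}\log\frac{1-p}{p}\,(a-b)^{2},\qquad \mu=pa^{2}+(1-p)b^{2},
\end{equation}
is the right statement, with the right degenerate limit at $p=\tfrac12$.

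The gap is that this two-point inequality --- which is the entire mathematical content of the lemma --- is never established. ``Differentiating once and checking the sign of the derivative against the normalization constraint'' is a description of a hoped-for calculation, not a proof; the sharp constant $\frac{p(1-p)}{1-2p}\log\frac{1-p}{p}$ is exactly the part that is delicate (the argument in the cited book is a multi-step analysis, not a one-line sign check), and you yourself flag it as ``the main obstacle'' before deferring it. So what you have is a correct and complete reduction of the $d$-dimensional statement to the one-dimensional one, plus a citation-shaped hole precisely where the paper also places its citation. For the purposes of this paper that is defensible --- the authors treat the lemma as a black box --- but as a self-contained proof your write-up is incomplete.
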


\begin{lem}[Relative Chernoff's bound for a binomial random variable]
\label{lem: relative Chernoff for binomial}For $B_{i}\sim\text{Bernoulli}(p)$
i.i.d., $i\in[n]$, it holds that for any $\xi\in[0,1]$,
\begin{equation}
\P\left[\left|\frac{1}{n}\sum_{i=1}^{n}B_{i}-p\right|\geq\xi p\right]\leq2e^{-\frac{\xi^{2}pn}{3}}.\label{eq: relative chernoff}
\end{equation}
\label{lem: continuity of self-information function}Let $g(t)\eqdef-t\log(t)$.
Then for any $s,t\in[0,1]$ it holds that 
\begin{equation}
|g(t)-g(s)|\leq2g(|s-t|).
\end{equation}
\end{lem}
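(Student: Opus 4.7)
The plan is to reduce the bound to a one-variable inequality via a WLOG and monotonicity argument. Assume $s \leq t$ and set $u \eqdef t - s \in [0,1]$, so that the claim becomes $|g(s+u) - g(s)| \leq 2g(u)$. I will prove the two one-sided bounds $g(s+u) - g(s) \leq 2g(u)$ and $g(s) - g(s+u) \leq 2g(u)$ separately; the second one is where the factor of $2$ is doing real work.

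For the direction $g(s+u) - g(s) \leq 2g(u)$ the clean route is subadditivity of $g$ on $[0,1]$. I would establish the identity
\begin{equation*}
g(s) + g(u) - g(s+u) \;=\; s\log\!\left(1 + \tfrac{u}{s}\right) + u\log\!\left(1 + \tfrac{s}{u}\right),
\end{equation*}
which is manifestly non-negative because both logarithms have arguments exceeding $1$. This gives $g(s+u) \leq g(s) + g(u)$, and hence $g(s+u) - g(s) \leq g(u) \leq 2g(u)$. This direction does not even require the factor of $2$.

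For the reverse direction $g(s) - g(s+u) \leq 2g(u)$, I would first collapse the dependence on $s$. Differentiating in $s$,
\begin{equation*}
\frac{d}{ds}\bigl[g(s) - g(s+u)\bigr] \;=\; -\log s - 1 - \bigl(-\log(s+u)-1\bigr) \;=\; \log\!\left(1 + \tfrac{u}{s}\right) \;>\; 0,
\end{equation*}
so on $s \in [0, 1-u]$ the expression $g(s) - g(s+u)$ is maximized at the right endpoint $s = 1-u$, where it equals $g(1-u) - g(1) = g(1-u)$. Thus it suffices to prove the one-parameter inequality
\begin{equation*}
g(1-u) \;\leq\; 2g(u), \qquad u \in [0,1],
\end{equation*}
or equivalently $\psi(u) \eqdef -2u\log u + (1-u)\log(1-u) \geq 0$. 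One checks $\psi(0) = \psi(1) = 0$ and computes $\psi'(u) = -2\log u - \log(1-u) - 3$, which tends to $+\infty$ at both endpoints, so $\psi$ is increasing near $0$ and near $1$.

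The main obstacle is this last step: verifying that $\psi$ does not dip below zero on the interior. Because $g$ is not monotone on $[0,1]$ (it peaks at $1/e$), the two terms of $\psi$ compete delicately, and the factor of $2$ is what absorbs the asymmetry between $g(s) - g(s+u)$ and $g(s+u) - g(s)$. I would finish by analyzing the critical points of $\psi$ via $\psi'$ (reducing to the transcendental equation $u^2(1-u) = e^{-3}$) and verifying that $\psi$ attains its minimum at endpoints where it vanishes; the tightness of the constant $2$ should emerge naturally from this endpoint analysis.
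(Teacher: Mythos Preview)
Your approach differs from the paper's. The paper splits on whether $|s-t|\leq\tfrac12$: for $|s-t|\leq\tfrac12$ it invokes the chord-slope argument from Cover--Thomas (Theorem~17.3.3), which for a concave function shows that among all pairs $t<s$ with fixed gap $u=s-t$ the quantity $|g(s)-g(t)|$ is maximized when the chord is pushed to an endpoint, yielding $|g(s)-g(t)|\leq\max\{g(u),g(1-u)\}=g(u)$ (so on this range the factor $2$ is not even needed). For $|s-t|>\tfrac12$ the paper uses a crude triangle-inequality bound. Your subadditivity identity for the first one-sided bound and your monotonicity-in-$s$ reduction of the second one-sided bound to the single-variable inequality $g(1-u)\leq 2g(u)$ are both valid and arguably cleaner than the chord heuristic.

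The genuine gap is in your last step, and it cannot be closed: the inequality $\psi(u)=-2u\log u+(1-u)\log(1-u)\geq 0$ is \emph{false} for $u$ near $1$. With natural logarithms and $u=0.99$ one has $g(1-u)=g(0.01)\approx 0.0461$ but $2g(u)=2g(0.99)\approx 0.0199$, so $\psi(0.99)\approx -0.026<0$. In fact as $u\uparrow 1$ the ratio $g(1-u)/g(u)\sim -\log(1-u)\to\infty$, so no fixed constant can replace the $2$. This means the lemma as stated is actually false (take $t=0.01$, $s=1$), and your proposed critical-point verification would necessarily fail. The paper's own Case~2 argument has the same defect: the step ``$2(s-t)\leq 2(s-t)\log\frac{1}{s-t}$'' requires $s-t\leq 1/e$, contradicting the case assumption $s-t\geq\tfrac12$. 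What is correct, and what the paper's Case~1 argument does establish, is $|g(t)-g(s)|\leq g(|s-t|)$ whenever $|s-t|\leq\tfrac12$; that restricted version is what you should aim to prove.
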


\begin{proof}
The function $g(t)$ is concave, positive, and satisfies $g(0)=g(1)=0$.
Assume w.l.o.g. that $0\leq t<s\leq1$. The proof for the case $0\leq s-t\leq\frac{1}{2}$
was explained in \cite[Thm. 17.3.3]{cover2012elements}: The chord
of the function $g(t)$ from $t$ to $s$ has maximum absolute slope
either at the extremes -- either at $t=0$ or $t=1-(s-t)$. Then,
\begin{equation}
|g(t)-g(s)|\leq\max\left\{ g(s-t),g(1-(s-t))\right\} =g(s-t)
\end{equation}
where the last inequality is by the assumption $0\leq s-t\leq\frac{1}{2}$.
Otherwise, if $\frac{1}{2}\leq s-t\leq1$ then it must hold that $0\leq t\leq\frac{1}{2}\leq s\leq1$.
Trivially, 
\begin{equation}
\left|g(s)-g(t)\right|\leq\left|g(s)\right|+\left|g(t)\right|\leq1\leq2(s-t)\leq2(s-t)\log\frac{1}{(s-t)}=2g(s-t).
\end{equation}
\end{proof}
\begin{lem}[{Chernoff's bound for a Poisson random variables \cite[Thm. 5.4]{mitzenmacher2017probability}}]
\label{lem: Chernoff for Poisson rvs}For $X\sim\text{Poisson}(\lambda)$
it holds that
\begin{equation}
\P\left[X\geq\alpha\E[X]\right]\leq e^{-\lambda}\left(\frac{e}{\alpha}\right)^{\alpha\lambda}\leq\left(\frac{e}{\alpha}\right)^{\alpha\lambda}=e^{-\alpha\lambda\log(\alpha/e)}\leq e^{-\alpha\lambda}\label{eq: Poisson Chernoff general-1}
\end{equation}
for all $\alpha>3e\approx8.15$. 
\end{lem}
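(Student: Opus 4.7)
The plan is to establish the first inequality via the standard Chernoff/moment-generating-function technique for the Poisson distribution, and then obtain the chain of subsequent inequalities through elementary algebraic manipulations together with the condition $\alpha > 3e$.

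First, I would recall the moment generating function of $X\sim\text{Poisson}(\lambda)$: for any $t\in\mathbb{R}$,
\begin{equation}
\E\bigl[e^{tX}\bigr]=\sum_{k=0}^{\infty}e^{tk}\frac{\lambda^{k}e^{-\lambda}}{k!}=e^{-\lambda}\sum_{k=0}^{\infty}\frac{(\lambda e^{t})^{k}}{k!}=e^{\lambda(e^{t}-1)}.
\end{equation}
Then, for any $t>0$, Markov's inequality applied to $e^{tX}$ yields
\begin{equation}
\P\bigl[X\geq\alpha\E[X]\bigr]=\P\bigl[e^{tX}\geq e^{t\alpha\lambda}\bigr]\leq e^{-t\alpha\lambda}\cdot\E\bigl[e^{tX}\bigr]=\exp\bigl\{\lambda(e^{t}-1)-t\alpha\lambda\bigr\}.
\end{equation}

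Next, I would optimize the right-hand side over $t>0$. Differentiating the exponent $\lambda(e^{t}-1)-t\alpha\lambda$ with respect to $t$ and setting it to zero gives $\lambda e^{t}=\alpha\lambda$, hence $t^{*}=\log\alpha$. Since $\alpha>3e>1$, this choice is indeed positive and admissible. Substituting back produces
\begin{equation}
\P\bigl[X\geq\alpha\lambda\bigr]\leq\exp\bigl\{\lambda(\alpha-1)-\alpha\lambda\log\alpha\bigr\}=e^{-\lambda}\cdot e^{\alpha\lambda}\cdot\alpha^{-\alpha\lambda}=e^{-\lambda}\left(\frac{e}{\alpha}\right)^{\alpha\lambda},
\end{equation}
which is exactly the first inequality in the statement.

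The remaining chain of inequalities is purely algebraic. Since $\lambda\geq0$, we have $e^{-\lambda}\leq1$, giving the second inequality. The identity $\left(\tfrac{e}{\alpha}\right)^{\alpha\lambda}=\exp\bigl\{\alpha\lambda\log(e/\alpha)\bigr\}=e^{-\alpha\lambda\log(\alpha/e)}$ is immediate. For the final inequality $e^{-\alpha\lambda\log(\alpha/e)}\leq e^{-\alpha\lambda}$, it suffices to show that $\log(\alpha/e)\geq1$, equivalently $\alpha\geq e^{2}\approx7.389$. Since the hypothesis is the strictly stronger $\alpha>3e\approx8.155>e^{2}$, this holds, completing the proof. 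I expect no real obstacle here: the main substantive step is the MGF computation and the optimization $t^{*}=\log\alpha$, both of which are routine; the hypothesis $\alpha>3e$ is more than enough to absorb the final log-comparison, and in fact the weaker condition $\alpha\geq e^{2}$ would have sufficed for the ultimate bound $e^{-\alpha\lambda}$.
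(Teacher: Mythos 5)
Your proof is correct and is essentially the standard Chernoff/MGF derivation; the paper does not prove this lemma itself but cites it from Mitzenmacher--Upfal (Thm.\ 5.4), whose proof is exactly this argument, so there is nothing to compare against beyond noting agreement. Your closing observation is also accurate: the final bound only needs $\alpha\geq e^{2}$, and the hypothesis $\alpha>3e$ is a slightly stronger convenience threshold carried over from the cited source.
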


For the next two lemmas, we consider a function $f\colon\mathbb{N}\to(0,\infty)$,
and denote its derivative by
\begin{equation}
Df(x)\eqdef f(x+1)-f(x).
\end{equation}
\begin{lem}[Poisson concentration of Lipschitz functions \cite{bobkov1998modified,kontoyiannis2006measure}]
\label{lem: Poisson concentration}Let $W\sim\text{Poisson}(\lambda)$,
and assume that $f$ is $1$-Lipschitz, that is $|Df(w)|\leq1$ for
all $w\in\mathbb{N}_{+}$. Then, for any $t>0$
\begin{equation}
\P\left[f(W)-\E[f(W)]>t\right]\leq\exp\left[-\frac{t}{4}\log\left(1+\frac{t}{2\lambda}\right)\right].\label{eq: Poisson concentration}
\end{equation}
\end{lem}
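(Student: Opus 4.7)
The plan is to invoke the classical entropy method based on a modified logarithmic Sobolev inequality (LSI) for Poisson measure, in the spirit of Bobkov--Ledoux and Kontoyiannis--Madiman. Starting from a Poisson LSI of the form
\begin{equation}
\Ent(e^{\theta f(W)}) \leq \lambda\, \E\!\left[e^{\theta f(W)}\, \phi(\theta\, Df(W))\right], \qquad \phi(x) \eqdef e^{x} - x - 1,
\end{equation}
applied to the positive function $g(w) = e^{\theta f(w)}$ with parameter $\theta > 0$ (noting that $D\log g = \theta Df$), I would use that $\phi$ is convex, nonnegative, and satisfies $\phi(\theta) \geq \phi(-\theta)$ for $\theta \geq 0$ (since $e^\theta - e^{-\theta} > 2\theta$), so that the Lipschitz assumption $|Df| \leq 1$ gives the pointwise bound $\phi(\theta Df(W)) \leq \phi(\theta)$. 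Hence $\Ent(e^{\theta f(W)}) \leq \lambda\, \phi(\theta)\, \E[e^{\theta f(W)}]$.

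The second step is the Herbst argument. Letting $L(\theta) \eqdef \log \E[e^{\theta(f(W) - \E f(W))}]$ be the centred log-MGF, a direct computation shows $\Ent(e^{\theta f(W)}) = (\theta L'(\theta) - L(\theta))\, \E[e^{\theta f(W)}]$, which together with the previous display yields the differential inequality $(L(\theta)/\theta)' \leq \lambda\, \phi(\theta)/\theta^2$. Since $L(0) = L'(0) = 0$, we have $L(\theta)/\theta \to 0$ as $\theta \downarrow 0$, and integrating on $[0,\theta]$, combined with the elementary inequality $\phi(s) \leq \tfrac{1}{2}s^{2} e^{s}$ (for $s \geq 0$, verified term-by-term on the Taylor expansions), gives
\begin{equation}
L(\theta) \;\leq\; \lambda\, \theta \int_{0}^{\theta} \frac{\phi(s)}{s^{2}}\, ds \;\leq\; \frac{\lambda\, \theta\, (e^{\theta} - 1)}{2}.
\end{equation}

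A Chernoff/Markov step then yields $\P[f(W) - \E f(W) > t] \leq \exp(-\theta t + L(\theta))$ for every $\theta > 0$. Plugging in the (suboptimal but explicit) choice $\theta = \log(1 + t/\lambda)$, so that $e^{\theta} - 1 = t/\lambda$, produces the Bennett-flavoured bound $\exp(-\tfrac{t}{2}\log(1 + t/\lambda))$, which is in turn dominated by the stated right-hand side because $2\log(1+t/\lambda) \geq \log(1+t/(2\lambda))$ for all $t \geq 0$ (equivalent to $(1+t/\lambda)^{2} \geq 1 + t/(2\lambda)$, i.e., $t^{2}/\lambda^{2} + 3t/(2\lambda) \geq 0$).

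The main delicate points will be the asymmetric pointwise bound $\phi(\theta Df) \leq \phi(\theta)$ (which works only in the direction of positive $\theta$, matching an upper tail bound), and the deliberate choice of a suboptimal $\theta$ in the final step: the globally optimal Chernoff optimization yields the sharper Bennett bound $\exp[-\lambda h(t/\lambda)]$ with $h(x) = (1+x)\log(1+x) - x$, and the weaker but cleaner closed form stated in the lemma is obtained only after collapsing Bennett into the simpler expression via the above elementary majorization, at the cost of the factor-of-two constants appearing in $t/4$ and $t/(2\lambda)$.
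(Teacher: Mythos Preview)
The paper does not prove this lemma; it is stated as an auxiliary result taken directly from the cited references. Your overall plan --- a modified Poisson log-Sobolev inequality followed by the Herbst argument and an explicit Chernoff step --- is exactly the method of those references, so the strategy is correct.

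There is, however, a genuine error in the LSI you start from. The inequality
\[
\Ent\bigl(e^{\theta f(W)}\bigr) \leq \lambda\,\E\bigl[e^{\theta f(W)}\,\phi(\theta\, Df(W))\bigr],\qquad \phi(x)=e^{x}-x-1,
\]
is \emph{false} for the Poisson law. Take $f(k)=k$: a direct computation gives $\Ent(e^{\theta W})=\lambda M(\theta)\bigl[(\theta-1)e^{\theta}+1\bigr]$ (with $M(\theta)=\E[e^{\theta W}]$), whereas your right-hand side equals $\lambda M(\theta)(e^{\theta}-\theta-1)$, and the difference $\lambda M(\theta)\bigl[(\theta-2)e^{\theta}+\theta+2\bigr]$ is strictly positive for every $\theta>0$. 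The actual Bobkov--Ledoux modified LSI is $\Ent(g)\leq \lambda\,\E[Dg\cdot D\log g]$, which for $g=e^{\theta f}$ produces the factor $\psi(x)\eqdef x(e^{x}-1)$ rather than $\phi(x)$. With this correction the remaining steps go through essentially unchanged: $\psi$ attains its maximum on $[-\theta,\theta]$ at $\theta$, so $|Df|\leq 1$ yields $\Ent(e^{\theta f})\leq \lambda\,\psi(\theta)\,\E[e^{\theta f}]$; Herbst gives $L(\theta)/\theta\leq \lambda\int_{0}^{\theta}(e^{s}-1)s^{-1}\,\d s\leq \lambda(e^{\theta}-1)$ (via $e^{s}-1\leq se^{s}$); and the choice $\theta=\log\bigl(1+t/(2\lambda)\bigr)$ delivers $\exp\bigl[-\tfrac{t}{2}\log(1+\tfrac{t}{2\lambda})\bigr]$, which is stronger than the stated bound.
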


\begin{lem}[{Poisson LSI \cite[Thm. 6.17]{boucheron2013concentration}}]
\label{lem: Poisson LSI}Let $W\sim\text{Poisson}(\lambda)$. Then,
\begin{equation}
\Ent[f(W)]\leq\lambda\cdot\E\left[\frac{\left|Df(W)\right|^{2}}{f(W)}\right].
\end{equation}
\end{lem}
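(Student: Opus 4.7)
The plan is to prove the modified Poisson log-Sobolev inequality via a Bernoulli tensorization argument, i.e., by ``Poissonizing'' the two-point log-Sobolev inequality of Lemma~\ref{lem: LSI for asymmetric Bernoulli}. First, I would write $W \sim \text{Poisson}(\lambda)$ as the distributional limit of $W_n = \sum_{i=1}^n X_i^{(n)}$, where $\{X_i^{(n)}\}_{i \in [n]} \sim \text{Bernoulli}(\lambda/n)$ are i.i.d. I would then prove the inequality for $W_n$ with an error term that vanishes as $n \to \infty$.

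Next, I would invoke the tensorization (subadditivity) property of the entropy functional for product measures: for any nonnegative $g$,
\begin{equation}
\Ent[g(X_1^{(n)}, \ldots, X_n^{(n)})] \leq \sum_{i=1}^n \E\left[\Ent_i[g]\right],
\end{equation}
where $\Ent_i[g]$ denotes the functional entropy with respect to $X_i^{(n)}$ conditional on the other coordinates. Applied to the specific choice $g(x_1, \ldots, x_n) = f(x_1 + \cdots + x_n)$, each inner term reduces to a two-point entropy of a Bernoulli($p$) variable with $p = \lambda/n$, of the form $\Ent[f(S + X) \mid S]$ with $S \eqdef \sum_{j \neq i} X_j^{(n)}$.

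The third step is the two-point estimate, obtained by a Taylor expansion in $p$. Writing out $\Ent[f(S+X) \mid S]$ explicitly and expanding to second order, the leading term is
\begin{equation}
p\cdot\left[(f(S+1) - f(S))\log\frac{f(S+1)}{f(S)} - (f(S+1) - f(S))\right] + O(p^2),
\end{equation}
and the elementary inequality $(b-a)\log(b/a) - (b-a) \leq (b-a)^2/a$ for $a,b > 0$ (a rearrangement of $\log t \leq t - 1$) yields $\Ent_i[g] \leq p\cdot (Df(S))^2/f(S) + O(p^2)$. Summing over $i$ gives
\begin{equation}
\Ent[f(W_n)] \leq np \cdot \E\left[\frac{(Df(W_n - X_i^{(n)}))^2}{f(W_n - X_i^{(n)})}\right] + n\cdot O(p^2),
\end{equation}
so that since $np = \lambda$ is held fixed while $n p^2 = \lambda^2/n \to 0$, the correction vanishes asymptotically.

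Finally, passing $n \to \infty$ uses the weak convergence $W_n \Rightarrow W$ together with a uniform integrability argument to interchange limit with both $\Ent[\cdot]$ on the left and $\E[\cdot]$ on the right; this delivers $\Ent[f(W)] \leq \lambda\,\E[(Df(W))^2/f(W)]$. The hard part is precisely this limit interchange: the integrands $f(W_n)\log f(W_n)$ and $(Df(W_n))^2/f(W_n)$ are unbounded, and controlling them requires either truncating to bounded $f$ (and extending via monotone convergence) or invoking explicit Poisson tail bounds together with growth hypotheses on $f$. This is the standard technical obstacle when transferring log-Sobolev inequalities from Bernoulli products to their Poisson limits, and is the only step that is not essentially algebraic.
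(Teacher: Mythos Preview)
The paper does not actually prove this lemma: it is stated in Appendix~\ref{sec:Auxiliary-results} as a cited result from \cite[Thm.~6.17]{boucheron2013concentration}, with no argument given. So there is no ``paper's own proof'' to compare against.

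Your proposal is essentially the standard proof of the modified Poisson LSI (as in Bobkov--Ledoux and in Boucheron--Lugosi--Massart themselves): approximate $\text{Poisson}(\lambda)$ by $\text{Binomial}(n,\lambda/n)$, tensorize the entropy over the $n$ independent Bernoulli coordinates, apply a two-point inequality on each coordinate, and pass to the limit. The strategy is correct.

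One computational slip: the leading term of the two-point entropy expansion is
\[
p\left[f(S+1)\log\frac{f(S+1)}{f(S)} - \bigl(f(S+1)-f(S)\bigr)\right] + O(p^{2}),
\]
i.e.\ $p[b\log(b/a)-(b-a)]$ with $a=f(S)$, $b=f(S+1)$, not $p[(b-a)\log(b/a)-(b-a)]$ as you wrote. The inequality you need is then $b\log(b/a)-(b-a)\leq (b-a)^{2}/a$, equivalently $t\log t - t + 1 \leq (t-1)^{2}$ for $t>0$, which does hold (and is what BLM use). With this correction the argument goes through; the limit-interchange caveat you flag is real and is handled in the cited reference by first establishing the inequality for bounded $f$ and extending by monotone approximation.
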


\begin{lem}
\label{lem: x over log x}If $x\geq y\log(y)$ then $\frac{x}{\log x}\geq y$. 
\end{lem}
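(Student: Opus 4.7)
The plan is to rewrite the conclusion $\frac{x}{\log x}\geq y$ in the equivalent form $x\geq y\log x$, so that the task becomes comparing $\log x$ with $\log y$. Given the hypothesis $x\geq y\log y$, the proof reduces to controlling the extra factor $\log x / \log y$.

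First, I would verify that $f(t)\eqdef t/\log t$ is monotonically increasing on $(e,\infty)$ by computing $f'(t)=(\log t-1)/(\log t)^2>0$. By monotonicity, it suffices to establish the conclusion at the extremal value $x=y\log y$ (assuming we are in the regime where $y$ is large enough that $y\log y>e$), at which point
\[
f(y\log y)=\frac{y\log y}{\log y+\log\log y}.
\]
The inequality $f(y\log y)\geq y$ therefore reduces to the observation that the iterated-log correction $\log\log y$ is negligible relative to $\log y$, which reflects the slow growth of $\log$.

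Second, to make this quantitative, I would take logarithms of the hypothesis, yielding $\log x\geq \log y+\log\log y$, and then exploit the elementary inequality $\log t\leq t-1$ (applied with $t=\log y$) to bound $\log\log y$ against $\log y$. This produces a controlled relationship between $\log x$ and $\log y$ which, when combined with the hypothesis $x\geq y\log y$, gives $x\geq y\log x$ after rearrangement. I expect the main obstacle to be the tightness at the boundary case $x=y\log y$ for moderate $y$: the proof will likely require a slight case split (trivially handling $y\leq e$, where the hypothesis already forces $\log x\leq \log y+\log\log y\leq \log y$, and handling $y>e$ through the slow-growth estimate above). A cleaner alternative would be to argue by contradiction, assuming $x<y\log x$ and combining with $x\geq y\log y$ to derive $\log x>\log y$, then iterating to push the contradiction through.
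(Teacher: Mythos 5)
Your reduction to the boundary case $x=y\log y$ via monotonicity of $t\mapsto t/\log t$ on $(e,\infty)$ is fine, but the step where you conclude $f(y\log y)\geq y$ is precisely where the argument collapses. Writing it out,
\[
f(y\log y)=\frac{y\log y}{\log y+\log\log y}=\frac{y}{1+\frac{\log\log y}{\log y}},
\]
so $f(y\log y)\geq y$ is \emph{equivalent} to $\log\log y\leq 0$, i.e.\ $y\leq e$. The fact that $\log\log y$ is ``negligible relative to $\log y$'' does not rescue this: it enters the denominator with a positive sign, so for every $y>e$ the ratio falls strictly \emph{below} $y$. Concretely, take $y=e^{e}$ and $x=y\log y=e^{e+1}$: then $x/\log x=e^{e+1}/(e+1)\approx 11.1$, while $y\approx 15.2$. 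So the statement is false in the regime $y>e$ with $x$ near $y\log y$, and neither your Jensen-free boundary analysis nor your second route via $\log t\leq t-1$ can be completed, because there is nothing true to prove.

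You are in good company: the paper's own proof slips at the same spot, asserting the ``identity'' $\frac{y\log y}{\log(y\log y)}=y\bigl[1+\frac{\log y}{\log y+\log\log y}\bigr]$, which is not an identity (the left-hand side equals $y/(1+\tfrac{\log\log y}{\log y})\leq y$ for $y\geq e$). The lemma can be repaired by strengthening the hypothesis, e.g.\ to $x\geq 2y\log y$ with $y\geq e$: at the boundary, $\log x=\log 2+\log y+\log\log y\leq 2\log y$ (apply $\log u\leq u-1$ with $u=\log y\geq 1$), hence $x/\log x\geq(2y\log y)/(2\log y)=y$, and monotonicity of $t/\log t$ extends this to all larger $x$. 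This corrected version still serves the only place the lemma is invoked (the qualifying condition on $\eta$ in the proof of Theorem~\ref{thm: confidence bound for random entropy}), at the cost of a factor of $2$ in the constant.
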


\begin{proof}
As $y\geq e$ then both $\log(y)=1>0$ and $\log(\log(y))\geq0$.
Choosing $x=y\log(y)$ it holds that 
\begin{equation}
\frac{x}{\log x}=\frac{y\log(y)}{\log(y\log(y))}=y\cdot\left[1+\frac{\log(y)}{\log(y)+\log(\log(y))}\right]\geq y.
\end{equation}
\end{proof}

\paragraph*{Hypergeometric distribution}

We denote a random variable distributed according to a hypergeometric
distribution as $Y\sim\text{Hypergeometric}(L,M,\ell)$ where $L$
is the population size, $M$ is the number of success states in the
population, and $\ell$ is the number of draws from the population.
The mean of the distribution is $\E[Y]=\ell\cdot\frac{M}{L}$.
\begin{lem}[Serfling's inequality (simplified) \cite{serfling1974probability},
see also \cite{greene2017exponential}]
\label{lem: Serfling's inequality} Let $Y\sim\text{Hypergeometric}(L,K,\ell)$.
Then, for any $\epsilon>0$ and $1\leq\ell\leq L$
\begin{equation}
\P\left[Y-\E[Y]\geq\epsilon\right]\leq\exp\left[\frac{-2\epsilon^{2}}{\ell(1-\frac{\ell-1}{L})}\right]\leq\exp\left[\frac{-2\epsilon^{2}}{\ell}\right].\label{eq: Serflings inequality}
\end{equation}
\end{lem}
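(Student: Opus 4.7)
The second inequality is immediate from $1 - (\ell-1)/L \leq 1$, so the task reduces to proving the sharper Serfling bound with the finite-population correction. My plan is the standard Cramér-Chernoff method combined with a Doob martingale argument that exploits the negative dependence intrinsic to sampling without replacement. First I would write $Y = \sum_{i=1}^\ell Y_i$, where $Y_i \in \{0,1\}$ indicates whether the $i$-th draw (in a fixed ordering of the $\ell$ draws) is a success; the $Y_i$ are exchangeable with $\E[Y_i] = p \eqdef K/L$. By Markov applied to the exponential, $\P[Y - \E Y \geq \epsilon] \leq e^{-s\epsilon} \E\bigl[\exp\bigl(s(Y - \E Y)\bigr)\bigr]$ for any $s > 0$, so the task is to bound the MGF.

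Next I would introduce the Doob martingale $M_k \eqdef \E[Y \mid Y_1, \ldots, Y_k]$, so that $M_0 = \ell p$, $M_\ell = Y$, and $Y - \E Y = \sum_{k=1}^\ell D_k$ with $D_k \eqdef M_k - M_{k-1}$. The key structural fact is that, conditioned on $Y_1, \ldots, Y_k$, the random variable $\sum_{i>k} Y_i$ is itself hypergeometric over the reduced population of size $L - k$ with $K - \sum_{j\leq k} Y_j$ remaining successes. This yields explicit formulas for $M_k$ and hence for $D_k$, and in particular allows $D_k$ to be written as $(1 - k/L) \cdot (Y_k - \E[Y_k \mid \mathcal{F}_{k-1}])$ up to simple rearrangement. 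The bounded-range of this conditional increment is the ingredient that introduces the finite-population factor.

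The heart of the proof is a conditional MGF estimate $\E\bigl[\exp(s D_k) \mid \mathcal{F}_{k-1}\bigr] \leq \exp\bigl(\tfrac{s^2}{8} \cdot c_k^2\bigr)$, obtained via a Hoeffding-type lemma applied to the conditional distribution of $D_k$, with $c_k$ a range bound that decays with $k$ so that $\sum_k c_k^2 \leq \ell\bigl(1 - (\ell-1)/L\bigr)$. Iterating this inequality from $k=\ell$ down to $k=1$ by the tower property gives $\E[\exp(s(Y - \E Y))] \leq \exp\bigl(\tfrac{s^2}{8} \ell(1-(\ell-1)/L)\bigr)$, and optimizing over $s$ yields the stated sub-Gaussian tail.

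The main obstacle is step three: a crude range bound $|D_k| \leq 1$ plus Azuma-Hoeffding recovers only the weaker $\exp(-2\epsilon^2/\ell)$ bound and loses the finite-population correction entirely. Getting the correction requires carefully tracking how the effective range of each martingale difference shrinks as the residual urn shrinks, and aggregating via the Efron-Stein-like identity $\sum_k c_k^2 \leq \ell(1 - (\ell-1)/L)$ rather than the trivial telescoping. An alternative route, originally due to Hoeffding (1963), is to prove $\E[\phi(Y)] \leq \E[\phi(X)]$ for every convex $\phi$ (where $X \sim \text{Binomial}(\ell, p)$) by representing $Y$ as an average of permutations of $X$; applied to $\phi = e^{s\cdot}$ this gives the weaker bound cheaply, but the sharp Serfling constant still requires the martingale refinement outlined above.
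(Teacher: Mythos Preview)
The paper does not supply its own proof of this lemma; it appears in the auxiliary-results appendix purely as a citation to Serfling (1974) and Greene (2017), so there is nothing in the paper to compare your argument against. Your Cram\'er--Chernoff plus Doob-martingale outline is a correct route and is essentially how the bound is established in the literature. One small correction: the martingale increment actually works out to
\[
D_k=\frac{L-\ell}{L-k}\bigl(Y_k-\E[Y_k\mid\mathcal{F}_{k-1}]\bigr),
\]
not with coefficient $1-k/L$. With the correct ranges $c_k=\tfrac{L-\ell}{L-k}$ one does have $\sum_{k=1}^{\ell}c_k^{2}\le \ell\bigl(1-(\ell-1)/L\bigr)$ (this is provable by a short induction on $\ell$, with equality at $\ell=1$), and then the rest of your argument---conditional Hoeffding lemma, tower property, optimisation over $s$---delivers exactly the stated sub-Gaussian tail.
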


\begin{lem}[{The log sum inequality \cite[Thm. 2.7.1]{cover2012elements}}]
\label{lem: log-sum inequality}For nonnegative numbers $\{a_{i}\}_{i\in[n]}$
and $\{b_{i}\}_{i\in[n]}$ it holds that 
\begin{equation}
\sum_{i=1}^{n}a_{i}\log\frac{\sum_{i=1}^{n}a_{i}}{\sum_{i=1}^{n}b_{i}}\leq\sum_{i=1}^{n}a_{i}\log\left(\frac{a_{i}}{b_{i}}\right).\label{eq: log-sum inequality}
\end{equation}
\end{lem}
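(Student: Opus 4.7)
The plan is to derive this inequality from Jensen's inequality applied to the convex function $f(t) \eqdef t \log t$, using natural weights $\lambda_i = b_i / \sum_j b_j$ and ratios $p_i = a_i / b_i$. First I would dispose of the degenerate cases: any summand with $b_i = 0$ and $a_i > 0$ makes the right-hand side equal $+\infty$ under the standard convention $c \log(c/0) = +\infty$ for $c > 0$, so the inequality is trivial; any summand with $a_i = b_i = 0$ contributes $0$ to both sides under $0\log 0 \eqdef 0$. So I may assume $b_i > 0$ for every $i$, and at least one $a_i > 0$ (otherwise both sides vanish).

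Set $b \eqdef \sum_j b_j > 0$ and $a \eqdef \sum_i a_i > 0$, and define $\lambda_i \eqdef b_i / b$ (so $\lambda_i \geq 0$ and $\sum_i \lambda_i = 1$) and $p_i \eqdef a_i / b_i \in [0, \infty)$. The function $f(t) = t\log t$ is convex on $(0,\infty)$ since $f''(t) = 1/t > 0$, and extends continuously to $f(0) \eqdef 0$. Jensen's inequality then gives
\begin{equation}
\sum_{i=1}^{n} \lambda_i \, f(p_i) \;\geq\; f\!\left(\sum_{i=1}^{n} \lambda_i \, p_i\right).
\end{equation}

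Unpacking, the left-hand side equals $\tfrac{1}{b}\sum_i a_i \log(a_i/b_i)$, while $\sum_i \lambda_i p_i = a/b$, so the right-hand side equals $\tfrac{a}{b}\log(a/b) = \tfrac{1}{b}\bigl(\sum_i a_i\bigr)\log\!\bigl(\tfrac{\sum_i a_i}{\sum_j b_j}\bigr)$. Multiplying through by $b > 0$ yields the claimed log-sum inequality. There is no real obstacle here: the only content is the convexity of $t\log t$ together with Jensen. The mild care needed is to identify the asymmetric weighting $\lambda_i = b_i/b$ as the one that recasts the one-sided Jensen bound into the desired form, and to verify that equality holds precisely when all ratios $a_i/b_i$ coincide, which matches the classical equality case.
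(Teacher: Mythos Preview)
Your proof is correct and is exactly the standard argument via Jensen's inequality for the convex function $t\log t$ with weights $b_i/\sum_j b_j$. The paper does not supply its own proof of this lemma; it simply cites \cite[Thm.~2.7.1]{cover2012elements}, where precisely this Jensen-based derivation appears.
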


    \else
    \fi

\end{document}